\newtheorem{theorem}{Theorem}[section]
\newtheorem{lemma}[theorem]{Lemma}
\newtheorem{claim}[theorem]{Claim}
\newtheorem{corollary}[theorem]{Corollary}
\theoremstyle{definition}
\newtheorem{aspt}[theorem]{Assumption}
\newcommand{\footremember}[2]{%
    \footnote{#2}
    \newcounter{#1}
    \setcounter{#1}{\value{footnote}}%
}
\newcommand{\nrev}{n_{rev}}
\newcommand{\crev}{c_{rev}}
\newcommand{\tldt}{\tilde{t}}
\newcommand{\tldtau}{\tilde{\tau}}
\newcommand{\E}{\mathbb{E}}
\newcommand{\cP}{\mathbb{P}}
\newcommand{\nold}{n_1}
\title{When Should a Principal Delegate to an Agent in Selection Processes?}
\author{Benjamin Fish\footremember{alley}{University of Michigan. Email: benfish@umich.edu} 
\and Diptangshu Sen\footremember{trailer}{Georgia Institute of Technology. Email: dsen30@gatech.edu}
\and Juba Ziani\footremember{somethingelse}{Georgia Institute of Technology. Email: jziani3@gatech.edu}
}
\date{\today}
\begin{document}

\maketitle

\begin{abstract}
Decision-makers in high-stakes selection processes often face a fundamental choice: whether to make decisions themselves or to delegate authority to another entity whose incentives may only be partially aligned with their own. Such delegation arises naturally in settings like graduate admissions, hiring, or promotion, where a principal (e.g. a professor or worker) either reviews applicants personally or decisions are delegated to an agent (e.g. a committee or boss) that evaluates applicants efficiently, but according to a potentially misaligned objective. 

We study this trade-off in a stylized selection model with noisy signals.  The principal incurs a cost for selecting applicants, but can evaluate applicants based on their fit with a project, team, workplace, etc. In contrast, the agent evaluates applicants solely on the basis of a signal that correlates with the principal's metric, but this comes at no cost to the principal. 
Our goal is to characterize when delegation is beneficial versus when decision-making should remain with the principal. We compare these regimes along three dimensions: (i) the principal’s utility; (ii) the quality of the selected applicants according to the principal's metric; and (iii) the fairness of selection outcomes under disparate signal qualities.
\end{abstract}

\section{Introduction}\label{sec:intro}

With the advent of automated decision making, software as a service, and agentic AI, there are many potential benefits to delegating to another decision maker.  Third parties can offer AI and other tools that promise cost-efficient and accurate decision making at scale, such as for auto-bidding in online auctions~\citep{aggarwal2024auto}, self-driving cars~\citep{autonomous}, algorithmic pricing ~\citep{pricing}, and resume-screening agents in hiring~\citep{aihiring}.  This includes tools for selection processes, where the goal is to choose from a pool of applicants, including in hiring, admissions, banking, or creating data sets.

Importantly, many of these decisions are made under time, cost, expertise, and informational constraints and inefficiencies, providing incentives for a decision maker to delegate to a third-party agent or AI tool. When the decision maker, whom we will call the \emph{principal}, delegates a selection process to another, the \emph{agent}, the agent can specialize in that process and take advantage of more data, larger scale, and higher efficiency, while the principal reaps the benefits of the selected applicants.  The agent may seek control over decision making anyway, for the sake of centralization \& standardization, enforcing constraints like fairness considerations, or ensuring that the decisions benefit themselves \citep{brynjolfsson2023centralize,kapor2024aftermarket,LUPIA201558}. However, delegation comes with its own set of challenges. The principal and the agent may not be aligned in terms of objectives or preferences \citep{alur2024human,CANDRIAN2022107308,LUPIA201558}, leading to lower quality outcomes for the principal \citep{sliwka2001delegation}, a loss of agency \citep{LUPIA201558}, and an opportunity for the agent to take advantage of their control over the process \citep{JENSEN1976305}. Delegation to AI hiring tools, for example, may ignore cultural fit, reduce interaction between candidates and hiring managers, and fail to properly evaluate candidate's attributes \citep{aizenberg2025examining,kim2025ai}.  Brokers who do not select suitable investments for their clients risk violating legal requirements \citep{anderson1992suitability}. Delegation may also lead to issues when it comes to responsible and fair decision-making: for example, Amazon's AI hiring tool was found to overwhelmingly favor male applicants to female applicants purely because it had been trained on historical data of successful applicant resumes which were predominantly male~\citep{amazon}, failing to correct for biases in its own data. 
 
This naturally motivates the following question: \textbf{\textit{what are the consequences of a principal delegating decisions to an agent on both quality and fairness of decisions?}} To address this question, we study delegation specifically in the context of \emph{selection processes}. Selection problems arise frequently in hiring and admissions and play a key role in organizational decision-making. We consider a stylized principal–agent model of selecting applicants whose quality can only be evaluated using noisy signals.  Applicants are characterized by two dimensions: (i) an ability score, reflecting their underlying skill or level of preparation, and (ii) a fit score, capturing how well they align with a specific team, project, or advisor. The principal (e.g., a team in hiring settings or a professor in graduate admissions) incurs a cost when selecting applicants but can evaluate them along both dimensions. The agent (e.g., a central admissions committee or an automated decision tool), by contrast, evaluates applicants solely on ability, but at no direct cost to the principal. 

\paragraph{Summary of Contributions.} Our main contributions are as follows:
\vspace{-1mm}
\begin{enumerate}
    \item In Section~\ref{sec:model}, we introduce our stylized principal-agent model where a principal (they/them) must decide which applicants to select from a pool of applicants, but can delegate the decision to an agent (it).
    \item We provide theoretical characterizations of the utility achieved \emph{by the principal}, both when they make decisions themselves (Section~\ref{sec:decentralization}) and when they delegate to the agent (Section~\ref{sec:centralization}). 
    \item We expand our analysis to selecting applicants from \emph{multiple} populations. In particular, in our model, the populations are \emph{intrinsically} identical; they differ \emph{only} in the accuracy of the noisy signal observed about each population. We consider two models of such disparity: i) one where applicants in the disadvantaged groups have a \emph{negatively biased} signal of their quality and ii) one where they have a \emph{noisier} signal, compared to the advantaged group. We study the fairness implications of signal disparity in the context of demographic parity, under the the delegated model in Section~\ref{sec:centralization} and the non-delegated model in Section~\ref{sec:decentralization}.
    \item Finally, in Section~\ref{sec:comparison}, we synthesize our results and provide insights on when delegation is beneficial, comparing the outcomes of both decisions across different metrics: i) quality of selected applicants, ii) the principal's utility (including both quality \emph{and} selection costs), and iii) fairness defined as demographic parity. We make several interesting (and counter-intuitive) observations: there may be non-monotonicities in the principal's preferences for delegating vs not delegating in terms of how much they weight ``ability'' vs ``fit''. 
    Further, what is preferable for fairness heavily depends on whether signal disparities come from i) systematic under-estimation of one population vs ii) disparate informativeness (e.g., variance) of signals across populations. 
\end{enumerate}

\paragraph{Related Work.} Our work is closely related to an extensive body of research in economics on principal-agent problems~\citep{ross1973economic}. In traditional principal-agent problems, a principal and an agent usually have asymmetric information access and misaligned interests; the goal for the principal is then to design a mechanism so as to incentivize the strategic agent to act in the principal's interest. While our work has similarities with this framework (misaligned objectives between principal and agent and information asymmetry---our agent cannot observe ``fit''), there are also major differences: in our setting, the principal cannot influence the decisions of the agent,  
and the principal only gets to decide whether it is beneficial for them to delegate or not. For a detailed review on different aspects of generalized principal-agent problems like contract design, information design, learning etc., we refer readers to~\citep{dutting2024algorithmic,dughmi2017algorithmic,lin2024generalized}.  

We focus on selection problems, with motivating examples arising in hiring and admissions. Designing efficient and high quality hiring processes has been the focus of long-standing research. Prominent directions include analysis of popular heuristics like hiring `above the mean' or `above the median'~\citep{broder2010hiring,helmi2013analysis}, algorithmic hiring under uncertainty~\citep{purohit2019hiring,li2025hiring}, screening under sequential or pipeline settings~\citep{cohen2023sequential,epstein2024selection}, to name a few. Our work contributes to this broad area, and specifically studies the problem of delegation in the context of hiring and selection problems where the principal has the option to outsource the selection process entirely to the agent. 
 
With the recent proliferation of AI tools that are capable of sophisticated decision-making, this type of outsourcing or \emph{delegation} is becoming increasingly common. In turn, recent works have focused on whether decision tasks in high-stakes settings should be assigned to `algorithmic delegates' that can act reliably without human oversight~\citep{milewski1997delegating,lubars2019ask}, and more broadly on the design of synergistic systems with both human and AI collaborators~\citep{bansal2019beyond,bansal2021most,lai2022human,de2024towards,donahue2024two}. Recent work has also looked at the task of selecting the most optimal delegate among many~\citep{greenwood2025designing,raghavan2024competition}. Our focus is slightly different, mostly centering on delegation in \emph{selection processes}.

Finally, our work has implications for responsible AI and fairness. Since the origins of the theory of statistical discrimination \citep{arrow1971theory,phelps1972statistical}, researchers have contributed across multiple dimensions of fairness, including but not limited to the  challenges of algorithmic fairness \citep{kleinberg2018algorithmic,green2022escaping,corbett2017algorithmic,kleinberg2018inherent}, fairness auditing \citep{kallus2022assessing}, fairness in ML\citep{hardt2016equality,karimi2021enhancing,mary2019fairness}, fairness for better societal outcomes\citep{mouzannar2019fair,pitoura2022fairness,zehlike2020matching,bower2017fair} etc. Fairness has also been studied extensively in the context of ranking and selection problems~\citep{zehlike2017fa,cohen2019efficient,blum2022multi,celis2016fair,celis2024centralized,kleinberg2018selection,emelianov2020fair,garg2021standardized}. Most of these works either look at the problem of fair selection in the presence of different forms of bias or consider the fairness implications of various selection strategies in the context of single-stage or multi-stage/sequential selection. On the contrary, we look at a different problem: we explore whether organizational decisions like who is conducting the selection process (principal or agent) can also have major fairness implications.

\section{Model}\label{sec:model}

This section provides our model: the applicants, the principal, and the two different types of selection processes (one where the principal makes all decisions and the other where the principal delegates to the agent). We include an extension to a multi-group setting (with an ``advantaged'' and a``disadvantaged'' group) to understand the fairness implications of selection processes with and without delegation. A key feature of our model is that applicants can only be evaluated using noisy or imperfect signals. Note that \textit{we take the perspective of the principal, when we talk about utilities or quality of selected applicants}. However, \textit{our fairness discussion is from a group perspective\footnote{There can be many fairness perspectives like individual fairness, group fairness, procedural fairness etc.}~\cite{dwork2012fairness}.} 

\subsection{Applicant Characteristics \& Principal Incentives}

\paragraph{Applicants} An applicant is described by a set of attributes $(s, \tilde{s}, f, \tilde{f})$. The tuple ($s, f$) indicates the applicant's \emph{unobservable} private type. ($\tilde s, \tilde f$) represents noisy signals about $s$ and $f$ respectively---while $\tilde s$ is observable to everyone, $\tilde f$ can only be observed by the principal after careful evaluation (which may incur a cost). For ease of exposition, we can think of $s$ as the applicant's \emph{true ability}---i.e., how prepared he/she is for the job or program. This is unobservable but, when applying, the applicant provides a noisy estimate of their ability in the form of test scores or college transcripts ($\tilde s$). Similarly, we can think of $f$ as a \textit{fit score} $f$---how well-matched they are for the job or program. While $f$ again cannot be observed exactly, the principal has the expertise to determine a noisy estimate $\tilde f$ by reviewing the application.

\paragraph{Principal incentives} The principal is interested in both the applicant's type $s$ and their fit $f$. So, irrespective of whether they are delegating to the agent or making decisions on their own, the utility of any given selected applicant to the principal is given by:
\[
t \triangleq \alpha f + (1-\alpha) s,
\]
where $\alpha$ denotes how much the principal cares about fit versus type. We call $t$ the applicant's \emph{quality according to the principal's metric} or often for convenience, simply \textit{quality}.

\paragraph{Key Assumptions.} Throughout this work, we make the following assumptions about the distribution of applicants:

\begin{aspt}[i.i.d. students]\label{aspt:Gaussian}
Applicant types $(s,f)$ are drawn i.i.d. $f \sim \mathcal{N}(0,\sigma_f)$ and $s \sim \mathcal{N}(0,\sigma_s)$.
\end{aspt}

\begin{aspt}[Gaussian estimates for fit and scores]\label{aspt:gauss_noise}
We have that $\tilde{f} = f + \epsilon_f$ and $\tilde{s} = s + \epsilon_s$ where $\epsilon_f$ and $\epsilon_s$ are mean-zero Gaussian noise. Further, $f,~s,~\epsilon_f,~\epsilon_s$ are mutually independent. 
\end{aspt}
Thus, the noisy signals $\tilde f$ and $\tilde s$ are also mean-zero Gaussian random variables. We define $\sigma_{\tilde f}$ and $\sigma_{\tilde s}$ as the variances of $\tilde f$ and $\tilde s$ respectively. 
Gaussian assumptions are relatively standard when it comes to modeling population distributions. We refer to~\cite{kannan2019downstream,garg2020dropping,liu2021test} for a few examples of works with similar assumptions. Based on the above assumptions, $t \sim \mathcal{N}(0, \sigma_{t})$ with $\sigma_{t} = \sqrt{\alpha^2 \sigma_f^2 + (1-\alpha)^2 \sigma_s^2}$, and similarly $\tilde t \sim \mathcal{N}(0, \sigma_{\tilde t})$ where $\sigma_{\tldt}$ depends on $\sigma_{\tilde f}$ and $\sigma_{\tilde s}$. We can interpret $\sigma_t$ as the extent of diversity in applicant quality in the population and $\sigma_{\tldt}$ would be its noisy counterpart.

\subsection{Types of Selection Processes}
We consider two types of selection processes in our model: \textit{one where the principal delegates to the agent} and \textit{the other where the principal makes all decisions themselves}. For example, in the context of graduate admissions, the first type of selection involves all admission decisions being made by a central admissions committee without the direct involvement of the professor (therefore the professor does not incur any costs, but their utility depends on the average quality of students hired by the committee). In the latter type of selection, the professor themselves are tasked with the responsibility of reviewing applications (incurring a cost for every application reviewed) and deciding which students to hire. We now elaborate on each type of selection process in the general setting.

\paragraph{Selection Process Delegated to Agent.} In this setting, the agent makes decisions unilaterally on which applicants to select. The agent does not see the fit score $\tilde{f}$ and bases its evaluation entirely on the perceived preparedness of the applicant, $\tilde{s}$. The decision rule is straightforward: \textit{admit applicants with $\tilde{s} \geq \tau_1$}, where $\tau_1$ is a pre-determined threshold. In the context of graduate admissions, this could be, for example, a GPA or GRE score requirement for admissions.

The principal's overall \emph{ex-ante} expected utility when $k$ applicants are hired by the agent is then given by: 
\begin{align}\label{eq:cent_util} 
        U_{dg}(\tau_1) = k \cdot \mathbb{E}\left[t~|~\tilde s \geq \tau_1 \right]. 
\end{align} 
Note here that \emph{ex-ante}, the principal only knows that the hired applicants would be above the threshold and may have no additional information about them. Importantly, when the selection process is delegated, the principal themselves incur no time or effort cost, but the downside is that some of these hired applicants may be poorly matched to the job thereby reducing the principal's overall expected utility.

\paragraph{Selection Process under No Delegation.}
In this setting, the principal wants to hire at most $k$ applicants and is tasked with the responsibility of selecting applicants themselves. Reviewing an application reveals the applicant's noisy fit score $\tilde f$ to the principal which enables determination of noisy estimate $\tldt$ of the applicant's quality,  
\[
\tldt = \alpha \tilde{f} + (1-\alpha) \tilde{s}.
\]
However, reviewing each application incurs a fixed marginal cost of $\crev$. The principal has to strategically make two decisions: i) what endogenous selection threshold $\tldtau$ to use (an applicant gets hired only if their perceived quality exceeds this threshold), and ii) how many applications $\nrev$ to review.  
In this case, if the principal reviews $\nrev$ applications, they end up selecting $\nrev \cdot \mathbb{P}[\tldt \geq \tldtau]$ applicants in expectation and each hired applicant earns them an expected utility of $E [t~|~\tldt \geq \tldtau]$. Therefore, the principal's overall expected utility from the selection process when they make decisions themselves is given by:
\begin{align}\label{eq:decent_util}
      U_{ndg}(\tldtau, \nrev) = \nrev \cdot \mathbb{P}[\tldt \geq \tldtau] \cdot \mathbb{E}[t~|~\tldt \geq \tldtau] - \nrev \cdot \crev,
\end{align}
where $\nrev \cdot \mathbb{P}[\tldt \geq \tldtau] \leq k$ (the principal does not want to hire more than $k$ applicants in expectation). 

Before concluding this segment, we specify the following assumption that we make throughout the paper:
\begin{aspt}\label{aspt:large}
Our model always operates in the regime where the applicant pool is sufficiently large. This ensures that even for high thresholds $\tau_1$ and $\tldtau$, sufficiently many candidates can always be found above the threshold.
\end{aspt} 
We expect this assumption to hold true in many real-world settings like graduate admissions at top US universities or top industry companies which receive hundreds to thousands of applications every year for a handful of positions.

\subsection{Fairness under Multiple Groups} 

In order to explore the effects of the choice of selection process on fairness, beyond the general setting with a single homogeneous group of applicants, we also consider a multi-group setting. In particular, we assume that there are two groups $A$ and $B$\footnote{Our results qualitatively extend beyond two groups, but for the sake of notational ease, our exposition assumes there are exactly two groups.} with demographic ratios $\Lambda_A = \lambda$ and $\Lambda_B = 1-\lambda$ respectively for some $\lambda \in (0,1)$, i.e. a uniformly randomly chosen applicant from the applicant pool belongs to group $A$ with probability $\lambda$.

\begin{aspt}\label{aspt:equal}
Both groups have the same distribution of true types ($s, f$). 
\end{aspt}
This assumption is based on the \textit{we are all equal} (WAE) worldview introduced in the seminal work of \cite{friedler2021possibility}. This enables us to study group-level disparate outcomes in the selection process without a difference in true types as a possible cause of those disparate outcomes.

In our setting, applicants from Group $B$ are disadvantaged compared to those from Group $A$. The disadvantage is primarily with respect to how the noisy signal $\tilde s$ about applicant ability is perceived (the disadvantage can also manifest in $\tilde f$ but our main insights still go through unchanged) and can be of one of the two following forms: 
\begin{itemize}
    \item \textit{Signal $\tilde{s}$ with biased mean.}  
    The signal $\tilde s$ for group $A$ applicants is drawn from $\mathcal{N}(0, \sigma_{\tilde s} ^2)$ while for $B$ applicants $\tilde s$ is drawn from $\mathcal{N}(-\beta, \sigma_{\tilde s}^2)$ for some $\beta > 0$. Consequently, the signal distribution for group $A$ stochastically dominates the signal distribution for group $B$, i.e., given any threshold, the probability of finding an applicant from group $A$ above the threshold is higher than the corresponding probability for an applicant from group $B$.
    Such disparities are frequently observed in practice and are very well-documented, for example, disparities in SAT scores between high-income and low-income students~\citep{zwick2013sat} or gender gaps in math and verbal subject scores on standardized tests~\citep{griselda2024gender}. 
    \item \textit{Signal $\tilde{s}$ with disparate variance.} In this case, applicants from Group $B$ are less well-understood than those from group $A$. We model this as receiving higher variance signals in population $B$, in line with previous works like ~\citet{garg2020dropping,kannan2019downstream}. 
    The lack of understanding of population B is modeled as having $\sigma_{\tilde s, B} >  \sigma_{\tilde s, A}$. This may be, for example, because they have been historically marginalized in academia or the workplace, or because applicants are applying with backgrounds that decision-makers have little prior experience with. %
\end{itemize}
We aim to investigate how the type of selection process affects the group-wise composition of selected applicants---in particular, if the choice of whether to delegate or not has outsized impacts on disparity between the two groups. 

\section{A Selection Process Delegated to the Agent}\label{sec:centralization}

In this section, we consider the setting where the principal delegates the task of selecting applicants to the agent.
First, we explore the effect that delegation has on principal utilities in the single group setting: we focus on understanding how said utility evolves in the parameters of the problem, including the selection threshold $\tau_1$ and the importance of fit controlled by $\alpha$. We then consider the multi-group setting, in particular observing how the composition of admitted applicants in a selection process under delegation looks like and how it changes with the selection threshold and the extent of advantage one group already has over the other. 

\subsection{Single Group Setting: A Utilitarian View}
Recall that the principal's expected \textit{ex-ante} utility per hired applicant is given by $\mathbb{E}[t~|~\tilde s \geq \tau_1]$. Our first main result expresses this utility in closed form in terms of key problem parameters. 
\begin{lemma}\label{lem:tech_prof_util_cent}
When the selection process is delegated to the agent, the expected ex-ante utility per hired applicant earned by a principal who puts weight $\alpha \in (0, 1)$ on applicant fit, is given by:
\[
      \mathbb{E}[t~|~\tilde s \geq \tau_1] = \frac{(1-\alpha)\sigma_s^2}{\sigma_{\tilde s}}\cdot H\left(\frac{\tau_1}{\sigma_{\tilde s}} \right),
\]
where $\tau_1$ is the agent's selection threshold and $H(\cdot)$ is the hazard rate function of a standard normal random variable. 
\end{lemma}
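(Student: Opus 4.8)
The plan is to exploit the joint Gaussianity of $(s, f, \tilde s)$ and reduce the problem to computing the mean of a truncated Gaussian. First I would use linearity of conditional expectation to write
\[
\E[t \mid \tilde s \geq \tau_1] = \alpha\, \E[f \mid \tilde s \geq \tau_1] + (1-\alpha)\, \E[s \mid \tilde s \geq \tau_1].
\]
By Assumption~\ref{aspt:gauss_noise}, $f$ is independent of both $s$ and $\epsilon_s$, hence independent of $\tilde s = s + \epsilon_s$; therefore $\E[f \mid \tilde s \geq \tau_1] = \E[f] = 0$ and the fit term vanishes. This already explains the factor $(1-\alpha)$ appearing in the statement and reduces the whole computation to evaluating $\E[s \mid \tilde s \geq \tau_1]$.

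Next I would handle the ability term via the standard Gaussian projection. Since $(s,\tilde s)$ is jointly mean-zero Gaussian with $\mathrm{Cov}(s,\tilde s) = \mathrm{Var}(s) = \sigma_s^2$ and $\mathrm{Var}(\tilde s) = \sigma_{\tilde s}^2$, I would decompose $s = \frac{\sigma_s^2}{\sigma_{\tilde s}^2}\,\tilde s + \eta$, where the residual $\eta$ is uncorrelated with $\tilde s$ and hence, by joint Gaussianity, independent of it. Conditioning on the event $\{\tilde s \geq \tau_1\}$ and using $\E[\eta \mid \tilde s \geq \tau_1] = \E[\eta] = 0$ then gives
\[
\E[s \mid \tilde s \geq \tau_1] = \frac{\sigma_s^2}{\sigma_{\tilde s}^2}\,\E[\tilde s \mid \tilde s \geq \tau_1].
\]

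It then remains to compute the truncated-Gaussian mean $\E[\tilde s \mid \tilde s \geq \tau_1]$. Writing $\tilde s = \sigma_{\tilde s} Z$ for a standard normal $Z$, this equals $\sigma_{\tilde s}\, \E[Z \mid Z \geq \tau_1/\sigma_{\tilde s}]$, and the classical tail identity $\int_c^\infty z\,\phi(z)\,dz = \phi(c)$ yields $\E[Z \mid Z \geq c] = \phi(c)/(1-\Phi(c)) = H(c)$, so that $\E[\tilde s \mid \tilde s \geq \tau_1] = \sigma_{\tilde s}\, H(\tau_1/\sigma_{\tilde s})$. Substituting back collapses one power of $\sigma_{\tilde s}$ and produces exactly $(1-\alpha)\frac{\sigma_s^2}{\sigma_{\tilde s}} H(\tau_1/\sigma_{\tilde s})$, as claimed. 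None of these steps is genuinely difficult; the only points deserving care are justifying that uncorrelatedness upgrades to independence for the residual $\eta$ (valid precisely because the pair is jointly Gaussian) and correctly recalling the tail identity that drives the hazard-rate formula.
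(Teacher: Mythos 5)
Your proof is correct and follows essentially the same route as the paper's: the fit term vanishes by independence, the ability term reduces to $\frac{\sigma_s^2}{\sigma_{\tilde s}^2}\E[\tilde s \mid \tilde s \geq \tau_1]$ via the Gaussian regression coefficient, and the truncated-normal mean gives the hazard rate. The only (cosmetic) difference is that you justify the middle step by the orthogonal projection decomposition $s = \frac{\sigma_s^2}{\sigma_{\tilde s}^2}\tilde s + \eta$ with $\eta \perp \tilde s$, whereas the paper uses the conditional Gaussian mean formula together with the law of total expectation; both are valid and rest on the same joint Gaussianity.
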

The proof of the lemma can be found in Appendix~\ref{pf:lemma1}. A few direct consequences of this lemma are as follows:

\begin{corollary}\label{cor:tau1_monotone}
When the selection process is delegated to the agent, the principal's ex-ante utility from each admitted applicant is monotonically increasing in the agent's selection threshold $\tau_1$. 
\end{corollary}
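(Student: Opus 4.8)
The plan is to leverage the closed-form expression from Lemma~\ref{lem:tech_prof_util_cent} and reduce the monotonicity claim to a classical property of the Gaussian hazard rate. Since
\[
\mathbb{E}[t~|~\tilde s \geq \tau_1] = \frac{(1-\alpha)\sigma_s^2}{\sigma_{\tilde s}}\cdot H\!\left(\frac{\tau_1}{\sigma_{\tilde s}}\right),
\]
and the prefactor $\frac{(1-\alpha)\sigma_s^2}{\sigma_{\tilde s}}$ is a strictly positive constant for $\alpha \in (0,1)$ that does not depend on $\tau_1$, the sign of the derivative of the utility with respect to $\tau_1$ is governed entirely by the behavior of $H$. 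Because $\tau_1 \mapsto \tau_1/\sigma_{\tilde s}$ is strictly increasing (as $\sigma_{\tilde s} > 0$), it suffices to show that the standard-normal hazard rate $H(x) = \phi(x)/(1-\Phi(x))$ is itself monotonically increasing in its argument; the result then follows by composing increasing maps and scaling by a positive constant.

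First I would differentiate $H$ directly. Writing $\bar\Phi(x) = 1 - \Phi(x)$ and using $\phi'(x) = -x\phi(x)$ together with $\bar\Phi'(x) = -\phi(x)$, the quotient rule gives
\[
H'(x) = \frac{\phi(x)\,\big[\phi(x) - x\,\bar\Phi(x)\big]}{\bar\Phi(x)^2}.
\]
Since $\phi(x) > 0$ and $\bar\Phi(x) > 0$ for all $x$, the sign of $H'(x)$ is exactly the sign of $\phi(x) - x\,\bar\Phi(x)$, so the whole claim reduces to establishing the inequality $\phi(x) > x\,\bar\Phi(x)$ for every $x \in \mathbb{R}$ (equivalently $H(x) > x$).

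The key step, and the only one requiring any care, is this inequality. For $x \le 0$ it is immediate, since the right-hand side $x\,\bar\Phi(x)$ is non-positive while $\phi(x) > 0$. For $x > 0$ I would invoke the standard Mills-ratio bound $\bar\Phi(x) < \phi(x)/x$, which follows in one line from $\bar\Phi(x) = \int_x^\infty \phi(u)\,du < \int_x^\infty \tfrac{u}{x}\,\phi(u)\,du = \phi(x)/x$, where the inequality uses $u/x > 1$ on the range of integration and the final equality uses $\int_x^\infty u\,\phi(u)\,du = \phi(x)$. This yields $\phi(x) - x\,\bar\Phi(x) > 0$ and hence $H'(x) > 0$ on all of $\mathbb{R}$. (Alternatively, one can simply cite the well-known fact that log-concave densities, of which the Gaussian is one, have increasing hazard rates.) Combining this strict positivity with the positive prefactor and the increasing reparametrization $\tau_1 \mapsto \tau_1/\sigma_{\tilde s}$ completes the argument; the main obstacle is purely the Mills-ratio estimate, as every other step is a routine manipulation.
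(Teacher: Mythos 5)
Your proposal is correct and follows essentially the same route as the paper: reduce to monotonicity of the standard normal hazard rate via the closed form in Lemma~\ref{lem:tech_prof_util_cent}, then establish $H(x)\geq x$ through the Mills-ratio bound $\bar\Phi(x)\leq \phi(x)/x$ and conclude $H'(x)=H(x)\bigl(H(x)-x\bigr)\geq 0$, which is exactly the argument in Appendix~\ref{app:monotone_hazard}. No gaps.
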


This follows directly from Lemma~\ref{lem:tech_prof_util_cent}: the hazard rate function of the standard normal random variable is known to be monotonically increasing in its argument. We provide a short proof in Appendix~\ref{app:monotone_hazard}. The corollary shows that when the agent uses a higher selection threshold, the principal's utility increases: indeed, i) the average ability of accepted applicants increases and ii) the principal does not incur any selection cost. Second, we note that the principal's utility is also monotonic and decreasing in $\alpha$:

\begin{corollary}\label{cor:alpha_monotone}
When the selection process is delegated to the agent, the principal's ex-ante utility diminishes monotonically in $\alpha$ which is the principal's preferential weight on applicant fit.
\end{corollary}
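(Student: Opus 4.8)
The plan is to read off the closed form from Lemma~\ref{lem:tech_prof_util_cent} and observe that $\alpha$ enters it \emph{only} through the linear factor $(1-\alpha)$. Concretely, the lemma gives
\[
\mathbb{E}[t \mid \tilde s \geq \tau_1] = \frac{(1-\alpha)\sigma_s^2}{\sigma_{\tilde s}} \cdot H\!\left(\frac{\tau_1}{\sigma_{\tilde s}}\right).
\]
The first step is to verify that every quantity other than $(1-\alpha)$ is independent of $\alpha$. The agent's threshold $\tau_1$ is exogenous and fixed; the distributions of $s$ and $\tilde s$ (hence $\sigma_s$ and $\sigma_{\tilde s}$) are fixed by Assumptions~\ref{aspt:Gaussian} and~\ref{aspt:gauss_noise} and carry no dependence on $\alpha$, since the agent selects on $\tilde s$ alone. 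Thus the entire factor $\tfrac{\sigma_s^2}{\sigma_{\tilde s}}\, H(\tau_1/\sigma_{\tilde s})$ is a constant in $\alpha$.

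The second step is to confirm that this constant is strictly positive, so that multiplying it by $(1-\alpha)$ preserves the direction of monotonicity. Here $\sigma_s^2/\sigma_{\tilde s} > 0$ trivially, and the hazard rate $H$ of a standard normal is strictly positive everywhere (it equals $\phi/(1-\Phi) > 0$); this is the same positivity already invoked for Corollary~\ref{cor:tau1_monotone}. Differentiating then yields
\[
\frac{\partial}{\partial \alpha}\, \mathbb{E}[t \mid \tilde s \geq \tau_1] = -\,\frac{\sigma_s^2}{\sigma_{\tilde s}}\, H\!\left(\frac{\tau_1}{\sigma_{\tilde s}}\right) < 0,
\]
so the per-applicant utility is strictly decreasing in $\alpha$ on $(0,1)$, which is exactly the claim.

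I do not expect any genuine obstacle here: once Lemma~\ref{lem:tech_prof_util_cent} is in hand the result is immediate, and the only things to check are the two bookkeeping facts above (that $\tau_1,\sigma_s,\sigma_{\tilde s}$ hide no $\alpha$-dependence, and that $H>0$). If anything deserves a sentence of justification, it is the conceptual reason the dependence collapses to $(1-\alpha)$: because $f$ is independent of $\tilde s$ by Assumption~\ref{aspt:gauss_noise}, conditioning on $\tilde s \geq \tau_1$ leaves the fit term's contribution $\alpha\,\mathbb{E}[f \mid \tilde s \geq \tau_1] = \alpha\,\mathbb{E}[f] = 0$, so the agent's selection extracts value \emph{only} through the ability term $(1-\alpha)\,\mathbb{E}[s \mid \tilde s \geq \tau_1]$. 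Raising $\alpha$ therefore shifts weight onto precisely the component the agent cannot act on, which is the economic content behind the purely algebraic monotonicity.
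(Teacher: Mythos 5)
Your proposal is correct and follows exactly the paper's argument: the closed form from Lemma~\ref{lem:tech_prof_util_cent} depends on $\alpha$ only through the factor $(1-\alpha)$ multiplying a positive constant, so the per-applicant (and hence total) ex-ante utility is decreasing in $\alpha$. Your added remarks on the positivity of $H$ and on why the $\alpha$-dependence collapses to $(1-\alpha)$ are consistent with, and slightly more explicit than, what the paper states.
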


This follows from Lemma~\ref{lem:tech_prof_util_cent}---increasing $\alpha$ decreases the principal's expected utility from each hired applicant since the principal's preferences increasingly diverge from the preferences of the agent they delegate to.

\subsection{Fairness Implications in Multi-Group Settings}

When the applicant pool consists of applicants from different groups, an immediate follow-up question is: what is the average group-wise composition of the set of applicants hired through a selection process which has been delegated to the agent? The answer to this question has important fairness implications. For example, if groups $A$ and $B$ are identical in all respects and group $A$ has a demographic ratio of $\lambda$, one \textit{fair outcome} might be that on average, $\lambda$ fraction of the admitted applicants come from group $A$. This subscribes to the well-known concept of \textit{demographic parity}\footnote{Note here that because our populations have identical distributions of true ability and fit, demographic parity is a natural notion of fairness.} in the fairness literature \citep{dwork2012fairness}. In our setting, however, despite being intrinsically identical, the observed signals about the two groups are not identical---in particular, group $B$ is disadvantaged in some way (either their signals are noisier (higher variance) or the distribution mean is negatively biased compared to Group $A$). Our goal in this section is to investigate how the extent of this disparity between the groups manifests in the outcome of the selection process. 

Importantly, \textit{the agent uses the same selection standard ($\tau_1$) for everyone, irrespective of group identity}---our agent is (naively) group-blind\footnote{For example, an automated hiring tool that relies on ostensibly neutral keywords may apply a uniform decision rule across applicants, while failing to account for the fact that such keywords can be disparately associated with different demographic groups; see for example Amazon's recent failure in using automated AI decision-making tools for hiring~\citep{amazon}. Or a central university admission committee that makes admissions decisions without taking into account the disparate meaning of standardized scores across different populations.}. A key difference of this with the setting where the principal does not delegate is that in the latter, the principal has the flexibility to customize group-specific selection standards---for example, in the context of graduate admissions, individual professors can still make their own decisions about who to hire.

\paragraph{Composition of selected applicants} 
Let $\Phi_A(\cdot)$ and $\Phi_B(\cdot)$ indicate the normal CDFs of the signal ($\tilde s$) distributions for groups $A$ and $B$ respectively. Now, let $\Phi_M(\cdot)$ indicate the CDF of the Gaussian mixture distribution where a sample belongs to group $A$ with probability $\lambda$ and group $B$ with probability $1-\lambda$: 
\[
     \Phi_M(x) = \lambda \cdot \Phi_A(x) + (1-\lambda)\cdot \Phi_B(x), \quad \forall~x \in \mathbb{R}. 
\]
Similarly, $\bar \Phi_A(\cdot)$, $\bar \Phi_B(\cdot)$ and $\bar \Phi_M(\cdot)$ indicate the corresponding complementary CDFs. We now present the following result (whose proof can be found in Appendix~\ref{pf:lem2}):

\begin{lemma}\label{lem:cent_comp}
Consider a selection process which is delegated to the agent with a selection threshold of $\tau_1$. For any applicant hired through this process from a mixed applicant pool, the probability that said applicant belongs to group $i$ is given by: 
\[
          \frac{\Lambda_i \cdot \bar \Phi_i(\tau_1)}{\bar \Phi_M(\tau_1)} \quad \forall~i \in \{A, B\},
\]
where $\Lambda_A = \lambda$ and $\Lambda_B = 1-\lambda$.
\end{lemma}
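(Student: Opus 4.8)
The plan is to treat the quantity in question as a conditional probability and apply Bayes' rule directly, with the hiring event being $\{\tilde{s} \geq \tau_1\}$. Since the agent is group-blind and admits exactly those applicants whose observed signal clears the common threshold $\tau_1$, the event ``applicant is hired'' coincides with $\{\tilde{s} \geq \tau_1\}$ regardless of group membership. Writing $G$ for the (random) group label of a uniformly drawn applicant, what we want is $\mathbb{P}[G = i \mid \tilde{s} \geq \tau_1]$, and Bayes' rule expresses this as the product of the prior $\mathbb{P}[G = i]$ and the group-conditional hiring probability, divided by the unconditional hiring probability.

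First I would identify the three ingredients. The prior is simply $\mathbb{P}[G = i] = \Lambda_i$ by the definition of the demographic ratios. The group-conditional hiring probability is $\mathbb{P}[\tilde{s} \geq \tau_1 \mid G = i] = \bar\Phi_i(\tau_1)$, which is immediate from the definition of $\bar\Phi_i$ as the complementary CDF of the group-$i$ signal distribution (the appropriate $\mathcal{N}(0,\sigma_{\tilde s}^2)$ or $\mathcal{N}(-\beta,\sigma_{\tilde s}^2)$ depending on the disparity model, though the argument is agnostic to which form the disparity takes).

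Next I would compute the denominator by the law of total probability over the two groups, $\mathbb{P}[\tilde{s} \geq \tau_1] = \lambda\,\bar\Phi_A(\tau_1) + (1-\lambda)\,\bar\Phi_B(\tau_1)$. The key observation is that this expression is exactly $\bar\Phi_M(\tau_1)$: taking complements in the defining identity $\Phi_M = \lambda\Phi_A + (1-\lambda)\Phi_B$ gives $\bar\Phi_M = \lambda\bar\Phi_A + (1-\lambda)\bar\Phi_B$, so the mixture's complementary CDF is precisely the mixture of the complementary CDFs. Substituting all three pieces into Bayes' rule yields $\mathbb{P}[G = i \mid \tilde{s} \geq \tau_1] = \Lambda_i\,\bar\Phi_i(\tau_1)/\bar\Phi_M(\tau_1)$, as claimed.

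I do not expect a genuine technical obstacle here; the result is essentially a one-line application of Bayes' rule once the mixture bookkeeping is set up. The only point requiring mild care is the interpretation: the statement concerns a single randomly selected hired applicant, so it is a clean conditional probability, but one may also read it as the expected group-$i$ fraction among the hired set, which coincides with the same conditional probability and concentrates around it in the large-pool regime of Assumption~\ref{aspt:large}. Making sure the complementary-CDF identity for the mixture is stated cleanly, rather than silently conflating $\Phi_M$ and $\bar\Phi_M$, is the one place where a careless complement slip could creep in.
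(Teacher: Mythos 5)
Your proposal is correct and follows essentially the same route as the paper's proof: Bayes' rule with prior $\Lambda_i$, group-conditional acceptance probability $\bar\Phi_i(\tau_1)$, and the law of total probability giving $\bar\Phi_M(\tau_1)$ in the denominator. Your explicit note that complementation commutes with the mixture (so $\bar\Phi_M = \lambda\bar\Phi_A + (1-\lambda)\bar\Phi_B$) is a detail the paper also states, and nothing further is needed.
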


\paragraph{When do significant disparities arise?} 
In order to understand how \textit{fair} the realized composition of selected applicants is, we introduce the following fairness metric: 
\begin{align}\label{eq:fairness}
        \mathcal{D} = \frac{Y_A}{\Lambda_A} - \frac{Y_B}{\Lambda_B},
\end{align}
where $Y_i$ is the realized proportion of admits that belong to group $i$ (random variable) with $\mathbb{E}[Y_i] = \frac{\Lambda_i \cdot \bar \Phi_i(\tau_1)}{\bar \Phi_M(\tau_1)}$ (as shown in Lemma~\ref{lem:cent_comp}) and $\Lambda_i$ is the demographic ratio of group $i$ as before. The ratio $Y_i/\Lambda_i$ indicates whether group $i$ is over- or under-represented in the pool of hired applicants compared to what is \textit{demographically fair}. Thus, the metric $\mathcal{D}$ (which looks at the difference of these ratios) is a measure of the extent and the direction of unfairness in the composition of hires across the two groups. Note that when groups $A$ and $B$ have identical signal distributions, $\bar \Phi_A(\tau_1) = \bar \Phi_B(\tau_1) = \bar \Phi_M(\tau_1)$ which implies that $\mathbb{E}[\mathcal{D}] = 0$---this corresponds to the case where we have a \textit{perfectly fair} group composition in expectation. A larger absolute value of $\mathbb{E}[\mathcal{D}]$ indicates more unfairness, with positive and negative values indicating unfairness in favor and against the advantaged group ($A$) respectively. Our next result characterizes some of the statistical properties of the fairness metric $\mathcal{D}$: 

\begin{theorem}\label{thm:D_stats}
Consider a selection process which has been delegated to the agent and uses a selection threshold of $\tau_1$ to hire applicants from a mixed applicant pool consisting of groups $A$ and $B$. In that case, the group fairness metric $\mathcal{D}$ satisfies: 
\[
                  \mathbb{E}[\mathcal{D}] = \frac{\bar \Phi_A(\tau_1) - \bar \Phi_B(\tau_1)}{\bar \Phi_M(\tau_1)}, \quad \text{and} \quad
     \frac{\bar \Phi_A(\tau_1) + \bar \Phi_B(\tau_1)}{\bar \Phi_M(\tau_1)} \geq \mathbb{E}[\vert \mathcal{D} \vert ] \geq \frac{\vert \bar \Phi_A(\tau_1) - \bar \Phi_B(\tau_1)\vert}{\bar \Phi_M(\tau_1)}.
\]
\end{theorem}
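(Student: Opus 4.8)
The plan is to obtain all three quantities directly from the closed form for $\mathbb{E}[Y_i]$ supplied by Lemma~\ref{lem:cent_comp}, using only linearity of expectation, Jensen's inequality, and one elementary pointwise bound; no distributional computation beyond Lemma~\ref{lem:cent_comp} should be needed. The single structural fact I would lean on is that $Y_A$ and $Y_B$ are realized \emph{proportions} of the admitted pool, hence nonnegative. For the mean, I would write $\mathbb{E}[\mathcal{D}] = \mathbb{E}[Y_A]/\Lambda_A - \mathbb{E}[Y_B]/\Lambda_B$ by linearity, then substitute $\mathbb{E}[Y_i] = \Lambda_i \bar\Phi_i(\tau_1)/\bar\Phi_M(\tau_1)$. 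The $\Lambda_i$ factors cancel in each term, and placing both over the common denominator $\bar\Phi_M(\tau_1)$ yields exactly $(\bar\Phi_A(\tau_1) - \bar\Phi_B(\tau_1))/\bar\Phi_M(\tau_1)$.

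For the lower bound on $\mathbb{E}[\lvert\mathcal{D}\rvert]$, I would invoke convexity of the absolute value (equivalently, $\lvert\mathbb{E}[X]\rvert \le \mathbb{E}[\lvert X\rvert]$), giving $\mathbb{E}[\lvert\mathcal{D}\rvert] \ge \lvert\mathbb{E}[\mathcal{D}]\rvert$, and then plug in the mean just computed to get $\lvert\bar\Phi_A(\tau_1) - \bar\Phi_B(\tau_1)\rvert/\bar\Phi_M(\tau_1)$. For the upper bound, the key observation is that since $Y_A, Y_B \ge 0$, both summands $Y_A/\Lambda_A$ and $Y_B/\Lambda_B$ are nonnegative, so the elementary inequality $\lvert a - b\rvert \le a + b$ for nonnegative $a, b$ gives the pointwise bound $\lvert\mathcal{D}\rvert \le Y_A/\Lambda_A + Y_B/\Lambda_B$. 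Taking expectations and substituting Lemma~\ref{lem:cent_comp} exactly as in the mean computation produces $(\bar\Phi_A(\tau_1) + \bar\Phi_B(\tau_1))/\bar\Phi_M(\tau_1)$.

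The main (and essentially only) obstacle is recognizing that it is the \emph{sign structure}---both terms of $\mathcal{D}$ being nonnegative because proportions cannot be negative---that controls the upper bound, replacing the difference of the two ratios by their sum. Once that observation is in place, the remaining work is pure substitution and two standard inequalities, so I would not expect any heavier machinery to be required.
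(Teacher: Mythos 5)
Your proof is correct and takes essentially the same route as the paper's: linearity of expectation plus Lemma~\ref{lem:cent_comp} for the mean, Jensen's inequality for the lower bound, and a triangle-type inequality exploiting nonnegativity for the upper bound. The only cosmetic difference is that the paper decomposes $\mathcal{D}$ into per-applicant terms $Z_i = X_i/\Lambda_A - (1-X_i)/\Lambda_B$ and bounds $\frac{1}{K}\sum_i \lvert Z_i\rvert$, which is identically equal to your $Y_A/\Lambda_A + Y_B/\Lambda_B$, so the two arguments coincide.
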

The proof of the theorem can be found in Appendix~\ref{pf:thm1}. Firstly observe that when Group $B$ is disadvantaged, $\bar \Phi_A(\tau_1) \neq \bar \Phi_B(\tau_1)$ which implies that the lower bound on $\mathbb{E}[\vert\mathcal{D}\vert]$ is non-trivial. This implies that if the signal distributions are relatively different and one group has a larger tail than the other (for example, due to large additive bias), then significant disparities will arise between the groups on average.
The expression for $\mathbb{E}[\mathcal{D}]$ also gives us insights about the direction of the expected \textit{unfairness}, whether it is in favor of group $A$ or group $B$. In particular:

\begin{corollary}\label{corr:dir_unf}
Suppose, $\tilde s_A\sim \mathcal{N}\left(0, \sigma_{\tilde s, A}^2 \right)$ and $\tilde s_B\sim \mathcal{N}\left(-\beta, \sigma_{\tilde s, B}^2 \right)$ with $\frac{\sigma_{\tilde s, B}}{\sigma_{\tilde s, A}} = r$.  Then, 
\[
    \mathbb{E}[\mathcal{D}] >0 \quad \text{iff }  (r-1)\tau_1 -\beta < 0, \quad \text{and} \quad \mathbb{E}[\mathcal{D}] < 0 \quad \text{iff }  (r-1)\tau_1 - \beta > 0.
\]
\end{corollary}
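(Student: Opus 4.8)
The plan is to read off the sign of $\mathbb{E}[\mathcal{D}]$ directly from the closed form supplied by Theorem~\ref{thm:D_stats}, rather than recomputing anything. Since $\bar\Phi_M(\tau_1) = \lambda\,\bar\Phi_A(\tau_1) + (1-\lambda)\,\bar\Phi_B(\tau_1)$ is a convex combination of two strictly positive tail probabilities, the denominator in the expression for $\mathbb{E}[\mathcal{D}]$ is strictly positive. Hence the sign of $\mathbb{E}[\mathcal{D}]$ coincides with the sign of the numerator $\bar\Phi_A(\tau_1) - \bar\Phi_B(\tau_1)$, and the entire corollary reduces to deciding when $\bar\Phi_A(\tau_1) \gtrless \bar\Phi_B(\tau_1)$.

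Next I would standardize both tail probabilities. Writing $\bar\Phi$ for the complementary CDF of a standard normal and using $\tilde s_A \sim \mathcal{N}(0,\sigma_{\tilde s, A}^2)$ and $\tilde s_B \sim \mathcal{N}(-\beta,\sigma_{\tilde s, B}^2)$, we get
\[
\bar\Phi_A(\tau_1) = \bar\Phi\!\left(\frac{\tau_1}{\sigma_{\tilde s, A}}\right), \qquad \bar\Phi_B(\tau_1) = \bar\Phi\!\left(\frac{\tau_1 + \beta}{\sigma_{\tilde s, B}}\right).
\]
Because $\bar\Phi$ is strictly decreasing on all of $\mathbb{R}$, the inequality $\bar\Phi_A(\tau_1) > \bar\Phi_B(\tau_1)$ is equivalent to the reversed inequality on the arguments, namely $\tau_1/\sigma_{\tilde s, A} < (\tau_1 + \beta)/\sigma_{\tilde s, B}$. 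This monotonicity-induced sign flip is the one place I would be careful: it is the natural spot for a sign error, and it is really the only piece of ``content'' in the argument.

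Finally I would substitute $\sigma_{\tilde s, B} = r\,\sigma_{\tilde s, A}$ and clear denominators. Multiplying the inequality $\tau_1/\sigma_{\tilde s, A} < (\tau_1+\beta)/(r\,\sigma_{\tilde s, A})$ through by the strictly positive quantities $\sigma_{\tilde s, A}$ and then $r$ preserves its direction and yields $r\tau_1 < \tau_1 + \beta$, that is, $(r-1)\tau_1 - \beta < 0$. Each multiplication is by a positive constant ($\sigma_{\tilde s, A}, r > 0$), and at no point do I divide by $\tau_1$, so the chain is valid for every real $\tau_1$ irrespective of its sign. This establishes $\mathbb{E}[\mathcal{D}] > 0 \iff (r-1)\tau_1 - \beta < 0$. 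Repeating the same chain with all inequalities reversed gives the companion claim $\mathbb{E}[\mathcal{D}] < 0 \iff (r-1)\tau_1 - \beta > 0$, with the boundary case $(r-1)\tau_1 - \beta = 0$ forcing $\bar\Phi_A(\tau_1) = \bar\Phi_B(\tau_1)$ and hence $\mathbb{E}[\mathcal{D}] = 0$. There is no genuine analytic obstacle here: the statement is a one-line consequence of Theorem~\ref{thm:D_stats} once the tails are standardized, and the only thing to track carefully is the decreasing monotonicity of $\bar\Phi$ together with the positivity of the mixing constants and of $\sigma_{\tilde s, A}, r$.
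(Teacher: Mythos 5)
Your proposal is correct and follows exactly the route the paper intends: the paper gives no separate proof of Corollary~\ref{corr:dir_unf}, treating it as an immediate consequence of the sign of the numerator in the expression for $\mathbb{E}[\mathcal{D}]$ from Theorem~\ref{thm:D_stats}, which is precisely what you do by standardizing the two Gaussian tails and using the strict monotonicity of $\bar\Phi$. The algebra $(r-1)\tau_1 - \beta < 0 \iff \tau_1/\sigma_{\tilde s, A} < (\tau_1+\beta)/\sigma_{\tilde s, B}$ checks out, including the boundary case.
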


\begin{figure}[!h]
    \centering
    \subfloat[$\Lambda_A = 0.50$]{\includegraphics[width=0.32\textwidth]{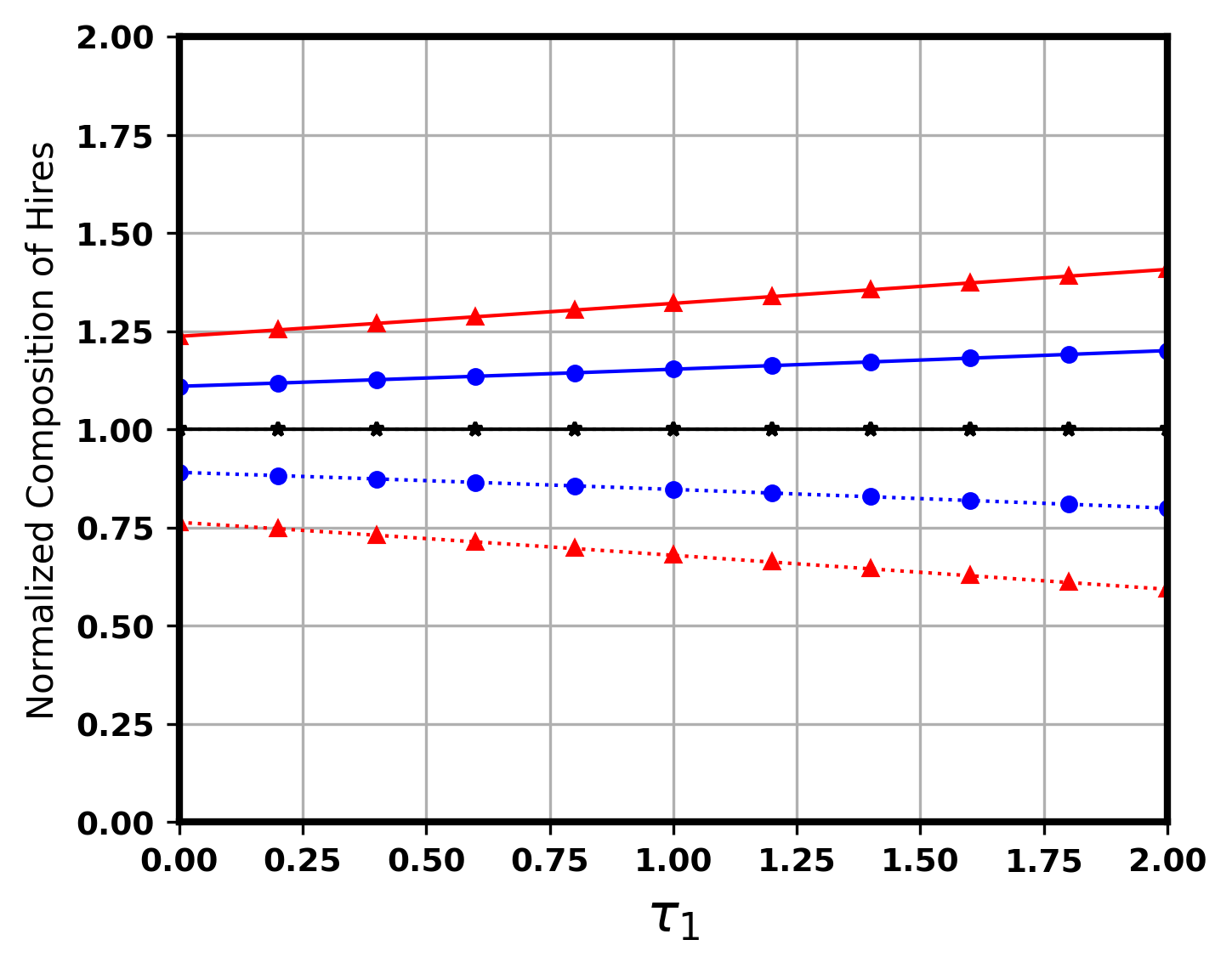}}
    \subfloat[$\Lambda_A = 0.65$]{\includegraphics[width=0.32\textwidth]{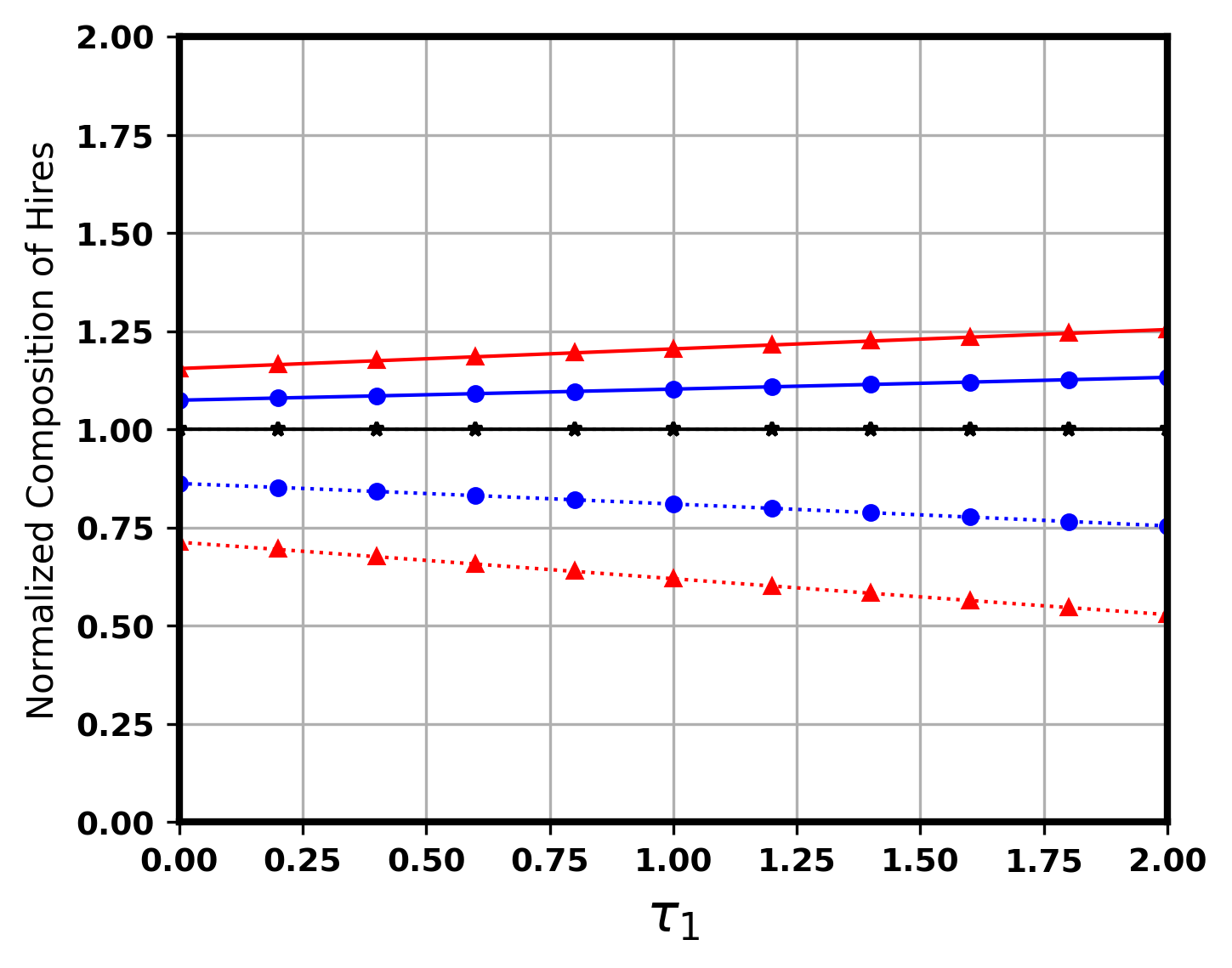}}  
    \subfloat[$\Lambda_A = 0.80$]{\includegraphics[width=0.32\textwidth]{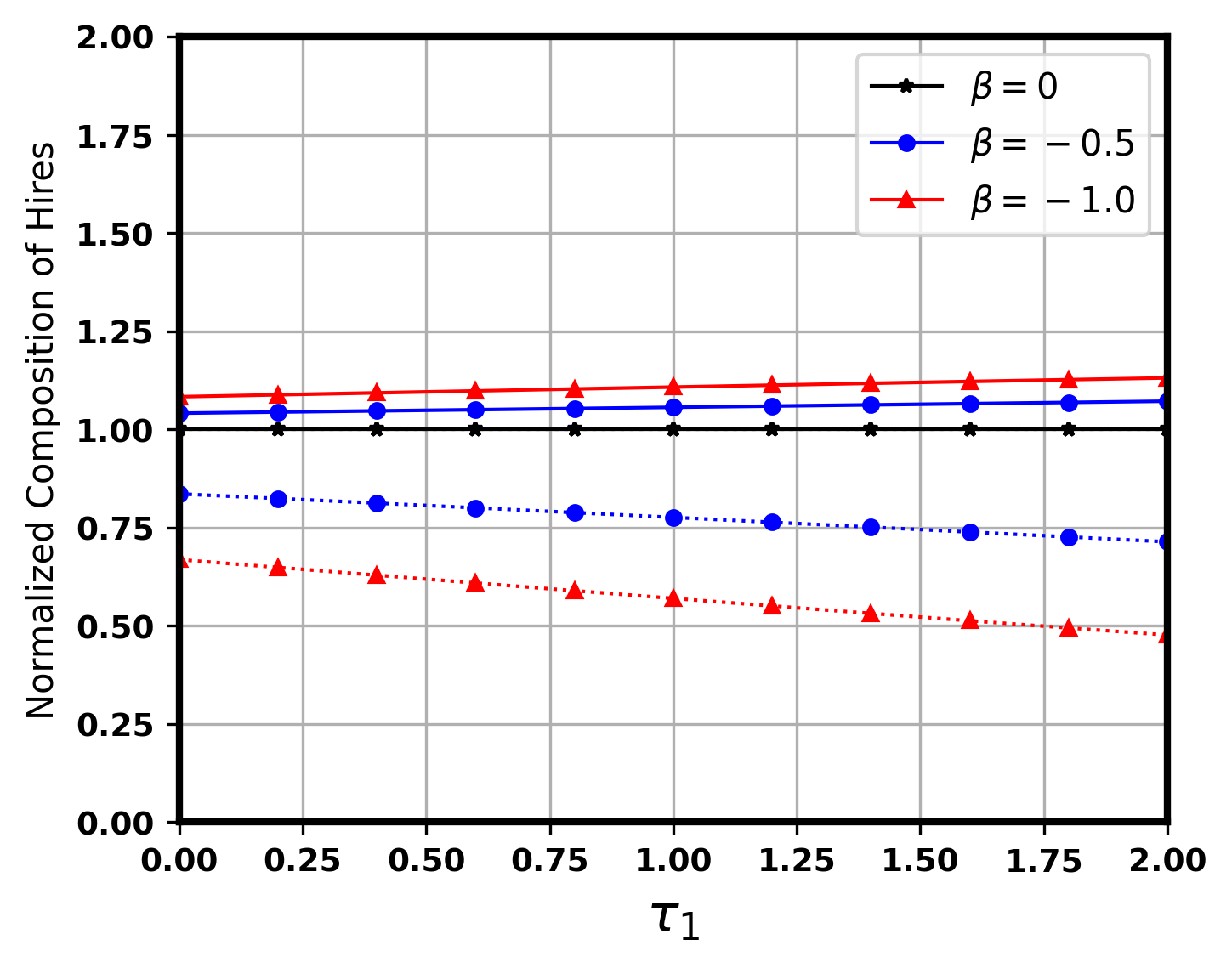}}
    \caption{We plot the expected fraction of hires (normalized by demographic ratios) from group $A$ (solid lines) and group $B$ (dashed lines) respectively as a function of the agent's selection threshold $\tau_1$ for different levels of bias $\beta$ on the mean of group $B$'s observed score distribution ($\tilde s$) and different levels of population skew ($\Lambda_A$). As $\tau_1$ increases, the gap between the groups grows uniformly indicating that the disadvantaged group suffers as the selection process becomes more selective (from blue outwards to red). The same trend is observed in the magnitude of bias ($\beta$). However, as the prevalence of the majority group in the population ($\Lambda_A$) increases, the leading group has little scope for gaining additional advantage, so the extent of disparities actually diminishes.}
    \label{fig:add_bias}
\end{figure}

\begin{figure}[!h]
    \centering
    \subfloat[$\Lambda_A = 0.50$]{\includegraphics[width=0.32\textwidth]{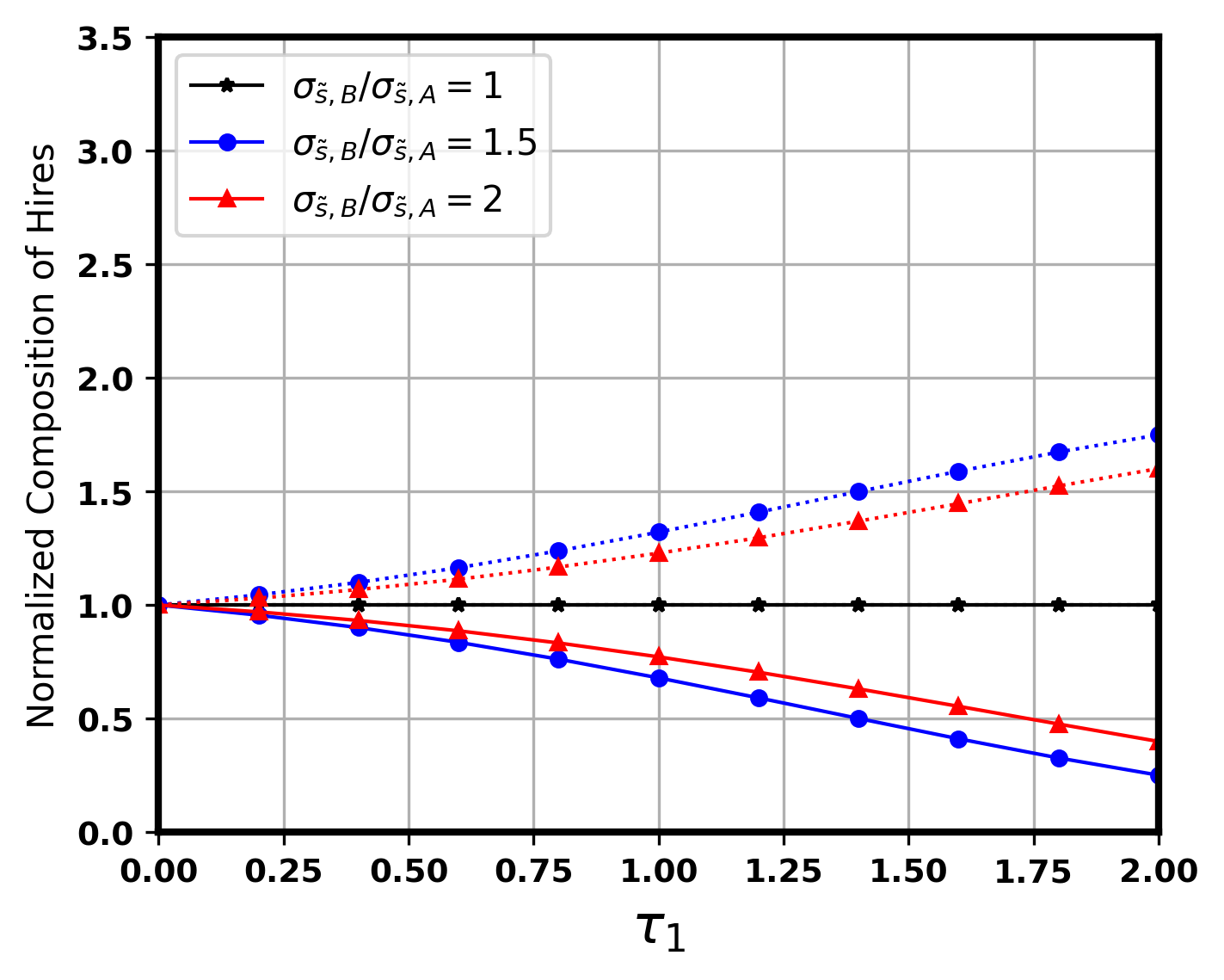}}
    \subfloat[$\Lambda_A = 0.65$]{\includegraphics[width=0.32\textwidth]{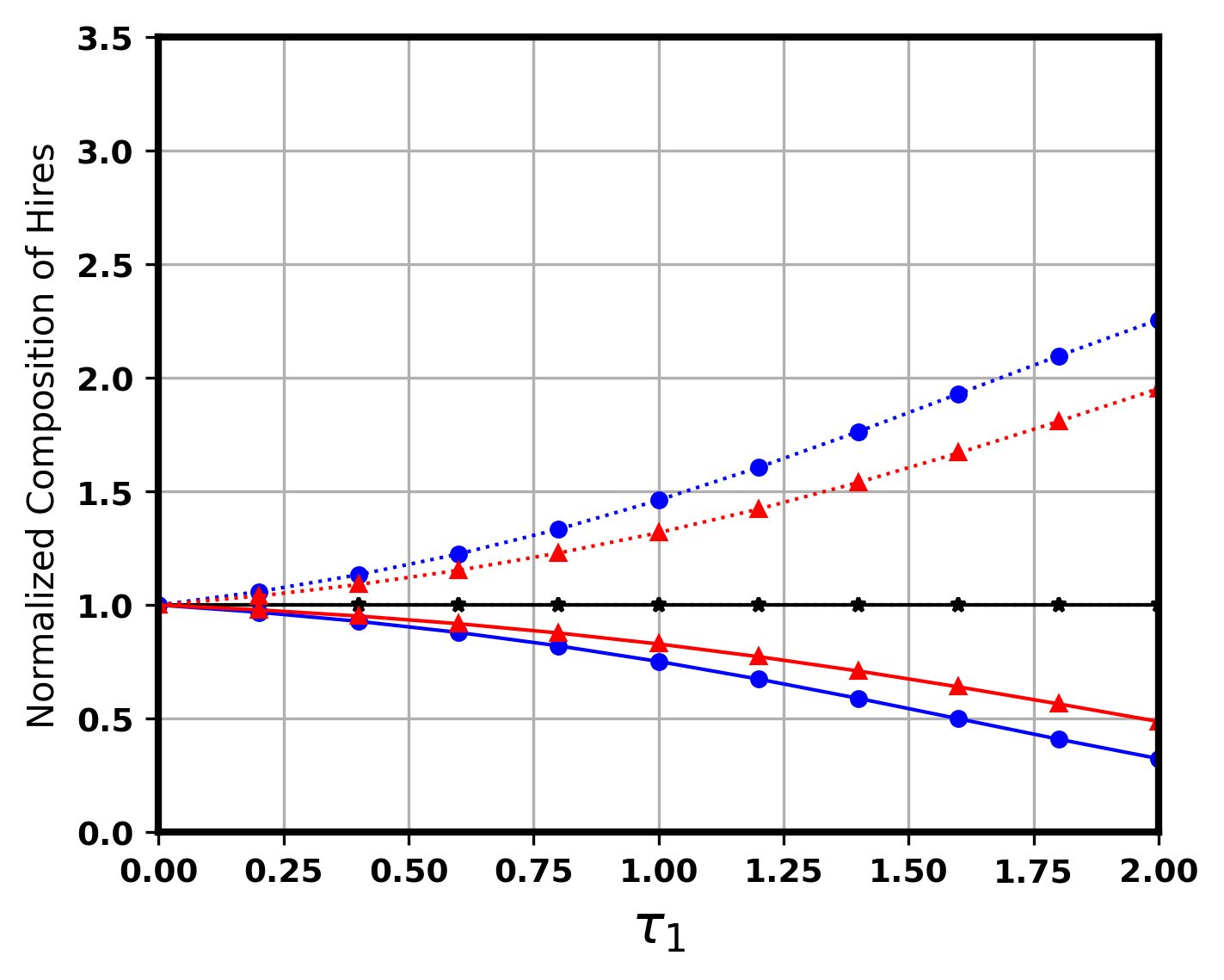}}  
    \subfloat[$\Lambda_A = 0.80$]{\includegraphics[width=0.32\textwidth]{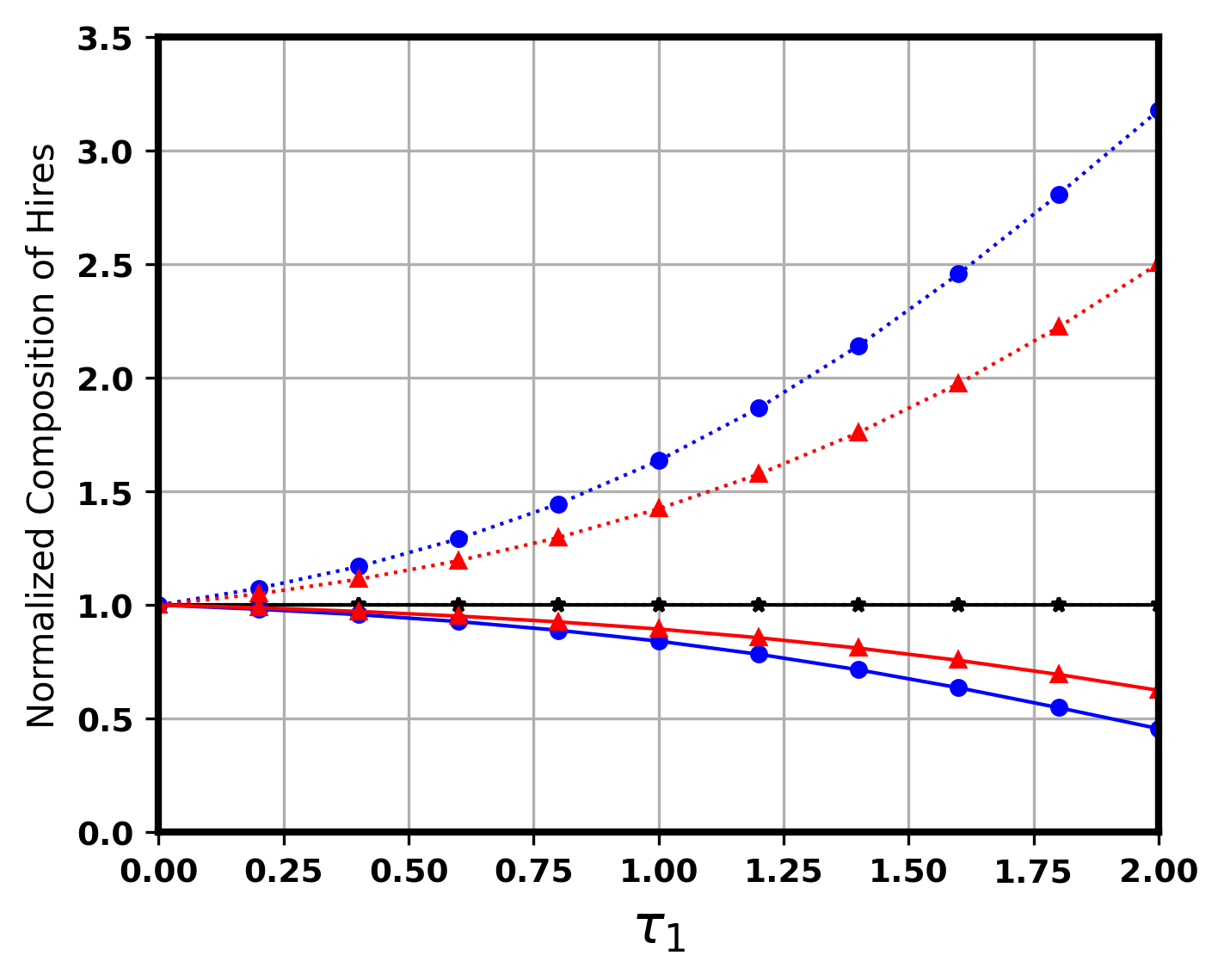}}
    \caption{We plot the expected fraction of hires (normalized by demographic ratios) from group $A$ (solid lines) and group $B$ (dashed lines) respectively as a function of the agent's selection threshold $\tau_1$ ($> 0$) for different levels of the ratio $r = \sigma_{\tilde s, B}/\sigma_{\tilde s, A}$ and different levels of population skew ($\Lambda_A$). The ratio $r$ captures how much more noisy group $B$'s signals are with respect to group $A$. In this case, group $B$ becomes increasingly more favored for selection and unfairness grows in favor of group $B$ as $r$ and $\tau_1$ increase.  }
    \label{fig:var_bias}
\end{figure}

Now let us see what happens in the two special models of disadvantage we introduced earlier: 
\begin{itemize}
    \item \textit{Case I: Negatively biased signal mean ($\beta > 0$, $r = 1$).} In this case, we see that $\mathbb{E}[\mathcal{D}] > 0$, i.e., \textit{the unfairness is in favor of group $A$}. The negatively biased signal mean for group $B$ implies that $\Phi_A(\cdot)$ stochastically dominates $\Phi_B(\cdot)$, so given any threshold $\tau_1$, the probability of any selected applicant to belong to group $A$ is strictly larger. Figure~\ref{fig:add_bias}, where we plot $\mathbb{E}[Y_A/\Lambda_A]$ and $\mathbb{E}[Y_B/\Lambda_B]$ from Equation~\eqref{eq:fairness} as a function of $\tau_1$ for different levels of bias $\beta$ and different levels of population skew $\Lambda_A$, shows this conclusively.  
    \item \textit{Case II: Higher signal variance ($\beta = 0$, $r > 1$).} In this case, the direction of unfairness depends on the sign of $\tau_1$. If $\tau_1 > 0$, $\mathbb{E}[\mathcal{D}] < 0$, i.e., \textit{unfairness actually grows in favor of group $B$}. Due to the higher variance, the distribution of $\tilde s_B$ has fatter tails, which means that the likelihood of a randomly picked applicant, who passes the threshold, to be from group $B$ increases. This can be seen in Figure~\ref{fig:var_bias} (the dashed lines representing group $B$'s normalized hire composition always lie above $1$, indicating that they are hired at rates much higher than what is demographically fair, at the expense of group $A$)\footnote{The sign of $\mathbb{E}[\mathcal{D}]$ does flip when $\tau_1 < 0$, but $\tau_1 < 0$ is generally not a useful case.}. 
\end{itemize}
The above discussion highlights that when the principal delegates to agent and the agent uses a fixed threshold-based selection policy, the selection outcomes can be \textit{unfair}. However, which group benefits from the unfairness is more nuanced and depends on how their signal distributions are related to each other. When it comes to Case II above, it is important to highlight that the disadvantaged group getting favored for selection in expectation is not necessarily a `good outcome'. In fact, we can show that the expected quality of a selected applicant decays monotonically in the level of noise in signal $\tilde s$ (as per Figure~\ref{fig:quality_tlds} in Appendix~\ref{app:extra_fig}), i.e., applicants from group $B$ who get selected are of strictly lower expected quality compared to their group $A$ counterparts. This is because the excessive noise in the signal leads to many `bad' selections from group $B$. This, in turn, can actually reinforce negative stereotypes about group $B$ in future iterations of the selection process, only harming them in the long term; this also creates unfairness \emph{within} group $B$, where less qualified candidates routinely get selected over more qualified ones.

\section{A Selection Process without Delegation}\label{sec:decentralization}

In this section, we consider a selection process where the principal makes selection decisions themselves without delegating to the agent. This is different from the previous setting in the sense that the principal can now decide which applicants to hire albeit at a high cost. In the first part of the section, we characterize how this trade-off plays out and what decisions the principal makes in order to maximize their net utility. Interestingly and surprisingly, we demonstrate how the principal's utility in this setting can actually be non-monotonic in key problem parameters, unlike the setting with delegation. Finally, we conclude the section with an analysis of the multi-group setting, showing that when the principal has the flexibility to set the selection criteria for each group and decide who to hire, the hiring outcomes are completely different from the setting with delegation, leading to significant implications for fairness. 

\subsection{Principal's Optimal Decisions}
Recall that the principal has to choose i) how many applications to review $\nrev$ and ii) what threshold $\tldtau$ on perceived quality ($\tldt$) to use to hire applicants in a way that maximizes their overall utility. The overall utility is given by:
\[
     U_{ndg}(\tldtau, \nrev) = \nrev \cdot \mathbb{P}[\tldt \geq \tldtau] \cdot \mathbb{E}[t~|~\tldt \geq \tldtau] - \nrev \cdot \crev,
\]
where $\crev$ is the cost of reviewing each additional application. Therefore, the principal's optimization is as follows: 
\begin{align}\label{opt:prof}
        \max_{\nrev \geq 0, \tldtau} \quad U_{ndg}(\tldtau, \nrev) \quad \text{s.t.} \quad \nrev \cdot \mathbb{P}[\tldt \geq \tldtau] \leq k. 
\end{align}
Our first result highlights that the solution to optimization program~\eqref{opt:prof} has an important dependence on the cost $\crev$. Further, it shows that the optimal decisions $\nrev^*$ and $\tldtau^*$ can be decoupled and computed efficiently. 

\begin{theorem}\label{thm:opt_n_tau}
The optimal threshold $\tldtau^*$ can be obtained as:
\begin{align}\label{exp:opt_tau}
    \tldtau^* 
    = ~arg\max_{\tldtau} \quad v(\tldtau) \quad \text{where~} v(\tldtau) = \frac{\sigma_t^2}{\sigma_{\tldt}}\cdot \frac{\phi\left(\tldtau/\sigma_{\tldt}\right)}{\Phi^c \left( \tldtau/\sigma_{\tldt} \right)} - \frac{\crev}{\Phi^c\left( \tldtau/\sigma_{\tldt}\right)},
\end{align}
where $\phi(\cdot)$ and $\Phi^c(\cdot)$ denote the PDF and complementary CDF of the standard normal random variable. $\tldtau^*$ is unique and can be computed efficiently. Once we solve for $\tldtau^*$, the optimal $\nrev^*$ can be obtained as follows: 
\begin{itemize}
\item If $\frac{\sigma_{t}^2}{\sigma_{\tldt}}\cdot \frac{1}{\sqrt{2\pi}} > \crev$, then $\nrev^* = \frac{k}{\mathbb{P}[\tldt \geq \tldtau^*]}$;
\item Else, $\nrev^* = 0$.
\end{itemize}
\end{theorem}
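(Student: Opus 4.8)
The plan is to collapse the two-variable optimization in~\eqref{opt:prof} into a one-dimensional problem by first obtaining a closed form for the per-applicant utility, and then to pin down the unique optimal threshold through a monotonicity argument on the first-order condition. To start, I would compute $\mathbb{E}[t \mid \tldt \geq \tldtau]$. Since $t = \alpha f + (1-\alpha)s$ and $\tldt = \alpha\tilde f + (1-\alpha)\tilde s$ are both linear combinations of the mutually independent mean-zero Gaussians $f,s,\epsilon_f,\epsilon_s$, the pair $(t,\tldt)$ is jointly mean-zero Gaussian. Writing $t = \frac{\mathrm{Cov}(t,\tldt)}{\sigma_{\tldt}^2}\tldt + \xi$ with $\xi$ independent of $\tldt$ gives $\mathbb{E}[t\mid \tldt\ge\tldtau] = \frac{\mathrm{Cov}(t,\tldt)}{\sigma_{\tldt}^2}\,\mathbb{E}[\tldt\mid\tldt\ge\tldtau]$, and a direct covariance computation using independence yields $\mathrm{Cov}(t,\tldt)=\alpha^2\sigma_f^2+(1-\alpha)^2\sigma_s^2=\sigma_t^2$. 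Combined with the standard truncated-Gaussian mean $\mathbb{E}[\tldt\mid\tldt\ge\tldtau]=\sigma_{\tldt}H(\tldtau/\sigma_{\tldt})$ (exactly as in Lemma~\ref{lem:tech_prof_util_cent}), this gives $\mathbb{E}[t\mid\tldt\ge\tldtau]=\frac{\sigma_t^2}{\sigma_{\tldt}}H(\tldtau/\sigma_{\tldt})$. Substituting into~\eqref{eq:decent_util} and cancelling the $\Phi^c$ factor against the hazard-rate denominator collapses the objective to
\[
U_{ndg}(\tldtau,\nrev)=\nrev\Bigl[\tfrac{\sigma_t^2}{\sigma_{\tldt}}\phi(\tldtau/\sigma_{\tldt})-\crev\Bigr].
\]

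Next I would decouple $\nrev$ from $\tldtau$. For fixed $\tldtau$ this objective is linear in $\nrev$, so its optimum sits at a boundary of the feasible interval $0\le\nrev\le k/\Phi^c(\tldtau/\sigma_{\tldt})$. Writing $g(\tldtau):=\tfrac{\sigma_t^2}{\sigma_{\tldt}}\phi(\tldtau/\sigma_{\tldt})-\crev$, I note that when $g(\tldtau)>0$ the constraint binds and the attained value is $k\,v(\tldtau)$ (where $v=g/\Phi^c$ is precisely the $v$ of the statement), while when $g(\tldtau)\le 0$ the best choice is $\nrev=0$ with value $0$. Since $\Phi^c>0$, the functions $v$ and $g$ share their sign, so the attainable utility as a function of $\tldtau$ is $k\max(v(\tldtau),0)$, and it suffices to maximize $v$.

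I would then show $v$ has a unique maximizer. Setting $x=\tldtau/\sigma_{\tldt}$ and differentiating $\tilde v(x)=(\tfrac{\sigma_t^2}{\sigma_{\tldt}}\phi(x)-\crev)/\Phi^c(x)$ using $\phi'(x)=-x\phi(x)$ and $(\Phi^c)'(x)=-\phi(x)$, the numerator of $\tilde v'(x)$ factors as $\phi(x)\bigl[\tfrac{\sigma_t^2}{\sigma_{\tldt}}\psi(x)-\crev\bigr]$, where $\psi(x):=\phi(x)-x\Phi^c(x)$. The crucial simplification is $\psi'(x)=-\Phi^c(x)<0$, so $\psi$ strictly decreases from $+\infty$ (as $x\to-\infty$) to $0^+$ (as $x\to+\infty$) and is a bijection onto $(0,\infty)$; hence $\psi(x)=\crev\sigma_{\tldt}/\sigma_t^2$ has a unique root $x^*$, at which $\tilde v'$ changes sign from positive to negative, making $x^*$ the unique global maximizer. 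Because $\psi$ is monotone, $x^*$ is located by one-dimensional search, giving efficiency, and $\tldtau^*=\sigma_{\tldt}x^*$.

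Finally I would read off $\nrev^*$. Positive utility is attainable for some threshold iff $\max_{\tldtau}g(\tldtau)>0$; since $\phi$ is maximized at $0$ with $\phi(0)=1/\sqrt{2\pi}$, this is exactly the condition $\tfrac{\sigma_t^2}{\sigma_{\tldt}}\tfrac{1}{\sqrt{2\pi}}>\crev$. When it holds, $v(\tldtau^*)>0$ forces the constraint to bind, giving $\nrev^*=k/\mathbb{P}[\tldt\ge\tldtau^*]$; otherwise $v\le 0$ everywhere and $\nrev^*=0$ is optimal. I expect the uniqueness and efficiency claim to be the main obstacle, as it rests on identifying the auxiliary function $\psi$ whose derivative collapses cleanly to $-\Phi^c$; by contrast the covariance and truncated-Gaussian computations are routine, and the $\nrev$-decoupling is immediate once the objective is seen to be linear in $\nrev$.
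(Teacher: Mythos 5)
Your proposal is correct and follows essentially the same route as the paper's proof: the same Gaussian-projection computation of $\mathbb{E}[t\mid\tldt\geq\tldtau]$, the same decoupling of $\nrev$ via linearity of the objective, the same reduction of the first-order condition to the monotonically decreasing function $\psi(x)=\phi(x)-x\Phi^c(x)$ (the paper's $g$ up to scaling and the constant $\crev$), and the same use of $\phi(0)=1/\sqrt{2\pi}$ for the viability threshold. Your direct sign-change argument for $\tilde v'$ even streamlines the paper's separate check that $\pm\infty$ cannot be maximizers.
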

The proof for the theorem can be found in Appendix~\ref{pf:thm_opt_tau}. 
A key insight is that the principal's optimal overall utility depends on the cost $\crev$. If the cost is too high, from the principal's point of view there exists no choice of ($\nrev^*, \tldtau^*$) that can provide strictly positive utility. In this case, the only reasonable option is to choose $\nrev^* = 0$. In particular, we have: 

\begin{corollary}\label{corr:viable}
A selection process without delegation is viable\footnote{There are other equivalent versions of Corollary~\ref{corr:viable} that can characterize the viability condition for selection processes without delegation in terms of $\tldtau^*$. For example, we can show that $\frac{\sigma_{t}^2}{\sigma_{\tldt}}\cdot \frac{1}{\sqrt{2\pi}} > \crev \iff \nrev^* > 0 \iff \tldtau^* > 0$. For details, refer to Appendix~\ref{pf:corr_viable}.} for the principal if and only if $\frac{\sigma_{t}^2}{\sigma_{\tldt}}\cdot \frac{1}{\sqrt{2\pi}} > \crev$.
\end{corollary}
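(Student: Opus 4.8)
The plan is to first pin down the meaning of ``viable'' and reduce the claim to a statement about the optimal value of program~\eqref{opt:prof}. I read ``viable'' as: there exists a feasible pair $(\nrev,\tldtau)$ yielding strictly positive utility, equivalently the optimal value of~\eqref{opt:prof} is strictly positive. The first observation is that $U_{ndg}(\tldtau,\nrev) = \nrev\cdot g(\tldtau)$ is \emph{linear} in $\nrev$, where $g(\tldtau) := \mathbb{P}[\tldt \geq \tldtau]\cdot\mathbb{E}[t \mid \tldt \geq \tldtau] - \crev$ is the net expected utility per reviewed application. Since $\nrev = 0$ always yields utility $0$, the optimum is strictly positive if and only if $g(\tldtau) > 0$ for some $\tldtau$ (in which case one pushes $\nrev$ up to the constraint $\nrev\,\mathbb{P}[\tldt \geq \tldtau] \leq k$); otherwise every feasible point gives nonpositive utility and $\nrev^* = 0$ is optimal. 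Thus viability is equivalent to $\sup_{\tldtau} g(\tldtau) > 0$.

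The second step is to obtain $g$ in closed form and maximize it. Because $t$ and $\tldt$ are jointly Gaussian with zero means, and (using Assumption~\ref{aspt:gauss_noise} together with the mutual independence of $f,s,\epsilon_f,\epsilon_s$) $\mathrm{Cov}(t,\tldt) = \alpha^2\sigma_f^2 + (1-\alpha)^2\sigma_s^2 = \sigma_t^2$, the conditional-mean formula for a truncated bivariate Gaussian gives $\mathbb{E}[t \mid \tldt \geq \tldtau] = \tfrac{\sigma_t^2}{\sigma_{\tldt}}\cdot\tfrac{\phi(\tldtau/\sigma_{\tldt})}{\Phi^c(\tldtau/\sigma_{\tldt})}$ — exactly the quantity already appearing inside $v(\tldtau)$ in Theorem~\ref{thm:opt_n_tau}. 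Multiplying by $\mathbb{P}[\tldt \geq \tldtau] = \Phi^c(\tldtau/\sigma_{\tldt})$ cancels the complementary CDF and yields the clean form $g(\tldtau) = \tfrac{\sigma_t^2}{\sigma_{\tldt}}\,\phi(\tldtau/\sigma_{\tldt}) - \crev$.

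The remaining step is a one-line maximization: $\phi$ is the standard normal density, maximized at argument $0$ with $\phi(0) = 1/\sqrt{2\pi}$, so $\sup_{\tldtau} g(\tldtau) = \tfrac{\sigma_t^2}{\sigma_{\tldt}}\cdot\tfrac{1}{\sqrt{2\pi}} - \crev$, attained at $\tldtau = 0$. Combined with the reduction above, the process is viable iff this supremum is strictly positive, i.e.\ iff $\tfrac{\sigma_t^2}{\sigma_{\tldt}}\cdot\tfrac{1}{\sqrt{2\pi}} > \crev$, as claimed. Equivalently, one can shortcut the last two steps by invoking the $\nrev^*$ dichotomy of Theorem~\ref{thm:opt_n_tau} directly, since ``viable'' is precisely the condition $\nrev^* > 0$; I would note in passing that $v(\tldtau) = g(\tldtau)/\mathbb{P}[\tldt \geq \tldtau]$ shares the sign of $g(\tldtau)$, which reconciles the $\tldtau^*$-based characterizations in the footnote.

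Since Theorem~\ref{thm:opt_n_tau} is already in hand, the genuinely new work is small and the ``obstacle'' is more conceptual than technical. The only real subtlety is the mildly counterintuitive fact that the per-review quality contribution $\mathbb{P}[\tldt \geq \tldtau]\,\mathbb{E}[t \mid \tldt \geq \tldtau]$ peaks at the \emph{median} threshold $\tldtau = 0$ rather than at a high, highly selective threshold (raising the threshold improves per-hire quality but shrinks throughput per review, and the product is extremized at $\tldtau=0$), so that viability is governed by the peak value $\phi(0) = 1/\sqrt{2\pi}$. Care is also needed to keep the inequality strict: at equality the optimum equals $0$, which does not count as viable.
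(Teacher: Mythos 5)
Your proposal is correct and takes essentially the same route as the paper: the paper's proof (Part 2 of the proof of Theorem~\ref{thm:opt_n_tau}) likewise reduces viability to the sign of $\frac{\sigma_t^2}{\sigma_{\tldt}}\,\phi(\tldtau/\sigma_{\tldt}) - \crev$ --- your $g(\tldtau)$, which is exactly the numerator of the paper's $v(\tldtau)$ --- and then invokes the fact that $\phi$ is maximized at $0$ with value $1/\sqrt{2\pi}$. The only (cosmetic) difference is that you maximize the per-review utility $g$ directly at $\tldtau = 0$, whereas the paper argues the two implications separately via $v(\tldtau^*) > 0$ and $v(0) > 0$; the sign equivalence $v = g/\Phi^c$ that you note is precisely what makes the two presentations interchangeable.
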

For too large costs, the utility earned by the principal is trivial (zero) and it is not viable for them to conduct the selection process on their own.

\subsection{Single Group Setting: A Utilitarian View}
Using the characterization of the optimal decisions in the previous section, we can express the overall expected utility earned by the principal in the selection process without delegation. Note that we are operating in the regime where the cost $\crev$ is small enough ($< \frac{\sigma_{t}^2}{\sigma_{\tldt}}\cdot \frac{1}{\sqrt{2\pi}}$) that such a selection process in viable in the first place.

\begin{lemma}\label{lem:decent_util}
Suppose that $\crev < \frac{\sigma_{t}^2}{\sigma_{\tldt}}\cdot \frac{1}{\sqrt{2\pi}}$ and $\tldtau^*$ is the optimal selection threshold used by the principal. Then the expected utility earned by the principal per hired applicant is equal to $\frac{\sigma_t^2}{\sigma_{\tldt}^2}\cdot \tldtau^*$.
\end{lemma}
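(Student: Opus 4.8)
The plan is to compute the per-hired-applicant utility $\mathbb{E}[t \mid \tldt \geq \tldtau^*]$ directly, exploiting the joint Gaussianity of $t$ and $\tldt$, and then simplify using the first-order optimality condition that $\tldtau^*$ satisfies. First I would recall the setup: since $t = \alpha f + (1-\alpha)s$ and $\tldt = \alpha \tilde f + (1-\alpha)\tilde s$ are both mean-zero Gaussians built from the same underlying $(f,s)$ plus independent noise, the pair $(t, \tldt)$ is jointly Gaussian with mean zero. The key quantity is the covariance $\mathrm{Cov}(t,\tldt)$: because the noise terms $\epsilon_f,\epsilon_s$ are independent of $f,s$ (Assumption~\ref{aspt:gauss_noise}), the cross terms involving noise vanish, and I expect $\mathrm{Cov}(t,\tldt) = \alpha^2\sigma_f^2 + (1-\alpha)^2\sigma_s^2 = \sigma_t^2$. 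This is the crucial identity that makes the final expression clean.

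Next I would invoke the standard formula for the conditional expectation of one jointly Gaussian variable given that another exceeds a threshold. Writing $\tldt = \sigma_{\tldt} Z$ for a standard normal $Z$, the regression of $t$ on $\tldt$ gives
\[
\mathbb{E}[t \mid \tldt \geq \tldtau^*] = \frac{\mathrm{Cov}(t,\tldt)}{\sigma_{\tldt}^2}\cdot \mathbb{E}[\tldt \mid \tldt \geq \tldtau^*] = \frac{\sigma_t^2}{\sigma_{\tldt}^2}\cdot \mathbb{E}[\tldt \mid \tldt \geq \tldtau^*],
\]
using $\mathrm{Cov}(t,\tldt)=\sigma_t^2$. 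The inner conditional expectation of a truncated Gaussian is $\mathbb{E}[\tldt \mid \tldt \geq \tldtau^*] = \sigma_{\tldt}\cdot \phi(\tldtau^*/\sigma_{\tldt})/\Phi^c(\tldtau^*/\sigma_{\tldt})$, i.e. $\sigma_{\tldt}$ times the hazard rate. Substituting yields
\[
\mathbb{E}[t \mid \tldt \geq \tldtau^*] = \frac{\sigma_t^2}{\sigma_{\tldt}}\cdot \frac{\phi(\tldtau^*/\sigma_{\tldt})}{\Phi^c(\tldtau^*/\sigma_{\tldt})}.
\]
This is the per-applicant quality but not yet the target expression $\frac{\sigma_t^2}{\sigma_{\tldt}^2}\tldtau^*$.

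The final step — which I expect to be the main obstacle, or at least the only nonroutine move — is to reconcile this hazard-rate expression with the claimed closed form $\frac{\sigma_t^2}{\sigma_{\tldt}^2}\tldtau^*$. The bridge is the first-order optimality condition for $\tldtau^*$ coming from Theorem~\ref{thm:opt_n_tau}. Setting $v'(\tldtau)=0$ in \eqref{exp:opt_tau} and simplifying should produce an identity that the optimal threshold must satisfy; I anticipate it takes the form $\frac{\sigma_t^2}{\sigma_{\tldt}}\cdot \frac{\phi(\tldtau^*/\sigma_{\tldt})}{\Phi^c(\tldtau^*/\sigma_{\tldt})} = \frac{\sigma_t^2}{\sigma_{\tldt}^2}\tldtau^* $ after cancellation, because differentiating the ratio of $\phi$ to $\Phi^c$ and using the standard normal identity $\phi'(x) = -x\phi(x)$ tends to collapse the hazard-rate terms into a linear term in $\tldtau^*$. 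So the strategy is: differentiate $v$, use $\phi'(x)=-x\phi(x)$ to simplify, solve the stationarity equation for the hazard rate, and substitute back. I would double-check the bookkeeping of the $\crev$ term, since Lemma~\ref{lem:decent_util} conditions on viability ($\crev$ small enough) and the optimality equation carries the cost, but I expect the cost-dependent pieces to cancel against the selection-probability factors, leaving precisely $\frac{\sigma_t^2}{\sigma_{\tldt}^2}\tldtau^*$.
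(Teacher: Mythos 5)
Your overall strategy is the paper's: compute $\E[t\mid\tldt\geq\tldtau]$ via the Gaussian regression coefficient $\mathrm{Cov}(t,\tldt)/\sigma_{\tldt}^2=\sigma_t^2/\sigma_{\tldt}^2$ and the truncated-normal mean (this is exactly Part 1 of the proof of Theorem~\ref{thm:opt_n_tau}), then invoke the first-order condition for $\tldtau^*$ and the identity $\phi'(x)=-x\phi(x)$ (this is Claim~\ref{clm:tau_sol}). That part is fine.

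The gap is precisely the $\crev$ bookkeeping you flagged, and it does not resolve the way you expect. The stationarity condition is
\begin{equation*}
\frac{\sigma_t^2}{\sigma_{\tldt}}\,\phi\!\left(\frac{\tldtau^*}{\sigma_{\tldt}}\right)-\frac{\sigma_t^2}{\sigma_{\tldt}^2}\,\tldtau^*\,\Phi^c\!\left(\frac{\tldtau^*}{\sigma_{\tldt}}\right)-\crev=0,
\end{equation*}
and dividing by $\Phi^c(\tldtau^*/\sigma_{\tldt})$ gives
\begin{equation*}
\frac{\sigma_t^2}{\sigma_{\tldt}}\cdot\frac{\phi(\tldtau^*/\sigma_{\tldt})}{\Phi^c(\tldtau^*/\sigma_{\tldt})}-\frac{\crev}{\Phi^c(\tldtau^*/\sigma_{\tldt})}=\frac{\sigma_t^2}{\sigma_{\tldt}^2}\,\tldtau^*.
\end{equation*}
The cost term does not cancel; it remains on the left as $-\crev/\cP[\tldt\geq\tldtau^*]$. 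Consequently the identity you anticipate, $\frac{\sigma_t^2}{\sigma_{\tldt}}H(\tldtau^*/\sigma_{\tldt})=\frac{\sigma_t^2}{\sigma_{\tldt}^2}\tldtau^*$, is false whenever $\crev>0$: the gross conditional quality $\E[t\mid\tldt\geq\tldtau^*]$ strictly exceeds $\frac{\sigma_t^2}{\sigma_{\tldt}^2}\tldtau^*$ by the amortized review cost $\crev/\cP[\tldt\geq\tldtau^*]$. The quantity the lemma calls ``expected utility per hired applicant'' is the \emph{net} utility $v(\tldtau^*)=\E[t\mid\tldt\geq\tldtau^*]-\crev/\cP[\tldt\geq\tldtau^*]$ (i.e., $U_{ndg}(\tldtau^*,\nrev^*)/k$), not the conditional quality alone. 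Once you target $v(\tldtau^*)$ rather than $\E[t\mid\tldt\geq\tldtau^*]$, the displayed rearrangement of the first-order condition immediately yields $v(\tldtau^*)=\frac{\sigma_t^2}{\sigma_{\tldt}^2}\tldtau^*$ and your argument closes; as written, the final step asserts an identity that cannot hold.
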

The proof can be found in Appendix~\ref{pf:lem_decent_util}. Our main goal here is to understand how the optimal overall expected utility depends on problem parameters. We are particularly interested in the dependence on $\crev$ and $\alpha$. While $\crev$ clearly affects the average quality of hires, it also affects the overall cost. So it is not apriori clear how the net utility might be affected. On the other hand, $\alpha$ captures the degree to which the principal prioritizes applicant fit when measuring quality---as such it is an important intrinsic preference parameter for the principal and needs to be studied. We present two results below capturing these dependencies: 

\begin{claim}\label{clm:util_crev}
In a selection process without delegation which is viable, the optimal overall expected utility for the principal is monotonically decreasing in the cost $\crev$ incurred in reviewing each application. 
\end{claim}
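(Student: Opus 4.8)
The plan is to reduce the statement to a one-line comparative-statics argument after rewriting the optimal utility as the value function of the threshold optimization in Theorem~\ref{thm:opt_n_tau}. Recall that in the viable regime the optimal review count binds the hiring constraint, $\nrev^* = k/\mathbb{P}[\tldt \geq \tldtau^*]$, so substituting into $U_{ndg}$ gives
\[
U_{ndg}(\tldtau^*, \nrev^*) = k\,\mathbb{E}[t \mid \tldt \geq \tldtau^*] - \frac{k\,\crev}{\mathbb{P}[\tldt \geq \tldtau^*]} = k\,v(\tldtau^*),
\]
where the last equality re-expresses the per-hire utility using the Gaussian regression identity $\mathbb{E}[t\mid\tldt\geq\tldtau] = \tfrac{\sigma_t^2}{\sigma_{\tldt}}\,\phi(\tldtau/\sigma_{\tldt})/\Phi^c(\tldtau/\sigma_{\tldt})$ (which holds because $\mathrm{Cov}(t,\tldt)=\alpha^2\sigma_f^2+(1-\alpha)^2\sigma_s^2=\sigma_t^2$) together with $\mathbb{P}[\tldt\geq\tldtau]=\Phi^c(\tldtau/\sigma_{\tldt})$. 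Since Theorem~\ref{thm:opt_n_tau} states $\tldtau^* = \argmax_{\tldtau} v(\tldtau)$, the optimal utility is exactly $U^*(\crev) = k \cdot \max_{\tldtau \in \mathbb{R}} v(\tldtau;\crev)$. The crucial structural feature is that the maximization runs over all of $\mathbb{R}$, a feasible set that does not depend on $\crev$.

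Next I would exploit how $\crev$ enters $v$. Viewing $\crev$ as a parameter, for every fixed threshold $\tldtau$ the map
\[
\crev \mapsto v(\tldtau;\crev) = \mathbb{E}[t \mid \tldt \geq \tldtau] - \frac{\crev}{\Phi^c(\tldtau/\sigma_{\tldt})}
\]
is affine with slope $-1/\Phi^c(\tldtau/\sigma_{\tldt}) < 0$, hence strictly decreasing in $\crev$ (the slope is finite and negative since $0 < \Phi^c(\cdot) \le 1$). The value function of a maximization inherits strict monotonicity from a uniformly strictly decreasing family: for two viable cost levels $\crev < \crev'$, letting $\tldtau'$ attain the maximum at $\crev'$,
\[
U^*(\crev') = k\,v(\tldtau';\crev') < k\,v(\tldtau';\crev) \le k \max_{\tldtau} v(\tldtau;\crev) = U^*(\crev).
\]
Thus $U^*$ is strictly decreasing in $\crev$ on the viable range, proving the claim. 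Equivalently, the envelope theorem yields $\tfrac{d}{d\crev}U^*(\crev) = k\,\partial_{\crev} v(\tldtau;\crev)\big|_{\tldtau=\tldtau^*} = -k/\Phi^c(\tldtau^*/\sigma_{\tldt}) < 0$, which also formalizes the intuition that the indirect effect of $\crev$ through the optimal threshold is second order and the direct cost effect dominates.

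I do not anticipate a serious obstacle; the only points requiring care are (i) confirming the identity $U^* = k\max_{\tldtau} v(\tldtau;\crev)$, which is immediate once the binding constraint $\nrev^*\mathbb{P}[\tldt\geq\tldtau^*]=k$ from Theorem~\ref{thm:opt_n_tau} is substituted, and (ii) ensuring both cost levels in the comparison lie in the viable region $\crev < \tfrac{\sigma_t^2}{\sigma_{\tldt}}\tfrac{1}{\sqrt{2\pi}}$ of Corollary~\ref{corr:viable}, so that the closed-form value function applies and we do not inadvertently compare against the degenerate $\nrev^*=0$ branch where utility is zero. If monotonicity across the viability boundary were of interest, one would separately note that $U^*$ drops continuously to $0$ there, but within the viable regime the argument above already gives strict monotonic decrease.
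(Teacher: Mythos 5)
Your proof is correct, but it takes a genuinely different route from the paper's. The paper argues via two specific structural facts: Lemma~\ref{lem:decent_util}, which gives the optimal per-hire utility in closed form as $\frac{\sigma_t^2}{\sigma_{\tldt}^2}\tldtau^*$, and Claim~\ref{clm:tau_n_dep_c}, which shows that the optimal threshold $\tldtau^*$ (the root of the monotonically decreasing function $g$) falls as $\crev$ rises; composing the two yields the monotonicity. You bypass both facts with a value-function/envelope argument: the feasible set of thresholds is all of $\mathbb{R}$ and hence independent of $\crev$, and for each fixed $\tldtau$ the objective $v(\tldtau;\crev)$ is affine and strictly decreasing in $\crev$ with slope $-1/\Phi^c(\tldtau/\sigma_{\tldt})<0$, so the maximum over $\tldtau$ is strictly decreasing in $\crev$. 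Your approach is more elementary and more general---it survives any modification of the distributional assumptions that preserves the reduction $U^* = k\max_{\tldtau} v(\tldtau;\crev)$, and it needs no information about how $\tldtau^*$ itself moves with $\crev$. What the paper's route buys in exchange is the explicit identity $v(\tldtau^*) = \frac{\sigma_t^2}{\sigma_{\tldt}^2}\tldtau^*$ and the comparative statics of the threshold, both of which are reused elsewhere (e.g.\ in Lemma~\ref{lem:decent_adv_better_util} and Claim~\ref{clm:tau_dep_sigmat}). Your caveat that both cost levels must lie in the viable region $\crev < \frac{\sigma_t^2}{\sigma_{\tldt}}\cdot\frac{1}{\sqrt{2\pi}}$ is the right one to flag, and within that region your argument in fact delivers strict decrease, slightly stronger than the claim as stated.
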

The proof for this claim follows from Lemma~\ref{lem:decent_util} once we make the observation that a higher $\crev$ lowers the optimal selection threshold $\tldtau^*$ (Claim~\ref{clm:tau_n_dep_c} in Appendix~\ref{app:monotonicity}). The intuition is the following: higher reviewing cost means that the principal wants to fill all available positions by reviewing only a small number of applications---this necessitates lowering of the selection standard. 

\begin{claim}\label{clm:util_alpha}
In a selection process without delegation which is viable, the optimal overall expected utility for the principal is \textbf{non-monotonic} in the parameter $\alpha$ that measures their weight on applicant fit. 
\end{claim}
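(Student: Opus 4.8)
The plan is to collapse the entire $\alpha$-dependence of the optimal utility onto a single scalar and then show that scalar is non-monotone. First I would invoke Lemma~\ref{lem:decent_util}, which (since the constraint binds in the viable regime, so exactly $k$ applicants are hired in expectation) gives the optimal overall utility as $U^*_{ndg} = k\cdot\frac{\sigma_t^2}{\sigma_{\tldt}^2}\,\tldtau^*$. The key observation is that both $\tldtau^*$ and the prefactor depend on $\alpha$ \emph{only} through the composite quantity $W \triangleq \sigma_t^2/\sigma_{\tldt}$. Indeed, writing the first-order condition for the program in Theorem~\ref{thm:opt_n_tau} in terms of $g \triangleq \tldtau/\sigma_{\tldt}$ and the auxiliary function $\psi(g) \triangleq \phi(g) - g\,\Phi^c(g)$, one obtains $W\,\psi(g^*) = \crev$. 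Since $\psi$ is strictly decreasing (because $\psi'(g) = -\Phi^c(g) < 0$) with $\psi(0) = 1/\sqrt{2\pi}$, the viability condition of Corollary~\ref{corr:viable} is precisely $g^* > 0$, and $g^* = \psi^{-1}(\crev/W)$ is a function of $W$ alone.

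Substituting back, $U^*_{ndg}/k = \frac{\sigma_t^2}{\sigma_{\tldt}}\cdot g^* = W\cdot\psi^{-1}(\crev/W) =: \Gamma(W)$. I would then show $\Gamma$ is strictly increasing on the viable domain $W > \crev\sqrt{2\pi}$: as $W$ grows the factor $W$ grows, and because $\crev/W$ shrinks while $\psi^{-1}$ is decreasing, the factor $g^* = \psi^{-1}(\crev/W)$ also grows, so a product of two positive increasing functions is increasing (equivalently $\Gamma'(W) = g^* + \crev/(W\,\Phi^c(g^*)) > 0$ whenever $g^* > 0$). Consequently $U^*_{ndg}$ is monotone in $\alpha$ if and only if $W(\alpha)$ is, and it suffices to prove $W(\alpha) = \frac{\alpha^2\sigma_f^2 + (1-\alpha)^2\sigma_s^2}{\sqrt{\alpha^2\sigma_{\tilde f}^2 + (1-\alpha)^2\sigma_{\tilde s}^2}}$ is non-monotone on $(0,1)$.

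For the last step I would avoid solving for an interior critical point and instead just compare the sign of $W'$ at the two endpoints. Writing $P(\alpha) = \sigma_t^2$ and $Q(\alpha) = \sigma_{\tldt}^2$, the sign of $W'$ equals that of $P'Q - \tfrac12 P Q'$. Evaluating at $\alpha = 0$ yields $-\sigma_s^2\,\sigma_{\tilde s}^2 < 0$, and at $\alpha = 1$ yields $\sigma_f^2\,\sigma_{\tilde f}^2 > 0$, for \emph{every} choice of positive variances. Since $W$ is smooth on $[0,1]$ with $W'(0) < 0 < W'(1)$, it must decrease near $0$ and increase near $1$, hence is non-monotone (in fact U-shaped, with an interior minimum); composing with the strictly increasing $\Gamma$ preserves this pattern, so $U^*_{ndg}(\alpha)$ is non-monotone. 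To keep the optimum well-defined across the whole interval I would take $\crev$ small enough that $\crev\sqrt{2\pi} < \min_{\alpha\in[0,1]} W(\alpha)$, which guarantees viability for all $\alpha$.

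The main obstacle is the reduction in the first two paragraphs: massaging the first-order condition into the clean form $W\,\psi(g^*) = \crev$ and confirming that $\alpha$ enters only through $W$ (this uses $\mathrm{Cov}(t,\tldt) = \sigma_t^2$, which follows from the mutual independence in Assumption~\ref{aspt:gauss_noise}, together with $\E[t \mid \tldt] = (\sigma_t^2/\sigma_{\tldt}^2)\,\tldt$). Once that reduction and the monotonicity of $\Gamma$ are established, the non-monotonicity of $W$ is an essentially parameter-free endpoint computation, so I expect that part to be routine. A minor subtlety worth flagging is the viability qualifier: the U-shape holds on any sub-interval of $\alpha$ on which the process is viable, and taking $\crev$ small renders all of $(0,1)$ viable, giving non-monotonicity generically rather than only for handpicked parameters.
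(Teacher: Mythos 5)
Your proof is correct, and it reaches the conclusion by a route that is structurally parallel to the paper's but substantially more rigorous. The paper argues (i) the optimal utility is increasing in $\sigma_t$ (via Claim~\ref{clm:tau_dep_sigmat}) and (ii) $\sigma_t$ is $U$-shaped in $\alpha$, and then leans on the numerics of Figure~\ref{fig:util_vs_alpha} to conclude; as written, that chain has a gap, because when $\alpha$ varies the utility does not depend on $\alpha$ through $\sigma_t$ alone --- $\sigma_{\tldt}$ moves simultaneously, and Claim~\ref{clm:tau_dep_sigmat} is proved holding the noise contribution fixed. Your reduction fixes exactly this: you identify the single sufficient statistic $W=\sigma_t^2/\sigma_{\tldt}$ through which all of the $\alpha$-dependence flows (this is visible in the paper's own Claim~\ref{clm:tau_sol}, where the first-order condition is $W\bigl(\phi(z)-z\,\Phi^c(z)\bigr)=\crev$ and the optimal per-hire value is $Wz^*$), prove analytically that $\Gamma(W)=W\,\psi^{-1}(\crev/W)$ is increasing on the viable range, and then establish non-monotonicity of $W(\alpha)$ by the endpoint sign computation $W'(0)<0<W'(1)$, which holds for every choice of positive variances. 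What your approach buys is a parameter-free, figure-free proof (and the observation that viability for all $\alpha$ can be guaranteed by taking $\crev$ small); what the paper's buys is a more interpretable narrative in terms of applicant-pool diversity $\sigma_t$, at the cost of relying on a plot for the composite step. The only stylistic caveat is that Claim~\ref{clm:util_alpha} is an existence-of-non-monotonicity statement, so your stronger "generic $U$-shape" conclusion is more than is needed, but it is welcome.
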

\begin{figure}
    \centering
    \subfloat[$v(\tldtau^*)$ vs $\sigma_t$]{\includegraphics[width=0.3\textwidth]{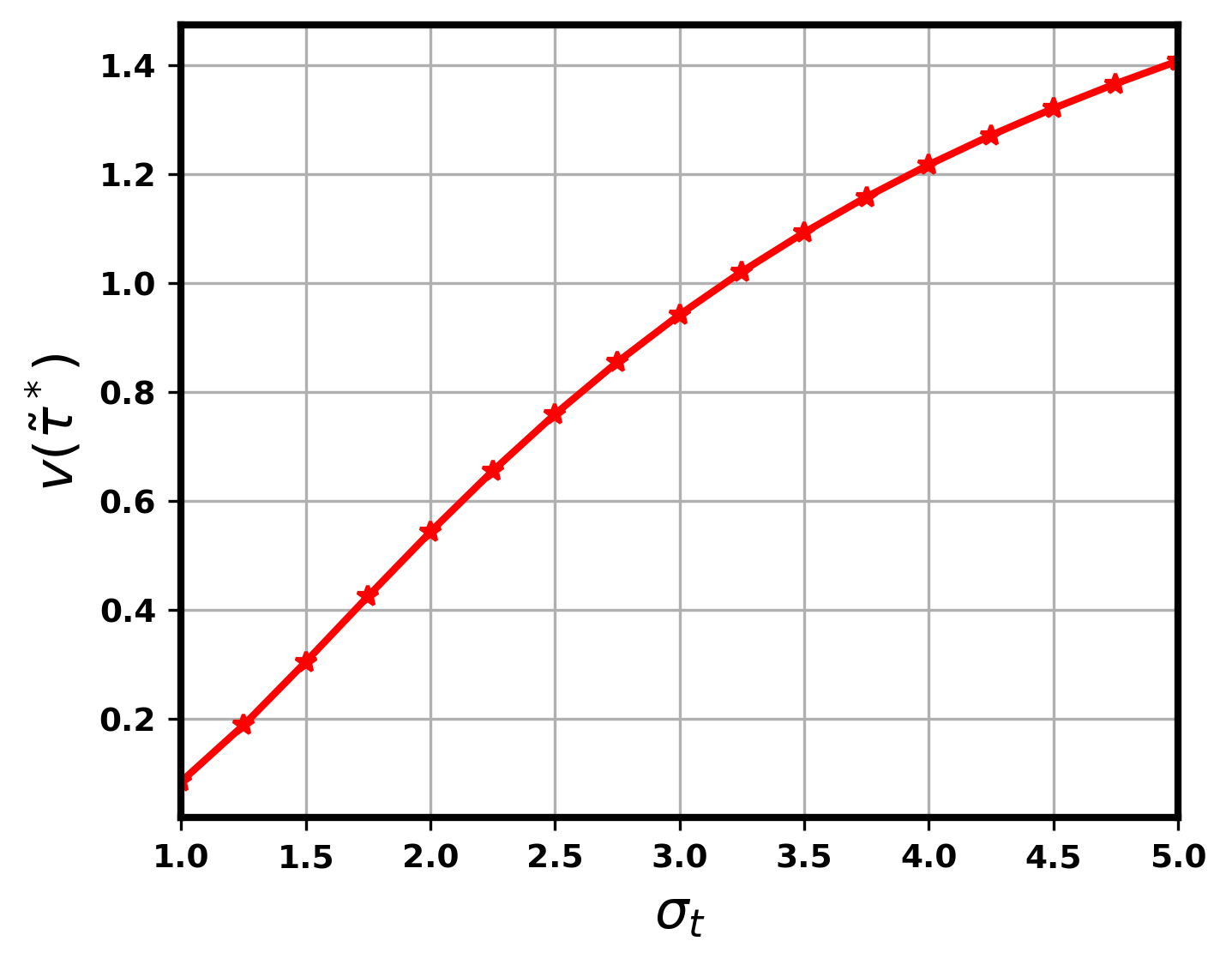}}
    \subfloat[$\sigma_t$ vs $\alpha$]{\includegraphics[width=0.3\textwidth]{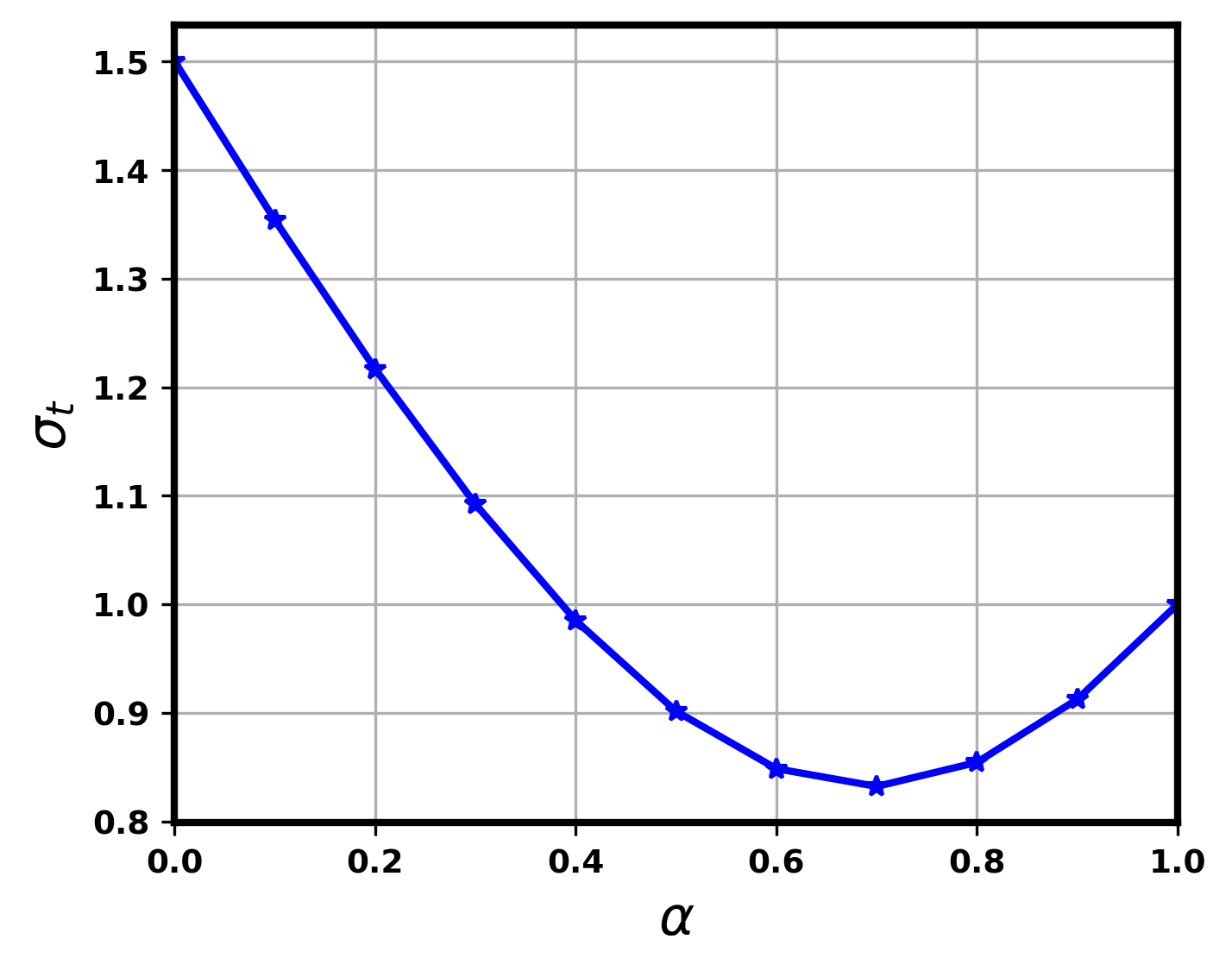}}  
    \subfloat[$v(\tldtau^*)$ vs $\alpha$]{\includegraphics[width=0.3\textwidth]{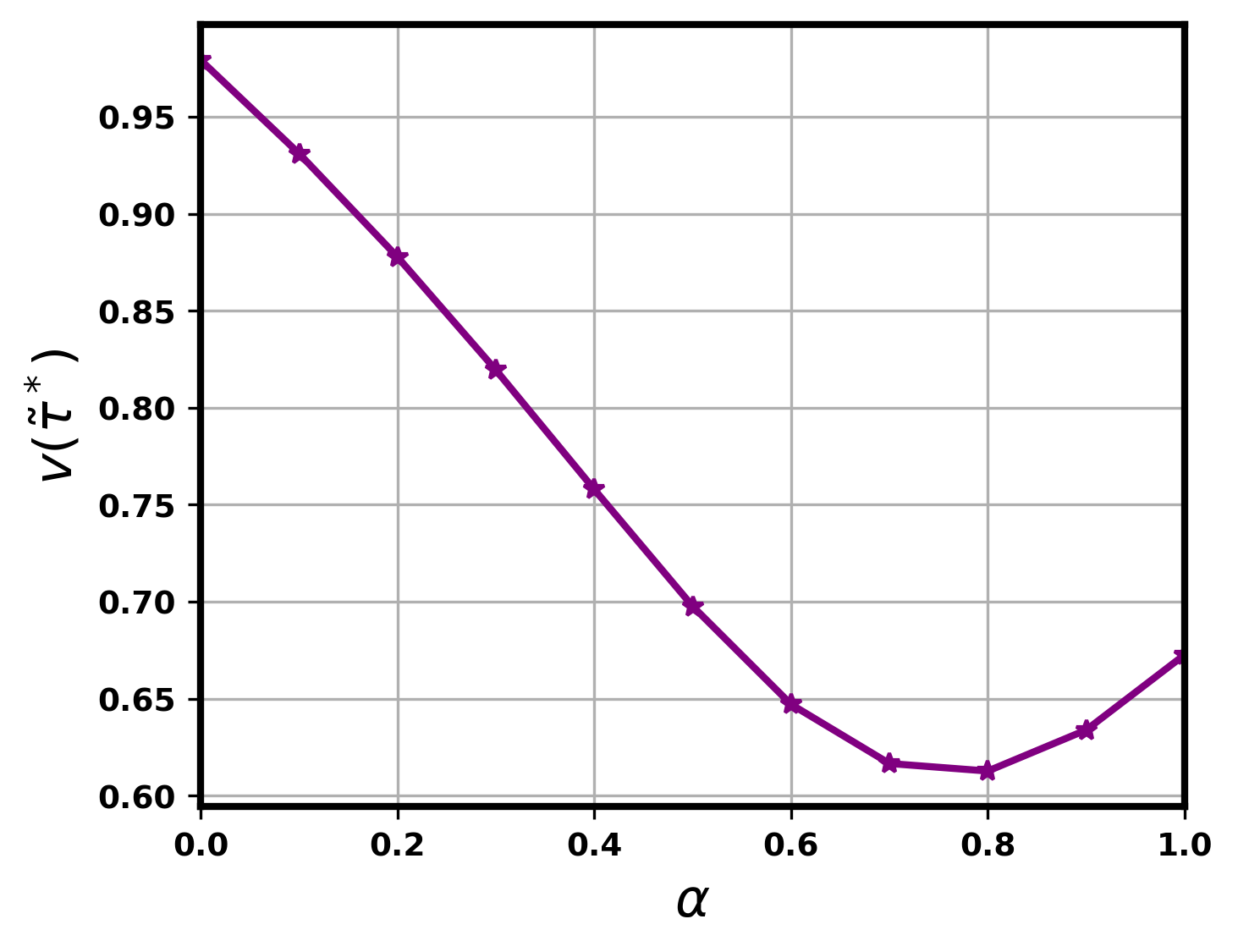}}
    \caption{The principal's net expected utility per hired applicant $v(\tldtau^*)$ is non-monotonic in $\alpha$ when the principal does not delegate. Parameter combinations for sub-figures: (a) $\sigma_e = 2$, $\crev = 0.1$. (b) $\sigma_s = 1.5$, $\sigma_f = 1$. (c) $\sigma_s = 1.5$, $\sigma_f = 1$, $\sigma_{ef} = \sigma_{es} = 0.5$, $\crev = 0.1$. }
    \label{fig:util_vs_alpha}
\end{figure}
Figure~\ref{fig:util_vs_alpha} shows that while the optimal overall expected utility for the principal is monotonically increasing in $\sigma_t$ (a more diverse applicant pool offers better utility on average, follows from Claim~\ref{clm:tau_dep_sigmat} in Appendix~\ref{app:monotonicity}), $\sigma_t$ itself is non-monotonic in $\alpha$. In particular, $\sigma_t$ is $U$-shaped as $\alpha$ increases from $0$ to $1$ (since $\sigma_t = \sqrt{\alpha^2 \sigma_f^2 + (1-\alpha)^2 \sigma_s^2}$ which is convex in $\alpha$ and attains its minimum somewhere in $(0, 1)$). This leads to the overall optimal expected utility to be non-monotonic in $\alpha$. Basically, the intuition is that at intermediate levels of $\alpha$ (when the principal puts approximately equal weights on both applicant fit and ability), finding candidates who are good enough on both metrics becomes harder leading to decreased expected utilities for the principal. Note that this trend is in sharp contrast to the setting with delegation where we showed that principal utilities are monotonically decreasing in $\alpha$.

\subsection{Multi-Group Setting: A Fairness Approach}

Finally, we consider a setting where the applicant pool is mixed and consists of applicants from groups $A$ and $B$. Recall that groups $A$ and $B$ are inherently identical, but Group $B$ is disadvantaged in the sense that either their signals are noisier (with higher variance, i.e., $\sigma_{\tilde s, A} < \sigma_{\tilde s, B}$) or the distribution mean is negatively biased with respect to Group $A$ (bias of $-\beta$). Note that a fundamental difference of this setting with the previous one (with delegation) is that \textit{the principal has complete freedom to set individually optimized selection thresholds for each group in order to maximize utility} and does not suffer from the ``naivety'' of the agent.

We are again interested in the group-wise composition of hires in order to understand the fairness implications of not delegating. To that end, we first show that the disadvantaged group will end up providing strictly worse expected utility per hired applicant for the principal if the disadvantage manifests in the form of higher signal variance. However, if the bias is additive and \emph{fully known} to the principal, it can be accounted for and a principal can simultaneously maximize utility and satisfy demographic parity. .

\begin{lemma}\label{lem:decent_adv_better_util}
Consider a selection process where the principal does not delegate and which is viable. Further, suppose that the principal is allowed to optimize selection thresholds $\tldtau_i^*$ individually for each group $i \in \{A, B\}$, with $v_i(\tldtau_i^*) = \mathbb{E}\left[ t_i~|~\tldt_i \geq \tldtau_i^* \right] - \frac{\crev}{\mathbb{P}\left[ \tldt_i \geq \tldtau_i^*\right]}$ representing the principal's optimal net expected utility for every hired applicant from group $i$. In this case, for $\alpha \in(0,1)$, we have that
\begin{align*}
        &\sigma_{\tilde s, A} = \sigma_{\tilde s, B}~\text{and}~\beta > 0 \implies v_A(\tldtau_A^*) = v_B(\tldtau_B^*) \tag{Biased mean model},\\
         &\sigma_{\tilde s, A} < \sigma_{\tilde s, B}~\text{and}~\beta \geq 0 \implies v_A(\tldtau_A^*) > v_B(\tldtau_B^*). \quad \tag{Disparate variance model}
\end{align*}

Further, in the first case (with negatively biased signal mean for group $B$, but no variance disparity), group $B$'s optimal threshold $\tldtau_B^*$ satisfies $\tldtau_B^* = \tldtau_A^* - (1-\alpha)\beta$. 
\end{lemma}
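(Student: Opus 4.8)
The plan is to reduce both claims to the one-dimensional problem already solved in Theorem~\ref{thm:opt_n_tau} and Lemma~\ref{lem:decent_util} by exploiting the joint Gaussianity of $(t_i,\tldt_i)$ within each group. First I would record the joint law. Since both groups share the same true types (Assumption~\ref{aspt:equal}), for each $i \in \{A,B\}$ we have $t_i \sim \mathcal{N}(0,\sigma_t)$ with $\sigma_t$ group-independent, while $\tldt_i$ is Gaussian with variance $\sigma_{\tldt,i}^2 = \alpha^2\sigma_{\tilde f}^2 + (1-\alpha)^2\sigma_{\tilde s,i}^2$ and mean $\mu_i$, where $\mu_A = 0$ and $\mu_B = -(1-\alpha)\beta$ (nonzero only when $\beta>0$). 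The crucial observation is that under the mutual-independence part of Assumption~\ref{aspt:gauss_noise} the noise cancels in the cross term, so $\mathrm{Cov}(t_i,\tldt_i) = \alpha^2\sigma_f^2 + (1-\alpha)^2\sigma_s^2 = \sigma_t^2$ for \emph{both} groups, independent of any bias or variance disparity. Gaussian conditioning then gives $\E[t_i \mid \tldt_i = x] = \tfrac{\sigma_t^2}{\sigma_{\tldt,i}^2}(x-\mu_i)$, and integrating over the truncation $\tldt_i \ge \tldtau_i$ yields, after the standardization $z_i = (\tldtau_i-\mu_i)/\sigma_{\tldt,i}$,
\[
v_i(\tldtau_i) = \frac{\sigma_t^2}{\sigma_{\tldt,i}}\,H(z_i) - \frac{\crev}{\Phi^c(z_i)}.
\]

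For the biased-mean model we have $\sigma_{\tilde s,A}=\sigma_{\tilde s,B}$, hence $\sigma_{\tldt,A}=\sigma_{\tldt,B}=:\sigma_{\tldt}$, so the two objectives become the \emph{same} function of the standardized variable $z$. The maximizer $z^*$ is therefore common to both groups, the optimal values coincide, giving $v_A(\tldtau_A^*) = v_B(\tldtau_B^*)$. Reading off $z_A^* = z_B^*$ with $\mu_A=0$, $\mu_B=-(1-\alpha)\beta$ gives $\tldtau_A^*/\sigma_{\tldt} = (\tldtau_B^* + (1-\alpha)\beta)/\sigma_{\tldt}$, i.e. $\tldtau_B^* = \tldtau_A^* - (1-\alpha)\beta$, which is exactly the claimed threshold shift.

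For the disparate-variance model I would view the optimal value as a function of the single parameter $\sigma_{\tldt}$: writing $g(\sigma_{\tldt}) = \max_z\big[\tfrac{\sigma_t^2}{\sigma_{\tldt}}H(z) - \tfrac{\crev}{\Phi^c(z)}\big]$, we have $v_A(\tldtau_A^*) = g(\sigma_{\tldt,A})$ and $v_B(\tldtau_B^*) = g(\sigma_{\tldt,B})$ \emph{regardless of $\beta$}, since any additive bias is absorbed into $\mu_B$ and disappears under standardization (this is why the $\beta \ge 0$ hypothesis is harmless here). Because $\sigma_{\tilde s,A} < \sigma_{\tilde s,B}$ forces $\sigma_{\tldt,A} < \sigma_{\tldt,B}$ for $\alpha \in (0,1)$, it suffices to show $g$ is strictly decreasing. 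The bracketed integrand is strictly decreasing in $\sigma_{\tldt}$ for every fixed $z$ (only the first term depends on $\sigma_{\tldt}$, and $H(z)>0$), so the envelope theorem gives $g'(\sigma_{\tldt}) = -\tfrac{\sigma_t^2}{\sigma_{\tldt}^2}H(z^*) < 0$, whence $v_A(\tldtau_A^*) = g(\sigma_{\tldt,A}) > g(\sigma_{\tldt,B}) = v_B(\tldtau_B^*)$.

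The main obstacle is handling strictness and the viability boundary cleanly. The envelope computation delivers strict monotonicity when both groups operate in the viable regime (where $z^*$ is the interior maximizer), but I would also check the edge case where the larger variance pushes group $B$ out of viability by Corollary~\ref{corr:viable}: there $v_B(\tldtau_B^*)=0$ while viability of the process gives $v_A(\tldtau_A^*)>0$, so the strict inequality persists. A secondary point requiring care is rigorously justifying the covariance identity $\mathrm{Cov}(t_i,\tldt_i)=\sigma_t^2$ under Assumption~\ref{aspt:gauss_noise}, since this is precisely what makes the bias a pure location shift---the structural reason a biased-mean disadvantage can be perfectly undone by rethresholding (yielding equal utilities and the shift $\tldtau_B^*=\tldtau_A^*-(1-\alpha)\beta$) while a variance disadvantage strictly cannot.
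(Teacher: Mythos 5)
Your proposal is correct. For the biased-mean model it is essentially the paper's own argument: the bias $\beta$ in $\tilde s$ becomes a location shift $\beta'=(1-\alpha)\beta$ in $\tldt$, the two objectives coincide after standardization, and reading off the common maximizer gives $\tldtau_B^*=\tldtau_A^*-(1-\alpha)\beta$ and $v_A(\tldtau_A^*)=v_B(\tldtau_B^*)$. For the disparate-variance model you take a genuinely different (and arguably more robust) route. The paper invokes Lemma~\ref{lem:decent_util} to write $v^*=\sigma_t^2\cdot\frac{1}{\sigma_{\tldt}}\cdot\frac{\tldtau^*}{\sigma_{\tldt}}$ and then shows each factor is ordered across the groups via monotonicity of the first-order condition $\phi(z)-z\Phi^c(z)=\frac{\sigma_{\tldt}}{\sigma_t^2}\crev$; multiplying those orderings implicitly uses $\tldtau_B^*/\sigma_{\tldt_B}\ge 0$. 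Your argument --- that the value function $g(\sigma_{\tldt})=\max_z\bigl[\tfrac{\sigma_t^2}{\sigma_{\tldt}}H(z)-\tfrac{\crev}{\Phi^c(z)}\bigr]$ is strictly decreasing because the objective is pointwise strictly decreasing in $\sigma_{\tldt}$ (the envelope step is not even needed; $g(\sigma_1)\ge f(z_2^*,\sigma_1)>f(z_2^*,\sigma_2)=g(\sigma_2)$ suffices) --- avoids that sign issue entirely, and your observation that the optimal value depends on $\beta$ not at all (the bias vanishes under standardization) cleanly handles the $\beta\ge 0$ clause of the disparate-variance case, which the paper's proof treats only implicitly. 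One small inaccuracy: in your non-viability edge case you assert $v_B(\tldtau_B^*)=0$, whereas the interior optimum satisfies $v_B(\tldtau_B^*)=\tfrac{\sigma_t^2}{\sigma_{\tldt_B}^2}\tldtau_B^*<0$ when group $B$ is non-viable; the conclusion $v_A(\tldtau_A^*)>v_B(\tldtau_B^*)$ is unaffected, and indeed your monotonicity argument makes the case split unnecessary.
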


The proof for the above lemma can be found in Appendix~\ref{pf:lem_decent_adv_better_util}. This result shows that selecting from the group, which is not well-understood, is less lucrative from the principal's point of view ---a high degree of noise leads to inaccurate evaluations and hence poor selection decisions. However, any negative additive biases in the signal mean can be handled as long as the magnitude of the bias can be learnt. We now see what consequences this has on fairness.

Consider the modified optimization problem for the principal when the applicant pool is mixed. Here, the principal is not restricted to selecting a certain number of applicants from either population. Instead, they decide how to simultaneously hire from both populations to fill \textit{shared capacity} $k$.  

\begin{align}\label{opt:joint}
    \max_{\nrev(A), \nrev(B), \tldtau_A, \tldtau_B} \quad &U_{ndg}^{(A)}(\tldtau_A, \nrev(A)) + U_{ndg}^{(B)}(\tldtau_B, \nrev(B)) \quad \text{s.t.}\nonumber\\
    &\nrev(A)\cdot \cP[\tldt_A \geq \tldtau_A] + \nrev(B)\cdot \cP[\tldt_B \geq \tldtau_B] \leq k, \quad \nrev(A), \nrev(B) \geq 0. 
\end{align}
In light of Lemma~\ref{lem:decent_adv_better_util}, we expect the principal to hire more applicants from population $A$ than from population $B$ when group $B$'s signals are noisier (bias only exists in the variance). However, our main result here shows that the reality is even worse, providing dire news for fairness in selection processes without delegation:

\begin{theorem}\label{thm:no_hire_B}
Suppose that group $B$ is disadvantaged because of a noisier signal $\tilde s$ (biased signal variance, but no bias in the signal mean) and the cost $\crev$ is low enough that the selection process without delegation is viable. Then, at the optimal solution to Problem~\ref{opt:joint}, $\nrev(B)^* = 0$, i.e., no applicant from Group $B$ is hired. 
\end{theorem}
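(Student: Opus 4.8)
The plan is to collapse the two-dimensional joint optimization in Problem~\ref{opt:joint} into a simple linear program over the expected number of hires from each group, and then invoke the ranking of per-hire utilities established in Lemma~\ref{lem:decent_adv_better_util}. First I would rewrite the objective in terms of expected hire counts. Define $h_i = \nrev(i)\cdot \cP[\tldt_i \geq \tldtau_i]$ for $i \in \{A,B\}$. Substituting $\nrev(i) = h_i / \cP[\tldt_i \geq \tldtau_i]$ into each $U_{ndg}^{(i)}$ gives
\[
U_{ndg}^{(A)}(\tldtau_A, \nrev(A)) + U_{ndg}^{(B)}(\tldtau_B, \nrev(B)) = h_A\, v_A(\tldtau_A) + h_B\, v_B(\tldtau_B),
\]
where $v_i(\tldtau_i) = \E[t_i \mid \tldt_i \geq \tldtau_i] - \crev/\cP[\tldt_i \geq \tldtau_i]$ is exactly the per-hire net utility from Lemma~\ref{lem:decent_adv_better_util}, and the capacity constraint becomes $h_A + h_B \leq k$.

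Next comes the decoupling step. For any fixed target hire counts $h_A, h_B \geq 0$, the threshold $\tldtau_i$ and the review count $\nrev(i)$ can be chosen independently: Assumption~\ref{aspt:large} guarantees the pool is large enough that, for any $\tldtau_i$, one can pick $\nrev(i) = h_i/\cP[\tldt_i \geq \tldtau_i]$ to realize $h_i$. Hence the threshold for group $i$ enters the objective only through $v_i(\tldtau_i)$ and can be optimized separately, yielding $v_i^* := \max_{\tldtau_i} v_i(\tldtau_i) = v_i(\tldtau_i^*)$. The joint problem therefore reduces to the linear program
\[
\max_{h_A, h_B \geq 0,\ h_A + h_B \leq k} \quad h_A\, v_A^* + h_B\, v_B^*.
\]

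Finally I would solve this LP. By the disparate-variance case of Lemma~\ref{lem:decent_adv_better_util}, $v_A^* > v_B^*$, and viability (Corollary~\ref{corr:viable}) ensures $v_A^* > 0$ for the less-noisy group $A$. Since every unit of capacity spent on $B$ can be reallocated to $A$ for a strict gain of $v_A^* - v_B^* > 0$, any feasible point with $h_B > 0$ is strictly dominated by $(h_A + h_B,\, 0)$; thus the unique optimum is $h_A = k$, $h_B = 0$. Because $\cP[\tldt_B \geq \tldtau_B] > 0$, $h_B = 0$ forces $\nrev(B)^* = 0$, as claimed.

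The main obstacle is the decoupling step: one must verify that the shared capacity constraint couples the two groups only through the scalar totals $h_A, h_B$, and that each group's optimal threshold is unaffected by the allocation—so that $v_A^*, v_B^*$ are genuine constants in the reduced LP rather than functions of the allocation. Assumption~\ref{aspt:large} is what makes this clean, as it removes any feasibility coupling between $\nrev(i)$ and the attainable $h_i$. A secondary point to check is robustness to the sign of $v_B^*$: whether $v_B^*$ is positive or negative, the strict gap $v_A^* > v_B^*$ together with $v_A^* > 0$ still drives $h_B$ to zero.
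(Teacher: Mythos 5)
Your proposal is correct and follows essentially the same route as the paper: the paper's capacity-split variables $r_A, r_B$ play exactly the role of your expected hire counts $h_A, h_B$, both arguments decouple the threshold optimization per group to obtain the constants $v_A^*, v_B^*$, and both conclude via the linear dependence of the objective on the allocation together with $v_A^* > v_B^*$ from Lemma~\ref{lem:decent_adv_better_util}. Your explicit remark that the conclusion is robust to the sign of $v_B^*$ (needing only $v_A^* > v_B^*$ and $v_A^* > 0$) is a minor but welcome clarification of a point the paper passes over quickly.
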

The proof for the theorem can be found in Appendix~\ref{app:no_hire_B}. 
Returning to our graduate admissions example, PhD applicants from foreign or lesser-known educational backgrounds often face a disadvantage because their transcripts/GPA/letters may not be interpretable in the same way as their peers from well-known institutions. When faced with such 
noisy signals, faculty often choose to err on the side of caution, hiring students overwhelmingly from backgrounds they are familiar with. We see the same effects at play in Theorem~\ref{thm:no_hire_B}, where \textit{the principal pre-excludes group $B$ and then selects applicants from the other group using threshold $\tldtau_A^*$}.    

The outcome, however, is completely different for the setting where a negative bias exists in group $B$'s signal mean for $\tilde s$. Since additive biases can be corrected for, no group disparities arise: 
\begin{theorem}\label{thm:no_disparity}
Suppose that group $B$ is disadvantaged because of a negatively biased mean for signal $\tilde s$ (but no bias in signal variance) and the cost $\crev$ is low enough that the selection process without delegation is viable. Then, there exists an optimal solution to Problem~\ref{opt:joint} where the group-wise composition of hires satisfies demographic parity.
\end{theorem}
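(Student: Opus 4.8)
The plan is to exploit the equal-per-hire-utility property established in Lemma~\ref{lem:decent_adv_better_util} to show that the biased-mean model leaves the principal \emph{indifferent} as to how hires are split between the two groups, so that a demographic-parity split is among the optimizers of Problem~\ref{opt:joint}.

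First I would reparametrize Problem~\ref{opt:joint} in terms of the expected number of hires from each group, $h_i \triangleq \nrev(i)\cdot \cP[\tldt_i \geq \tldtau_i]$, rather than $\nrev(i)$. Substituting $\nrev(i) = h_i / \cP[\tldt_i \geq \tldtau_i]$ into $U_{ndg}^{(i)}$ yields $U_{ndg}^{(i)} = h_i \cdot v_i(\tldtau_i)$, where $v_i(\tldtau_i) = \mathbb{E}[t_i \mid \tldt_i \geq \tldtau_i] - \crev/\cP[\tldt_i \geq \tldtau_i]$ is exactly the per-hire net utility from Lemma~\ref{lem:decent_adv_better_util}, and the capacity constraint becomes $h_A + h_B \leq k$. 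Since neither $v_i(\tldtau_i)$ nor the constraint couples the threshold $\tldtau_i$ to $h_i$ or to the other group, the inner maximization over thresholds decouples: for any fixed $(h_A, h_B)$ the objective is maximized by taking $\tldtau_i = \tldtau_i^*$, the group-wise maximizer of $v_i$. This collapses the problem to the linear program $\max\{ h_A v_A(\tldtau_A^*) + h_B v_B(\tldtau_B^*) : h_A + h_B \le k,\ h_A, h_B \ge 0\}$.

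Next I would invoke the biased-mean branch of Lemma~\ref{lem:decent_adv_better_util}, which gives $v_A(\tldtau_A^*) = v_B(\tldtau_B^*) =: v^*$, together with viability (Corollary~\ref{corr:viable}), which guarantees $v^* > 0$ — equivalently $\tldtau^* > 0$, so the common per-hire utility $\tfrac{\sigma_t^2}{\sigma_{\tldt}^2}\tldtau^*$ is strictly positive. The objective then equals $v^*(h_A + h_B)$, a strictly increasing function of $h_A + h_B$ alone. Consequently the capacity constraint binds ($h_A + h_B = k$) and \emph{every} feasible split on the segment $h_A + h_B = k$ attains the same optimal value. This is the crux: unlike the disparate-variance case (Theorem~\ref{thm:no_hire_B}), where the strict inequality $v_A(\tldtau_A^*) > v_B(\tldtau_B^*)$ forced the corner solution $h_B = 0$, here the optimizer set is an entire line segment.

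Finally I would select the parity-respecting point of that segment, $h_A = \lambda k$ and $h_B = (1-\lambda)k$, realized by $\nrev(i) = h_i/\cP[\tldt_i \geq \tldtau_i^*]$ with thresholds $\tldtau_A^*$ and $\tldtau_B^* = \tldtau_A^* - (1-\alpha)\beta$. Under this solution the expected hire composition is $\mathbb{E}[Y_A] = \lambda = \Lambda_A$ and $\mathbb{E}[Y_B] = 1-\lambda = \Lambda_B$, so the metric of Equation~\eqref{eq:fairness} satisfies $\mathbb{E}[\mathcal{D}] = 0$, i.e. demographic parity. The only genuine obstacle is the decoupling step — verifying that the threshold optimization is truly independent of the hire counts and of the cross-group constraint, so that the reparametrized problem is exactly linear in $(h_A, h_B)$; once that is secured, the equality $v_A(\tldtau_A^*) = v_B(\tldtau_B^*)$ does the remaining work, and the result amounts to the statement that the principal incurs no utility penalty for choosing the fair split.
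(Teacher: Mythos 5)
Your proof is correct and follows essentially the same route as the paper: Theorem~\ref{thm:no_disparity} is derived there directly from the biased-mean branch of Lemma~\ref{lem:decent_adv_better_util}, via the same decomposition used in the proof of Theorem~\ref{thm:no_hire_B}, in which the objective is linear in the group-$A$ hire count with coefficient $v_A(\tldtau_A^*) - v_B(\tldtau_B^*)$ --- a coefficient that vanishes here, so every split of the capacity $k$, in particular the demographic-parity split, is optimal. Your write-up simply makes explicit the linear-program reduction and the verification of $\mathbb{E}[\mathcal{D}]=0$ that the paper leaves implicit.
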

This again follows directly from Lemma~\ref{lem:decent_adv_better_util}. When the bias $\beta$ is known, the principal can correct for the bias by augmenting the signal values $\tilde s$ for all group $B$ applicants by amount $\beta$. As a result, the distribution of quality for both groups become virtually indistinguishable, thereby leading to the above outcome. %

\section{To Delegate or Not: Efficiency \& Fairness Implications}\label{sec:comparison}

We conclude this paper by conducting a comparative analysis between the two selection processes discussed so far and investigating when it can be beneficial for a principal to delegate to an agent. We focus on three primary dimensions of comparison: i) the utility of the principal, ii) the average quality of selected applicants from the principal's perspective, and iii) the fairness of hiring outcomes. While the first two pertain to the quality and efficiency of the hiring process, ensuring that hiring processes are \textit{fair} is also of paramount importance. This is particularly crucial in light of many real-world instances of biased or discriminatory hiring practices \citep{amazon,kline2022systemic}.

\paragraph{Principal Utilities \& Expected Quality of Selected Applicants.}

First, we explore conditions under which it is beneficial for the principal to delegate to the agent, if the goal is to obtain i) higher expected quality for selected applicants; or ii) higher expected utilities per selected applicant. 
Figure~\ref{fig:util_comparison} shows the difference in net principal utilities per selected applicant ($\Delta_{utility}$) and the difference in expected quality of a selected applicant ($\Delta_{quality}$) between the settings with and without delegation, as a function of key problem parameters. Formally, we define:
\[
      \Delta_{quality} = \mathbb{E}\left[t~|~\tilde s \geq \tau_1 \right] - \mathbb{E}\left[t ~|~ \tilde t \geq \tldtau^* \right], \quad \text{and} 
\]
\[
      \Delta_{utility} = \left(\mathbb{E}\left[t~|~\tilde s \geq \tau_1 \right]\right) - \left( \mathbb{E}\left[t ~|~ \tilde t \geq \tldtau^* \right] - \frac{\crev}{\mathbb{P}\left[\tldt \geq \tldtau^* \right]} \right). 
\]
A positive sign for $\Delta_{quality}$ and $\Delta_{utility}$ identifies preferences towards delegation. We make the following observations:

(1) \textit{Dependence on $\tau_1$.} At any fixed $\alpha \in (0, 1)$, the expected quality of applicants selected by the agent increases monotonically in the selection threshold $\tau_1$, as shown in Corollary~\ref{cor:tau1_monotone}; at the same time, the choice of $\tau_1$ does not affect the case where the principal does not delegate. This implies that $\Delta_{quality}$ and $\Delta_{utility}$ are both \emph{monotonically increasing} in $\tau_1$. Therefore, for any given $\alpha$, there exists a $\tau_1(\alpha)$ such that delegation is strictly better for the principal in terms of both expected quality of hire and net expected utility per hire for all $\tau_1 \geq \tau_1(\alpha)$. 

(2) \textit{Dependence on $\alpha$.} At fixed $\tau_1$, whether it is beneficial to delegate or not can be \textbf{non-monotonic} in $\alpha$. This can be seen in Figure~\ref{fig:util_comparison}---note that $\Delta_{utility}$ and $\Delta_{quality}$ are both inverted $U$-shaped and there exists intermediate regimes of $\alpha$ where delegation would be beneficial (see subfigures b) and c)). Intuitively, at intermediate values of $\alpha$, finding a good candidate may be costly for the principal (the candidate needs to be good along both dimensions of ability and fit and hence more applications need to be reviewed), and delegation to the agent can be beneficial. The agent can compensate for applicants doing poorly on fit by ensuring that they do exceptionally well on ability. 

However, given any $\tau_1$, there always exists some $\alpha(\tau_1) \in (0, 1)$ such that for $\alpha \geq \alpha (\tau_1)$, delegating is never beneficial. Intuitively, once the principal and agent preferences become sufficiently misaligned, it is always better for the principal to retain control over decision-making. As $\tau_1$ becomes smaller, this $\alpha(\tau_1)$ also becomes smaller and delegation eventually may become non-beneficial for all $\alpha$ (for example, see subfigure~\ref{fig:util_comparison}a). This is the regime where the average ability of applicants selected by the agents is not sufficient to compensate for the lack of selection based on fit. 

For a more detailed overview of the delicate balance between $\tau_1$ and $\alpha$, we refer the reader to Figure~\ref{fig:heatmaps} in Appendix~\ref{app:extra_fig}.

\begin{figure}
    \centering
    \subfloat[$\tau_1 = 0.5$]{\includegraphics[width=0.3\textwidth]{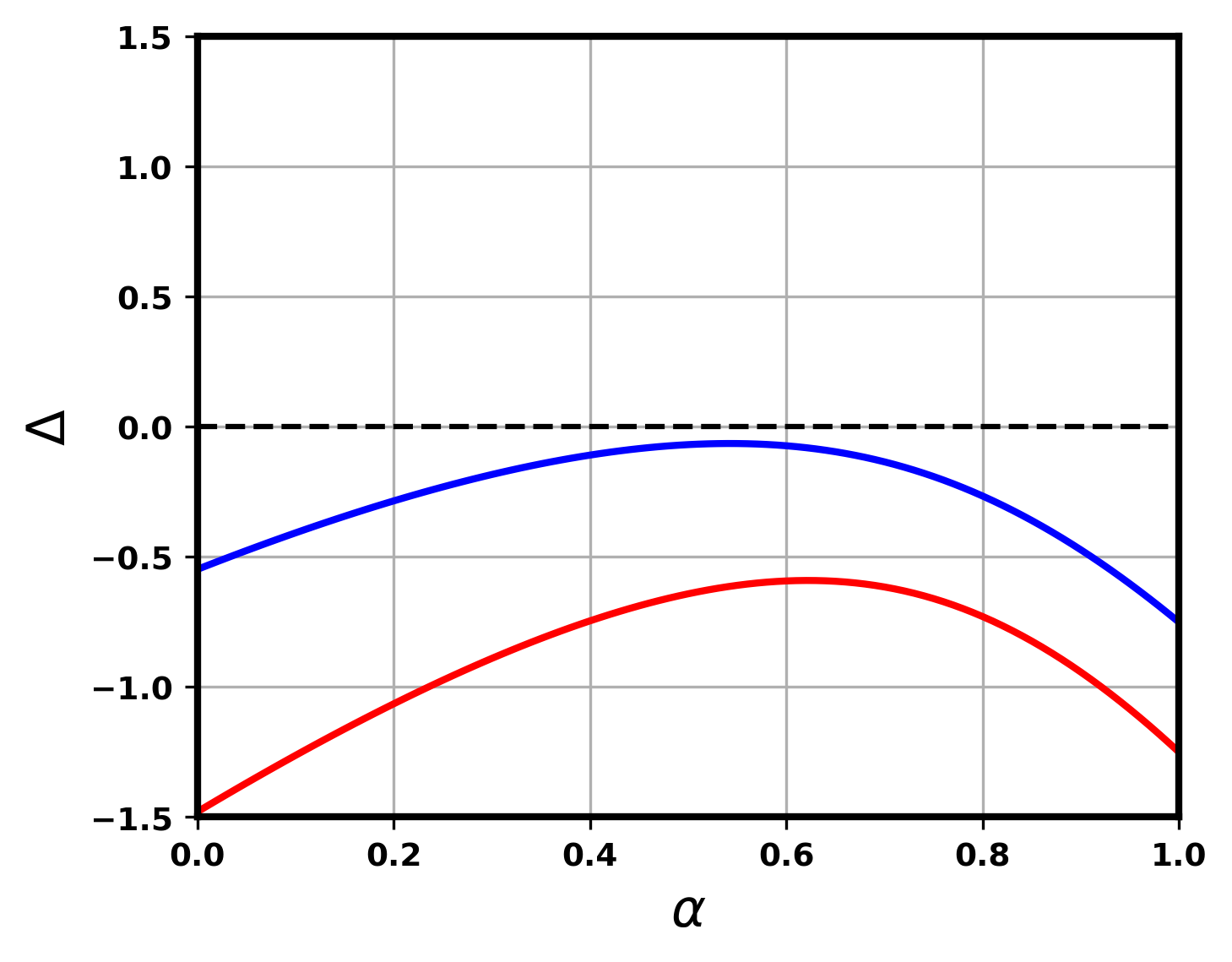}} 
    \subfloat[$\tau_1 = 1.0$]{\includegraphics[width=0.3\textwidth]{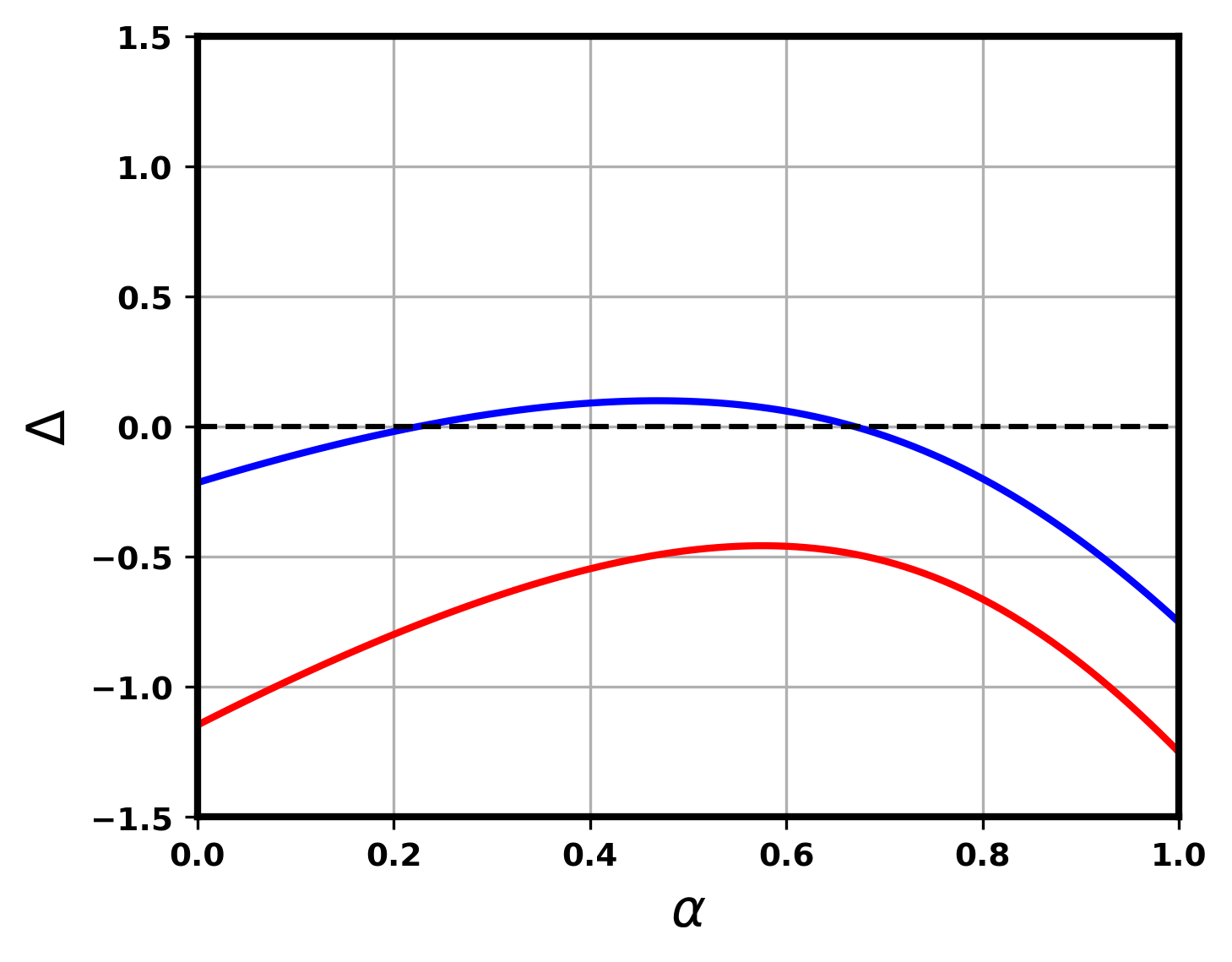}} 
    \subfloat[$\tau_1 = 2.4$]{\includegraphics[width=0.3\textwidth]{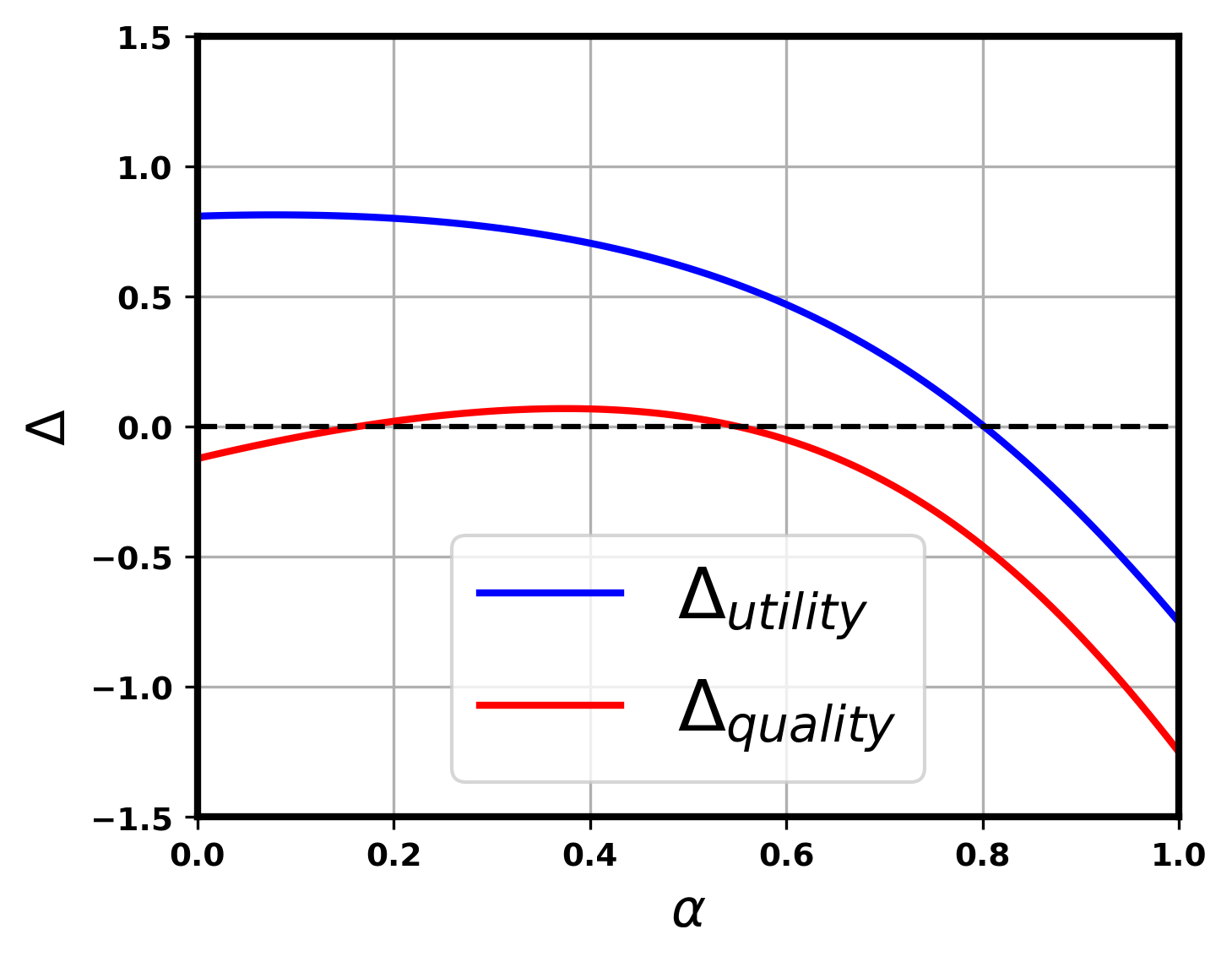}}
    \caption{We plot $\Delta_{quality}$ and $\Delta_{utility}$ as a function of $\alpha$ for different levels of agent's threshold $\tau_1$.  Positive values of $\Delta$ indicate that the principal prefers delegation. Parameter combination: $\crev = 0.1$, $\sigma_f = 1$, $\sigma_s = 2$, $\sigma_{\tilde f} = 1.12$, $\sigma_{\tilde s} = 2.06$. We provide a more comprehensive graphical representation of regimes where delegation is beneficial in Appendix~\ref{app:extra_fig}.}
    \label{fig:util_comparison}
\end{figure}

\paragraph{Fairness in Multi-Group Settings.} Building up on Sections~\ref{sec:centralization} and~\ref{sec:decentralization}, we note that neither process can achieve \textit{fairness} (in terms of demographic parity) across all settings. 
However, the modes of failure differ across the two processes. When a \emph{naive} agent makes decisions on behalf of the principal, it applies a group-blind selection standard. On the other hand, when no delegation occurs, not only does the principal have access to more information about each applicant, they are also strategic and utility-maximizing. This fundamental asymmetry explains why the fairness outcomes in the latter setting are markedly different from the former.

Importantly, \textit{the nature of disadvantage faced by a group directly determines whether unfairness caused by this disadvantage can be mitigated or not.} For example, when the disadvantage exists purely as additive biases in the signal mean, a decision-maker (like the principal) can correct for it through group-aware selection policies, without incurring any dis-utility; in this case, not delegating leads to significantly better outcomes for fairness. However, there are some other forms of disadvantage that cannot be corrected for without sacrificing process efficiency. For example, in the disparate variance setting, a rational principal would choose to completely exclude the disadvantaged group (because signals are unreliable) to minimize the risk of hiring poor quality applicants and lowering their utility. In this case, neither of our selection processes is capable of achieving fair outcomes, but delegating to the agent at least ensures selections from both groups (albeit at unfair rates). On a broader note, this highlights the importance of i) understanding the exact type of disparities faced by different populations; and ii) the extent of flexibility a decision maker has while designing selection processes. There is no one-size-fits-all solution.

\section*{Acknowledgement}
The authors are grateful for support from the US National Science Foundation (NSF) under
grants IIS-2504990 and IIS-2336236. Any opinions and findings expressed in this
material are those of the authors and do not reflect the views of their funding agencies.

\bibliography{mybib.bib}
\bibliographystyle{plainnat}

\appendix

\section{Proofs for Section~\ref{sec:centralization}}

\subsection{Proof of Lemma~\ref{lem:tech_prof_util_cent}}\label{pf:lemma1}
Recall that the principal's expected utility from a applicant hired through a selection process which has been delegated to the agent is given by $\mathbb{E}\left[t~|~\tilde s \geq \tau_1 \right]$ 
Before we proceed with the main proof, we need to introduce some technical results:
\begin{claim}\label{tech_clm:cond_dist}
If two random variables $X$ and $Y$ are such that $X \sim \mathcal{N}(\mu_X, \sigma_X^2)$, $Y \sim \mathcal{N}(\mu_Y, \sigma_Y^2)$ and $\Sigma$ is the covariance between $X$ and $Y$, then the conditional distribution of $X$ conditional on $Y = y$ is given by: 
\[
     X~|~(Y = y) \sim \mathcal{N}\left(\mu_X + \frac{\Sigma}{\sigma_Y^2}\cdot \left(y - \mu_Y\right), \sigma_X^2 - \frac{\Sigma^2}{\sigma_Y^2} \right)
\]
\end{claim}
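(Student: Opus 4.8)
The plan is to prove the claim via an orthogonal (residual) decomposition rather than by directly manipulating the bivariate density, since the former avoids completing the square inside an exponential and makes the two parameters transparent. The one point worth flagging at the outset is that marginal normality of $X$ and $Y$ together with a covariance $\Sigma$ does not by itself pin down the conditional law; the argument (and the formula) genuinely require that $X$ and $Y$ be \emph{jointly} Gaussian. This holds throughout our setting, since every quantity of interest (such as $t = \alpha f + (1-\alpha)s$ and $\tilde s = s + \epsilon_s$) is a linear combination of the mutually independent Gaussians $s, f, \epsilon_s, \epsilon_f$ from Assumptions~\ref{aspt:Gaussian} and~\ref{aspt:gauss_noise}, so any pair we form is automatically jointly Gaussian.

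First I would define the residual
\[
Z \triangleq X - \mu_X - \frac{\Sigma}{\sigma_Y^2}\left(Y - \mu_Y\right),
\]
where the coefficient $\Sigma/\sigma_Y^2$ is chosen precisely so that $Z$ is uncorrelated with $Y$: a one-line computation gives $\mathrm{Cov}(Z, Y) = \Sigma - (\Sigma/\sigma_Y^2)\,\sigma_Y^2 = 0$. Since $(Z, Y)$ is a linear image of the jointly Gaussian pair $(X, Y)$, it is itself jointly Gaussian, and for jointly Gaussian variables zero correlation implies independence. Hence $Z \perp Y$.

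Next I would record the first two moments of $Z$. Taking expectations gives $\mathbb{E}[Z] = 0$, and expanding the variance (constants drop out) yields $\mathrm{Var}(Z) = \sigma_X^2 + (\Sigma^2/\sigma_Y^4)\sigma_Y^2 - 2(\Sigma/\sigma_Y^2)\Sigma = \sigma_X^2 - \Sigma^2/\sigma_Y^2$, so that $Z \sim \mathcal{N}\left(0, \sigma_X^2 - \Sigma^2/\sigma_Y^2\right)$. Rewriting the definition of $Z$ as $X = \mu_X + (\Sigma/\sigma_Y^2)(Y - \mu_Y) + Z$ and conditioning on $\{Y = y\}$ then finishes the argument: because $Z$ is independent of $Y$, its distribution is unaffected by the conditioning, so $X \mid (Y = y)$ is the deterministic value $\mu_X + (\Sigma/\sigma_Y^2)(y - \mu_Y)$ plus an independent $\mathcal{N}\left(0, \sigma_X^2 - \Sigma^2/\sigma_Y^2\right)$ term, which is exactly the claimed law.

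The main obstacle here is conceptual rather than computational: it is the passage from ``uncorrelated'' to ``independent,'' which is valid only because of joint Gaussianity and would fail for merely marginally normal $X, Y$. Once that step is justified, the moment computations are routine. As a sanity check on the formula one can note the two limiting behaviors it must exhibit: when $\Sigma = 0$ the conditional law reduces to the unconditional $\mathcal{N}(\mu_X, \sigma_X^2)$, and the conditional variance $\sigma_X^2 - \Sigma^2/\sigma_Y^2$ is nonnegative precisely because the correlation coefficient $\Sigma/(\sigma_X\sigma_Y)$ lies in $[-1,1]$.
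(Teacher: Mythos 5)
Your proof is correct, and it takes a genuinely different route from the paper: the paper does not prove this claim at all, but instead defers to external lecture notes (\cite{chuongnotes}), which derive the conditional law by manipulating the joint density (completing the square in the exponent / Schur complements). Your residual decomposition $X = \mu_X + \frac{\Sigma}{\sigma_Y^2}(Y-\mu_Y) + Z$ with $Z \perp Y$ is self-contained, avoids any density computation, and makes the conditional mean and variance appear directly as the projection coefficient and the residual variance. It also buys something the paper's statement glosses over: as you correctly flag, marginal normality of $X$ and $Y$ plus a covariance does not determine the conditional distribution, so the claim as written is slightly imprecise and \emph{joint} Gaussianity must be invoked; your observation that every pair used in the paper (e.g.\ $(s,\tilde s)$ or $(t,\tilde t)$) is a linear image of the independent Gaussians $s, f, \epsilon_s, \epsilon_f$, and hence jointly Gaussian, closes that gap. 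All the individual steps check out: $\mathrm{Cov}(Z,Y) = \Sigma - \frac{\Sigma}{\sigma_Y^2}\sigma_Y^2 = 0$, joint Gaussianity of $(Z,Y)$ upgrades uncorrelatedness to independence, and the variance computation $\mathrm{Var}(Z) = \sigma_X^2 - \Sigma^2/\sigma_Y^2$ is right. The only mild caveat is that the hypothesis $\sigma_Y^2 > 0$ is needed for the decomposition to make sense, but that is implicit in the paper's statement as well.
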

\begin{proof}
For a detailed proof of the above claim, please refer to these notes \cite{chuongnotes}.
\end{proof}
\noindent
Noting that $s \sim \mathcal{N}(0, \sigma_s^2)$, $\tilde s \sim \mathcal{N}(0, \sigma_{\tilde s}^2)$ and $Cov(s, \tilde s) = \sigma_s^2$, we can use the above claim to conclude that: 
\[
    \mathbb{E}[s~|~\tilde s = x] = \frac{\sigma_s^2}{\sigma_{\tilde s}^2}x.
\]
We will now use this result to prove the following technical result that links $\mathbb{E}[s~|~\tilde s \geq \tau_1]$ to $\mathbb{E}[\tilde s~|~\tilde s \geq \tau_1]$: 
\begin{claim}\label{tech_clm:intermediate}
We must have: 
    $\mathbb{E}[s~|~\tilde s \geq \tau_1] = \left(\frac{\sigma_s}{\sigma_{\tilde s}}\right)^2 \cdot \mathbb{E}[\tilde s~|~\tilde s \geq \tau_1]$.  
\end{claim}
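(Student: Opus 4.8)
The plan is to apply the law of iterated (total) expectation, conditioning on the finer information carried by $\tilde s$ itself. The crucial ingredient is already in hand: the linear regression identity $\mathbb{E}[s \mid \tilde s = x] = (\sigma_s/\sigma_{\tilde s})^2\, x$ derived just above from Claim~\ref{tech_clm:cond_dist}. Since the conditioning event $\{\tilde s \geq \tau_1\}$ is measurable with respect to $\sigma(\tilde s)$, I can first integrate out $s$ given $\tilde s$ and only then average over the truncated distribution of $\tilde s$, which is exactly what produces the factor $(\sigma_s/\sigma_{\tilde s})^2$ out front.

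Concretely, first I would write, for the event $\{\tilde s \geq \tau_1\}$ whose probability is strictly positive for any finite $\tau_1$ (since $\tilde s$ is Gaussian),
\[
\mathbb{E}[s \mid \tilde s \geq \tau_1] = \frac{\mathbb{E}\big[s\,\mathbf{1}_{\{\tilde s \geq \tau_1\}}\big]}{\mathbb{P}[\tilde s \geq \tau_1]}.
\]
Next I would replace $s$ in the numerator by its conditional expectation given $\tilde s$ via the tower property, $\mathbb{E}\big[s\,\mathbf{1}_{\{\tilde s \geq \tau_1\}}\big] = \mathbb{E}\big[\mathbb{E}[s\mid\tilde s]\,\mathbf{1}_{\{\tilde s \geq \tau_1\}}\big]$, which is legitimate precisely because $\mathbf{1}_{\{\tilde s \geq \tau_1\}}$ is a function of $\tilde s$. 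Substituting the linear form $\mathbb{E}[s\mid\tilde s] = (\sigma_s/\sigma_{\tilde s})^2\,\tilde s$ and pulling the constant factor outside the expectation gives
\[
\mathbb{E}[s \mid \tilde s \geq \tau_1] = \left(\frac{\sigma_s}{\sigma_{\tilde s}}\right)^2 \cdot \frac{\mathbb{E}\big[\tilde s\,\mathbf{1}_{\{\tilde s \geq \tau_1\}}\big]}{\mathbb{P}[\tilde s \geq \tau_1]} = \left(\frac{\sigma_s}{\sigma_{\tilde s}}\right)^2 \cdot \mathbb{E}[\tilde s \mid \tilde s \geq \tau_1],
\]
which is exactly the claimed identity. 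Equivalently, one can carry out the same step by explicit integration, writing $\mathbb{E}[s \mid \tilde s \geq \tau_1] = \int_{\tau_1}^{\infty} \mathbb{E}[s \mid \tilde s = x]\, \frac{f_{\tilde s}(x)}{\mathbb{P}[\tilde s \geq \tau_1]}\, dx$ and using linearity of the integrand in $x$; I would likely present the tower-property version as the main line since it is cleaner.

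The only point that needs care — and the closest thing to an obstacle — is justifying the exchange in the tower step, namely that conditioning on the event $\{\tilde s \geq \tau_1\}\in\sigma(\tilde s)$ commutes with the inner conditional expectation $\mathbb{E}[\,\cdot\mid\tilde s]$. This is immediate from the defining property of conditional expectation, $\mathbb{E}\big[\mathbb{E}[s\mid\tilde s]\,g(\tilde s)\big] = \mathbb{E}\big[s\,g(\tilde s)\big]$ for bounded measurable $g$, applied with $g = \mathbf{1}_{[\tau_1,\infty)}$; all integrals involved are finite because $s$ and $\tilde s$ are jointly Gaussian. No heavier machinery is required, so the proof is short once the iterated-expectation structure is set up correctly.
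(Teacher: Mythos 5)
Your proof is correct and follows essentially the same route as the paper: both apply the law of total expectation (tower property) over the event $\{\tilde s \geq \tau_1\}$ and then substitute the linear regression identity $\mathbb{E}[s \mid \tilde s] = (\sigma_s/\sigma_{\tilde s})^2\,\tilde s$ from Claim~\ref{tech_clm:cond_dist}. Your version simply spells out the indicator-function justification for the tower step, which the paper leaves implicit.
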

\begin{proof}
The proof follows directly from Claim~\ref{tech_clm:cond_dist} and the law of total expectation. 
\begin{align*}
    \mathbb{E}[s~|~\tilde s \geq \tau_1] &= \mathbb{E}\left[ \mathbb{E}[s~|~\tilde s]~|~ \tilde s \geq \tau_1 \right] \quad (\text{law of total expectation}) \\
    &= \mathbb{E}\left[ \frac{\sigma_s^2}{\sigma_{\tilde s}^2}\tilde s ~|~\tilde s \geq \tau_1 \right] \quad (\text{by Claim \ref{tech_clm:cond_dist}})\\
    &= \frac{\sigma_s^2}{\sigma_{\tilde s}^2} \cdot \mathbb{E}[\tilde s~|~\tilde s \geq \tau_1].
\end{align*}
\end{proof}
Now, if we can compute $\mathbb{E}[\tilde s~|~\tilde s \geq \tau_1]$, we are done. To this end, we present technical Claim~\ref{tech_clm:hazard} (without proof) which provides the closed form expression for the mean of a truncated normal random variable. 
\begin{claim}\label{tech_clm:hazard}
Let $X \sim N(\mu_X, \sigma_X^2)$. Suppose, the lower tail of $X$ is truncated at $a$. Then, 
\[
       \mathbb{E}[X~|~X \geq a] = \mu_X + \sigma_X \cdot H\left(\frac{a-\mu_X}{\sigma_X}\right),
\]
where $H(y) = \frac{\phi(y)}{1-\Phi(y)}$ represents the hazard rate of a standard normal random variable. 
\end{claim}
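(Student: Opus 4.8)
The plan is to reduce the statement to a computation for the standard normal via a change of variables, and then evaluate the resulting tail integral using the differential identity satisfied by the Gaussian density. First I would standardize: set $Z = (X - \mu_X)/\sigma_X$, so that $Z \sim \mathcal{N}(0,1)$ and $X = \mu_X + \sigma_X Z$. The truncation event $\{X \geq a\}$ is then identical to $\{Z \geq c\}$ where $c = (a - \mu_X)/\sigma_X$. By linearity of conditional expectation, $\mathbb{E}[X \mid X \geq a] = \mu_X + \sigma_X \cdot \mathbb{E}[Z \mid Z \geq c]$, so the entire claim collapses to showing that $\mathbb{E}[Z \mid Z \geq c] = H(c) = \phi(c)/(1 - \Phi(c))$.

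For this remaining standard-normal computation, I would write the conditional expectation as a ratio of integrals,
\[
   \mathbb{E}[Z \mid Z \geq c] = \frac{\int_c^\infty z\, \phi(z)\, dz}{\int_c^\infty \phi(z)\, dz} = \frac{\int_c^\infty z\, \phi(z)\, dz}{1 - \Phi(c)},
\]
where the denominator is just the probability $\mathbb{P}[Z \geq c] = 1 - \Phi(c)$. The key step---and essentially the only content of the proof---is the observation that the standard normal density satisfies $\phi'(z) = -z\,\phi(z)$, so that $z\,\phi(z) = -\phi'(z)$. This makes the numerator an exact integral: $\int_c^\infty z\,\phi(z)\, dz = -\big[\phi(z)\big]_c^\infty = \phi(c)$, using that $\phi(z) \to 0$ as $z \to \infty$.

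Combining the two pieces gives $\mathbb{E}[Z \mid Z \geq c] = \phi(c)/(1-\Phi(c)) = H(c)$, and substituting back through the standardization yields $\mathbb{E}[X \mid X \geq a] = \mu_X + \sigma_X \, H\big((a-\mu_X)/\sigma_X\big)$, as claimed. There is no genuine obstacle in this argument: it is a standard truncated-Gaussian identity. The only things to get right are the Gaussian derivative identity $\phi'(z) = -z\phi(z)$, which is precisely what lets the tail integral telescope into a closed form, and the careful bookkeeping of the standardization constant $c = (a - \mu_X)/\sigma_X$ so that the final hazard-rate argument matches the statement.
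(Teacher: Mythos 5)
Your proposal is correct and follows essentially the same route as the paper's proof in the appendix: both reduce to the standard normal by the substitution $w=(u-\mu_X)/\sigma_X$ (you just standardize the random variable up front rather than inside the integral) and then evaluate the tail integral via the identity $\phi'(z)=-z\,\phi(z)$. No gaps.
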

\begin{proof}
For completeness, we provide a detailed proof in Appendix~\ref{app:hazard}.
\end{proof}
Noting that $\tilde s \sim \mathcal{N}(0, \sigma_{\tilde s}^2)$ and using Claim~\ref{tech_clm:hazard}, we have $\mathbb{E}[\tilde s~|~\tilde s \geq \tau_1] = \sigma_{\tilde s}\cdot H\left(\frac{\tau_1}{\sigma_{\tilde s}} \right)$. 
Putting everything together, we have:
\begin{align*}
    \E[t~|~\tilde s \geq \tau_1] &= \E[\alpha f + (1-\alpha)s~|~\tilde s \geq \tau_1] \\
    &= \alpha \cdot \E[f~|~\tilde s \geq \tau_1] + (1-\alpha) \cdot \E[s~|~\tilde s \geq \tau_1] \\
    &= \alpha \cdot \E[f] + (1-\alpha) \cdot \E[s~|~\tilde s \geq \tau_1] \quad \text{(since $f\perp s, f\perp \epsilon_s$)}\\
    &= (1-\alpha) \cdot \E[s~|~\tilde s \geq \tau_1] \quad \text{(since $\E[f] = 0$)}\\
    &= (1-\alpha)\cdot \frac{\sigma_s^2}{\sigma_{\tilde s}^2}\cdot \E[\tilde s ~|~\tilde s \geq \tau_1] \quad (\text{Claim~\ref{tech_clm:intermediate}})\\
    &= (1-\alpha)\cdot \frac{\sigma_s^2}{\sigma_{\tilde s}^2}\cdot \sigma_{\tilde s} \cdot H\left(\frac{\tau_1}{\sigma_{\tilde s}} \right) \quad (\text{Claim~\ref{tech_clm:hazard}})\\
    &= \frac{(1-\alpha)\sigma_s^2}{\sigma_{\tilde s}}\cdot H\left(\frac{\tau_1}{\sigma_{\tilde s}}\right).
\end{align*}

\subsection{Proof of Lemma~\ref{lem:cent_comp}}\label{pf:lem2}

Suppose, groups $A$ and $B$ have demographic ratios of $\lambda$ and $1-\lambda$ respectively. Let $\Phi_A(\cdot)$ and $\Phi_B(\cdot)$ indicate the CDFs of the test score distributions of each group. The test score distribution $\Phi_M(\cdot)$ of the mixture distribution is therefore given as follows: 
\[
     \Phi_M(x) = \lambda \cdot \Phi_A(x) + (1-\lambda)\cdot \Phi_B(x) \quad \forall~x \in \mathbb{R},
\]
and the corresponding complementary CDF is given by: 
\[
    \bar \Phi_M(x) = \lambda \cdot \bar \Phi_A(x) + (1-\lambda)\cdot \bar \Phi_B(x) \quad \forall~x \in \mathbb{R}
\]
Now, 
\begin{align*}
    &\mathbb{P}\left[ \text{randomly picked applicant belongs to Gr A} ~|~ \text{test score}\geq \tau_1 \right] \\
    = & \frac{\mathbb{P}\left[ \text{randomly picked applicant belongs to Gr A, test score $\geq \tau_1$} \right] }{ \mathbb{P}\left[ \text{a random draw from $\Phi(\cdot)$ has a value $\geq \tau_1$} \right] } \\
    = & \frac{ \mathbb{P}\left[ \text{randomly picked applicant belongs to Gr A} \right] \cdot \mathbb{P}\left[ \text{test score $\geq \tau_1$} ~|~ \text{applicant belongs to Gr A} \right]}{ \mathbb{P}\left[ \text{a random draw from $\Phi_M(\cdot)$ has a value $\geq \tau_1$} \right] } \\
    = & \frac{\lambda \cdot \bar \Phi_A(\tau_1)}{ \bar \Phi_M(\tau_1) }. 
\end{align*}
Similarly, for group $B$, 
\[
      \mathbb{P}\left[ \text{randomly picked applicant belongs to Gr B} ~|~ \text{test score}\geq \tau_1 \right] = \frac{(1-\lambda) \cdot \bar \Phi_B(\tau_1)}{ \bar \Phi_M(\tau_1) }.
\]
This concludes the proof of the lemma.

\subsection{Proof of Theorem~\ref{thm:D_stats}}\label{pf:thm1}
Suppose that $K$ applicants have been hired from the mixed applicant pool at threshold $\tau_1$ through a selection process delegated to the agent. Let $X_i$ be the indicator random variable that takes value $1$ if hired applicant $i$ belongs to Group $A$, $0$ otherwise. Lemma~\ref{lem:cent_comp} shows that $X_i \sim \text{Ber}\left( \frac{ \Lambda_A \cdot \bar \Phi_A(\tau_1)  }{\bar \Phi_M(\tau_1)} \right)$. Note that can express the fairness metric $\mathcal{D}$ in terms of $X_i$'s as follows: 
\begin{align*}
    \mathcal{D} = \frac{1}{K}\sum_{i=1}^K \underbrace{\left(\frac{X_i}{\Lambda_A} - \frac{1-X_i}{\Lambda_B} \right)}_{Z_i}.
\end{align*}
Therefore, for $i \in [K]$, $Z_i$ has the following pmf: 
\[
         Z_i = \begin{cases}
            & \frac{1}{\lambda} \quad \text{w.p.} \quad \frac{ \lambda \cdot \bar \Phi_A(\tau_1)  }{\bar \Phi_M(\tau_1)}, \\
            & - \frac{1}{1-\lambda} \quad \text{w.p.} \quad \frac{ (1-\lambda) \cdot \bar \Phi_B(\tau_1)  }{\bar \Phi_M(\tau_1)}.
         \end{cases}
\] 
Then, we have $\mathbb{E}[Z_i] = \frac{ \bar \Phi_A(\tau_1) - \bar \Phi_B(\tau_1) }{\bar \Phi_M(\tau_1)}$. 
Further note that $Z_i$'s are mutually independent. Now, 
\begin{align*}
    \mathbb{E}[\vert\mathcal{D}\vert] = \mathbb{E}\left[ \frac{1}{K} \left\vert \sum_{i=1}^K Z_i \right\vert \right] 
        = \frac{1}{K}\mathbb{E} \left[ \left\vert \sum_{i=1}^K Z_i \right\vert \right] \leq \frac{1}{K} \mathbb{E}\left[ \sum_{i=1}^K |Z_i|  \right] = \mathbb{E}[|Z_i|] = \frac{ \bar \Phi_A(\tau_1) + \bar \Phi_B(\tau_1) }{\bar \Phi_M(\tau_1)}. 
\end{align*}
The inequality used above follows from triangle inequality. We can also derive a lower bound using Jensen' inequality for convex functions as follows: 
\begin{align*}
    \mathbb{E}[\vert\mathcal{D}\vert] = \mathbb{E}\left[ \frac{1}{K} \left\vert \sum_{i=1}^K Z_i \right\vert \right] 
        = \frac{1}{K}\mathbb{E} \left[ \left\vert \sum_{i=1}^K Z_i \right\vert \right] \geq \frac{1}{K} \left\vert \mathbb{E}[ \sum_{i=1}^K Z_i] \right\vert = \left\vert \mathbb{E}[Z_i] \right\vert = \frac{ \vert \bar \Phi_A(\tau_1) - \bar \Phi_B(\tau_1) \vert }{\bar \Phi_M(\tau_1)}. 
\end{align*}


\section{Proofs for Section~\ref{sec:decentralization}}

\subsection{Proof of Theorem~\ref{thm:opt_n_tau}}\label{pf:thm_opt_tau}
We will complete the proof in $3$ parts: in the first part, we will compute the conditional expectation term so that we can rewrite the objective function in problem~\eqref{opt:prof} in closed form. In the second part, we will argue how to solve the optimization problem itself. Finally, in part $3$, we will argue about properties of the optimal threshold that enable efficient computation. 

\subsubsection*{Part $1$: Computing $\E[t~|~\tldt \geq \tldtau]$}
\begin{align*}
    \E[t~|~\tldt \geq \tldtau] &= \frac{\sigma_t^2}{\sigma_{\tldt}^2}\cdot \E[\tldt ~|~\tldt \geq \tldtau] \quad (\text{identical to Claim~\ref{tech_clm:intermediate}})\\
    &= \frac{\sigma_t^2}{\sigma_{\tldt}^2}\cdot \sigma_{\tldt} \cdot H\left(\frac{\tldtau}{\sigma_{\tldt}} \right) \quad (\text{identical to Claim~\ref{tech_clm:hazard}})\\
    &= \frac{\sigma_t^2}{\sigma_{\tldt}}\cdot \frac{ \phi \left(\tldtau/\sigma_{\tldt}\right) }{\Phi^c \left(\tldtau/\sigma_{\tldt}\right)  }.
\end{align*}

\subsubsection*{Part $2$: Solving the optimization problem}
Now, we define: 
\begin{align*}
    O^* &= \max_{\nrev \geq 0, \tldtau} \quad \nrev \cdot \cP[\tldt \geq \tldtau]\cdot \E[t~|~\tldt \geq \tldtau] - \nrev\cdot \crev \\
    &s.t. \quad \nrev \cdot \cP[\tldt\geq \tldtau] \leq k. 
\end{align*}
Now, note that any solution $(n, \tldtau)$ of the form $(0, \tldtau)$ is feasible and produces an objective value of $0$. Therefore, $O^* \geq 0$. There are $2$ cases: 
\begin{enumerate}
    \item Case $1$ ($O^* > 0$): Suppose, $(\nrev^*, \tldtau^*)$ is the optimal solution which produces $O^*$. Clearly, $\nrev^* > 0$ and $\cP[\tldt \geq \tldtau^*]\cdot \E[t~|~\tldt \geq \tldtau^*] - \crev > 0$. We claim that the constraint: $\nrev \cdot \cP[\tldt \geq \tldtau] \leq k$ must be active at ($\nrev^*, \tldtau^*$). This follows directly from the fact that the objective function is increasing in $\nrev$ at the optimal solution $\tldtau^*$. Therefore, we must have: 
    \[
         \nrev^* = \frac{k}{\cP[\tldt \geq \tldtau^*]} \quad \text{if~} \cP[\tldt \geq \tldtau^*]\cdot \E[t~|~\tldt \geq \tldtau^*] > \crev. 
    \]
    In this case, we can reduce the optimization problem to the following form: 
    \[
         \max_{\tldtau} \quad \frac{k}{\cP[\tldt \geq \tldtau]} \cdot \cP[\tldt \geq \tldtau]\cdot \E[t~|~\tldt \geq \tldtau] - \crev \cdot \frac{k}{\cP[\tldt \geq \tldtau]} 
    \]
    or equivalently, 
    \[
      \tldtau^* = arg\max_{\tldtau} \quad \E[t~|~\tldt \geq \tldtau] - \frac{\crev}{\cP[\tldt \geq \tldtau]}
    \]
    Substituting the expressions for $\E[t~|~\tldt \geq \tldtau]$ and $\cP[\tldt \geq \tldtau]$, we obtain: 
    \[
        \tldtau^* = ~arg\max_{\tldtau} \quad \underbrace{\frac{\sigma_t^2}{\sigma_{\tldt}}\cdot \frac{\phi\left(\tldtau/\sigma_{\tldt}\right)}{\Phi^c\left(\tldtau/\sigma_{\tldt}\right)} - \frac{\crev}{\Phi^c\left(\tldtau/\sigma_{\tldt}\right)}}_{v(\tldtau)}
    \]
    \item Case $2$ ($O^* = 0$): From the analysis in Case $1$, it is clear that $O^* = 0$ if and only if $\cP[\tldt \geq \tldtau]\cdot \E[t~|~\tldt \geq \tldtau] - \crev \leq 0$ for all $\tldtau$ or equivalently, $\E[t~|~\tldt \geq \tldtau] - \frac{\crev}{\cP[\tldt \geq \tldtau]} \leq 0$ for all $\tldtau$ (including $\tldtau^*$ obtained in Case $1$). In this case, $\nrev^* = 0$. 
\end{enumerate}
So far, we have shown that $\nrev^* > 0$ if and only if $v^* = \frac{\sigma_t^2}{\sigma_{\tldt}}\cdot \frac{\phi\left(\tldtau^*/\sigma_{\tldt}\right)}{\Phi^c\left(\tldtau^*/\sigma_{\tldt}\right)} - \frac{\crev}{\Phi^c\left(\tldtau^*/\sigma_{\tldt}\right)} > 0$. We now need to show the following: 
\[
    v^* > 0 \iff \crev < \frac{1}{\sqrt{2\pi}}\cdot \frac{\sigma_t^2}{\sigma_{\tldt}}. 
\]
For the forward direction ($\implies$), we have: 
\begin{align*}
    v^* > 0 
    &\implies \frac{\sigma_t^2}{\sigma_{\tldt}} \cdot \phi(\tldtau^*/\sigma_{\tldt}) - \crev > 0 \quad \text{(since for finite $\tldtau^*$, $\frac{1}{\Phi^c(\tldtau^*/\sigma_{\tldt})} > 0$)} \\
    &\implies \crev < \frac{\sigma_t^2}{\sigma_{\tldt}}\cdot \phi(\tldtau^*/\sigma_{\tldt}) \\
    &\implies \crev < \frac{\sigma_t^2}{\sigma_{\tldt}}\cdot \phi(0) \quad \text{($\phi(x)$ is maximized at $x = 0$)} \\
    &\implies \crev < \frac{1}{\sqrt{2\pi}}\cdot \frac{\sigma_t^2}{\sigma_{\tldt}}. 
\end{align*}
For the other direction ($\impliedby$), we first consider the principal's net utility per applicant hired at threshold $\tldtau$, given by:
\[
    v(\tldtau) = \frac{1}{\Phi^c(\tldtau/\sigma_{\tldt})}\left[\frac{\sigma_t^2}{\sigma_{\tldt}}\cdot \phi(\tldtau/\sigma_{\tldt}) - \crev\right]. 
\]
Now,
\begin{align*}
    \crev < \frac{1}{\sqrt{2\pi}}\cdot \frac{\sigma_t^2}{\sigma_{\tldt}} &\implies \frac{\sigma_t^2}{\sigma_{\tldt}}\cdot \phi(0) - \crev > 0\\
    &\implies v(0) > 0 \\
    &\implies v^* > 0 \quad \text{(since $v^* = \max_{\tldtau}~v(\tldtau)$)}.
\end{align*}
Finally, using the simplified condition, we can characterize the optimal solution of the principal's optimization problem as follows: 
\begin{itemize}
    \item Solve for $\tldtau^* = arg\max_{\tldtau} \quad \frac{\sigma_t^2}{\sigma_{\tldt}}\cdot \frac{\phi\left(\tldtau/\sigma_{\tldt}\right)}{\Phi^c\left(\tldtau/\sigma_{\tldt}\right)} - \frac{\crev}{\Phi^c\left(\tldtau/\sigma_{\tldt}\right)}$. 
    \item If $\crev < \frac{1}{\sqrt{2\pi}}\cdot \frac{\sigma_t^2}{\sigma_{\tldt}}$, choose $\nrev^* = \frac{k}{\cP[\tldt \geq \tldtau^*]}$.
    \item Else, choose $\nrev^* = 0$. 
\end{itemize}

\subsubsection*{Part $3$: Showing that $\tldtau^*$ is unique and can be computed efficiently. }
In order to complete the last part of the proof, we will introduce the following claim: 
\begin{claim}\label{clm:tau_sol}
$\tldtau^*$, which maximizes the objective in ~(\ref{exp:opt_tau}), is the unique solution to the following non-linear equation:
\begin{align}\label{exp:nonlinear_eq}
    g\left(\frac{\tldtau}{\sigma_{\tldt}}\right) :=  \frac{\sigma_t^2}{\sigma_{\tldt}}\cdot \left[\phi\left( \frac{\tldtau}{\sigma_{\tldt}}\right) - \frac{\tldtau}{\sigma_{\tldt}} \cdot \Phi^c\left( \frac{\tldtau}{\sigma_{\tldt}} \right) \right] - \crev = 0.
\end{align}
In particular, $g(\cdot)$ is monotonically decreasing and $\tldtau^*$ is finite and can be obtained easily using binary search.  
\end{claim}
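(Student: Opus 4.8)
The plan is to locate the maximizer of $v(\tldtau)$ through its first-order condition and then show that the resulting stationarity equation has exactly one root. Writing $u = \tldtau/\sigma_{\tldt}$ and recalling the standard normal identities $\phi'(u) = -u\phi(u)$ and $\frac{d}{du}\Phi^c(u) = -\phi(u)$, I would differentiate
\[
v = \frac{1}{\Phi^c(u)}\left[\frac{\sigma_t^2}{\sigma_{\tldt}}\phi(u) - \crev\right]
\]
with respect to $u$. After collecting terms over the common denominator $(\Phi^c(u))^2$, the derivative factors as
\[
v'(u) = \frac{\phi(u)}{(\Phi^c(u))^2}\left[\frac{\sigma_t^2}{\sigma_{\tldt}}\big(\phi(u) - u\,\Phi^c(u)\big) - \crev\right].
\]
The leading factor $\phi(u)/(\Phi^c(u))^2$ is strictly positive for every finite $u$, so the stationary points of $v$ coincide \emph{exactly} with the zeros of the bracket, which is precisely $g(u)$ from~(\ref{exp:nonlinear_eq}). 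This establishes that $\tldtau^*$ must satisfy $g(\tldtau^*/\sigma_{\tldt}) = 0$.

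Next I would establish that $g$ is strictly decreasing. Differentiating the inner expression gives $\frac{d}{du}\big(\phi(u) - u\,\Phi^c(u)\big) = -u\phi(u) - \Phi^c(u) + u\phi(u) = -\Phi^c(u)$, hence $g'(u) = -\frac{\sigma_t^2}{\sigma_{\tldt}}\Phi^c(u) < 0$ for all $u$. Thus $g$ is a strictly decreasing continuous function and can have at most one root. For existence and finiteness of the root I would pin down the two limits: as $u \to -\infty$ we have $\phi(u) \to 0$ while $-u\,\Phi^c(u) \to +\infty$, so $g(u) \to +\infty$; as $u \to +\infty$ the Mills-ratio asymptotic $\Phi^c(u) \sim \phi(u)/u$ forces $\phi(u) - u\,\Phi^c(u) \to 0^+$, so $g(u) \to -\crev < 0$. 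By the intermediate value theorem there is a unique finite $u^*$ with $g(u^*) = 0$, and therefore a unique finite $\tldtau^* = \sigma_{\tldt}\,u^*$.

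Finally I would confirm this stationary point is a genuine global maximizer rather than a minimum or saddle. Because $v'(u)$ carries the sign of $g(u)$ (the prefactor being strictly positive) and $g$ crosses zero from above at $u^*$, we get $v' > 0$ for $u < u^*$ and $v' < 0$ for $u > u^*$; hence $v$ rises strictly then falls strictly, making $u^*$ its unique global maximum. The monotonicity and continuity of $g$, together with its sign change across $u^*$, immediately justify recovering $u^*$ (and hence $\tldtau^*$) by bisection on any bracket $[u_-, u_+]$ with $g(u_-) > 0 > g(u_+)$. The main obstacle I anticipate is the clean handling of the $u \to +\infty$ limit: naively both $\phi(u)$ and $u\,\Phi^c(u)$ vanish, so I must invoke the Mills-ratio tail expansion---or, equivalently, observe that $\phi(u) - u\,\Phi^c(u) = \E[(Z-u)^+]$ for standard normal $Z$, which is manifestly positive and tends to $0$---to conclude the limit equals $-\crev$ rather than an indeterminate form. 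Everything else reduces to routine differentiation once the factorization of $v'(u)$ is in hand.
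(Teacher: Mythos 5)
Your proof is correct and follows essentially the same route as the paper: reduce to the standardized variable, factor $v'(u)$ as a positive prefactor times $g(u)$, show $g'(u) = -\frac{\sigma_t^2}{\sigma_{\tldt}}\Phi^c(u) < 0$, and use the sign change of $g$ together with the intermediate value theorem to get a unique finite root that is the global maximizer, recoverable by bisection. In fact your tail limits ($g \to +\infty$ as $u \to -\infty$ and $g \to -\crev$ as $u \to +\infty$) are stated with the correct orientation, whereas the paper's appendix has the two limits transposed.
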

\begin{proof}
Recall that:
\[
    \tldtau^* = arg\max_{\tldtau} \quad \underbrace{\frac{\frac{\sigma_t^2}{\sigma_{\tldt}}\cdot \phi(\tldtau/\sigma_{\tldt})-\crev}{\Phi^c(\tldtau/\sigma_{\tldt})}}_{v(\tldtau)}, 
\]
Define $a = \frac{\sigma_t^2}{\sigma_{\tldt}}$ and $z = \frac{\tldtau}{\sigma_{\tldt}}$. Using this transformation, we have rewrite $v(\tldtau)$ as $w(z)$ where:
\[
     w(z) = \frac{a\cdot \phi(z)-\crev}{\Phi^c(z)},
\]
Now, if $z^* = arg\max_{z} ~w(z)$, it should be immediately clear that $\tldtau^* = \sigma_{\tldt}\cdot z^*$. Therefore, our goal now is to show that $z^*$ is the unique solution to $g(z) := a \cdot \phi(z) - a \cdot z \cdot \Phi^c(z) - \crev = 0$. Note that $w(z)$ is differentiable in $z$, therefore using the first order conditions, we conclude that any local extremum should satisfy: 
\begin{align*}
    w'(z) = 0 
    \implies &\frac{ a\cdot\Phi^c(z)\cdot \phi'(z) + \phi(z)\left(a \cdot \phi(z) - \crev \right) }{ \left( \Phi^c(z)\right)^2 } = 0 \\
    \implies & \frac{-a\cdot z \cdot \Phi^c(z)\cdot \phi(z) + \phi(z)\left(a \cdot \phi(z) - \crev \right) }{ \left( \Phi^c(z)\right)^2 } = 0  \quad \text{(since $\phi'(z) = -z\cdot \phi(z)$)} \\
    \implies & \frac{\phi(z)}{\left( \Phi^c(z)\right)^2 }\cdot \left[ -a\cdot z \cdot \Phi^c(z) + a\cdot \phi(z) - \crev \right] = 0 \\
    \implies & \frac{\phi(z)}{\left( \Phi^c(z)\right)^2 }\cdot g(z) = 0. 
\end{align*}
We claim that $\pm\infty$ cannot be the maximizers of $w(z)$. This follows from the fact that $\lim_{z \to +\infty}w(z) = -\infty$ and $\lim_{z \to -\infty}w(z) = -\crev$, but $u(0) > -\crev$. Therefore, the maximizer of $w(z)$ must be finite which immediately implies that the only possible maximizer of $w(z)$ must be a solution of $g(z) = 0$. \\

Now, $g(z)$ is continuous in $z$ and $\lim_{z \to\infty}g(z) > 0$ and $\lim_{z \to -\infty}g(z) < 0$. Therefore, by the intermediate value theorem, there must be a finite solution to $g(z) = 0$. Additionally, we have: 
\begin{align*}
    g'(z) &= a \cdot \phi'(z) - a \cdot \Phi^c(z) - a \cdot z \cdot (-\phi(z))\\
    &= - a \cdot z \cdot \phi(z) -  a \cdot \Phi^c(z) + a \cdot z \cdot \phi(z) \quad (\text{using $\phi'(z) = -z\cdot \phi(z)$}) \\
    &= -  a \cdot \Phi^c(z) < 0. 
\end{align*}
This means that $g(z)$ is monotonically decreasing in $z$ which implies the following:
\begin{itemize}
    \item $g(z) = 0$ has a unique solution $z^*$;
    \item $w'(z) > 0$ for $z < z^*$ and $w'(z) < 0$ for $z > z^*$, showing that $w(z)$ is concave and $z^*$ is the unique maximizer of $w(z)$; and 
    \item $z^*$ can be obtained using binary search. 
\end{itemize}
This concludes the proof of the claim. 
\end{proof}

\subsection{Equivalent Characterizations of Corollary~\ref{corr:viable}}\label{pf:corr_viable}
\begin{claim}\label{clm:tau_positive}
The following statements are equivalent:
\begin{enumerate}[label=\alph*)]
    \item $\tldtau^* > 0$,
    \item $\nrev^* > 0$,
    \item $\crev < \frac{1}{\sqrt{2\pi}}\cdot \frac{\sigma_t^2}{\sigma_{\tldt}}$.
\end{enumerate}
\end{claim}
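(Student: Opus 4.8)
The plan is to leverage the machinery already built in the proof of Theorem~\ref{thm:opt_n_tau}. Two of the three equivalences, namely (b) $\iff$ (c), are in fact established essentially verbatim in Part~2 of that proof, where it is shown that $\nrev^* > 0$ if and only if $v^* > 0$ if and only if $\crev < \frac{1}{\sqrt{2\pi}}\cdot \frac{\sigma_t^2}{\sigma_{\tldt}}$. So the only genuinely new content is to tie the positivity of the optimal threshold $\tldtau^*$ into this chain, i.e.\ to prove (a) $\iff$ (c); chaining the two equivalences then closes the loop.

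For the new piece I would work with the transformed variable $z^* = \tldtau^*/\sigma_{\tldt}$ and the function $g(z) = a\,\phi(z) - a\,z\,\Phi^c(z) - \crev$ with $a = \sigma_t^2/\sigma_{\tldt}$, exactly as in Claim~\ref{clm:tau_sol}. That claim supplies the two crucial facts I need: $z^*$ is the \emph{unique} root of $g$, and $g$ is \emph{strictly decreasing}. Since $\sigma_{\tldt} > 0$, we have $\tldtau^* > 0 \iff z^* > 0$, so it suffices to characterize the sign of $z^*$.

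First I would use the strict monotonicity of $g$ to convert the sign of the root into the sign of $g$ at a reference point: because $g$ is strictly decreasing and $g(z^*) = 0$, we have $z^* > 0$ exactly when $g(0) > 0$ (if $g(0) > 0 = g(z^*)$ then the root lies to the right of $0$, and conversely, if $z^* > 0$ then $g(0) > g(z^*) = 0$). Then I would simply evaluate $g(0) = a\,\phi(0) - \crev = \frac{1}{\sqrt{2\pi}}\cdot \frac{\sigma_t^2}{\sigma_{\tldt}} - \crev$, using $\phi(0) = 1/\sqrt{2\pi}$ and the vanishing of the $z\,\Phi^c(z)$ term at $z = 0$. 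Hence $g(0) > 0$ is precisely condition (c), which gives (a) $\iff$ (c).

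There is essentially no hard step here: the only things to be careful about are (i) citing (b) $\iff$ (c) from Theorem~\ref{thm:opt_n_tau} rather than re-deriving it, and (ii) applying the monotonicity direction of $g$ correctly when translating ``the root is positive'' into ``$g(0)$ is positive.'' The computation of $g(0)$ is immediate, so the entire argument reduces to a single sign comparison, and the ``main obstacle'' is really just bookkeeping the equivalences into one coherent chain.
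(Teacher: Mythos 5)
Your proposal is correct and uses essentially the same machinery as the paper: the paper also reduces everything to the fact (from Claim~\ref{clm:tau_sol}) that $\tldtau^*/\sigma_{\tldt}$ is the unique root of the strictly decreasing function $g$, and compares against $g(0) = \frac{1}{\sqrt{2\pi}}\cdot\frac{\sigma_t^2}{\sigma_{\tldt}} - \crev$. The only difference is organizational: the paper closes the loop via the cyclic chain $c \Rightarrow b \Rightarrow a \Rightarrow c$ (using the first-order condition to get $b \Rightarrow a$), whereas you prove the two biconditionals $(b)\iff(c)$ and $(a)\iff(c)$ directly, which is equally valid.
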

\begin{proof}
In order to complete the proof, we will show that $c \implies b$, $b \implies a$ and $a \implies c$ in that order. 
\begin{enumerate}
\item The first part (showing $c \implies b$) follows directly from the statement of Theorem~\ref{thm:opt_n_tau}. 

\item For the second part ($b \implies a$), suppose that $\nrev^* > 0$. Therefore, it must be that $v^* = v(\tldtau^*)$ must be $ > 0$. This implies: 
\[
    \frac{1}{\Phi^c(\tldtau^*/\sigma_{\tldt})}\cdot \left[ \frac{\sigma_t^2}{\sigma_{\tldt}}\cdot \phi(\tldtau^*/\sigma_{\tldt}) - \crev \right] > 0, 
\]
which implies that $\frac{\sigma_t^2}{\sigma_{\tldt}}\cdot \phi(\tldtau^*/\sigma_{\tldt}) - \crev > 0$. But, we know that $g(\tldtau^*/\sigma_{\tldt}) = 0$ which means that: 
\[
        \frac{\sigma_t^2}{\sigma_{\tldt}}\cdot \phi(\tldtau^*/\sigma_{\tldt}) - \crev = \frac{\sigma_t^2}{\sigma_{\tldt}}\cdot \frac{\tldtau^*}{\sigma_{\tldt}}\cdot \Phi^c(\tldtau^*/\sigma_{\tldt}).
\]
Then, $\tldtau^* \cdot \Phi^c(\tldtau^*/\sigma_{\tldt}) > 0$ which clearly implies that $\tldtau^* > 0$. 

\item Finally, the first part of the proof ($a \implies c$), note that $\tldtau^* > 0$ implies that $g\left( \frac{\tldtau^*}{\sigma_{\tldt}} \right) < g(0) = \frac{1}{\sqrt{2\pi}}\cdot \frac{\sigma_t^2}{\sigma_{\tldt}} - \crev$ since $g(\cdot)$ is monotonically decreasing in its argument. But, $g\left( \frac{\tldtau^*}{\sigma_{\tldt}} \right) = 0$. Therefore, it must be that $\crev < \frac{1}{\sqrt{2\pi}}\cdot \frac{\sigma_t^2}{\sigma_{\tldt}}$. This concludes the proof.
\end{enumerate}
\end{proof}

\subsection{Proof of Lemma~\ref{lem:decent_util}}\label{pf:lem_decent_util}
The proof of the lemma follows directly from Claim~\ref{clm:tau_sol}. Recall that if $\tldtau^*$ is the optimal threshold for the principal, then it must satisfy $g(\tldtau^*/\sigma_{\tldt}) = 0$. This implies:
\begin{align*}
    &\frac{\sigma_t^2}{\sigma_{\tldt}} \cdot \phi\left(\frac{\tldtau^*}{\sigma_{\tldt}} \right) - \frac{\sigma_t^2}{\sigma_{\tldt}^2}\cdot \tldtau^* \cdot \Phi^c\left( \frac{\tldtau^*}{\sigma_{\tldt}}\right)  - \crev = 0\\
    \implies & \frac{\sigma_t^2}{\sigma_{\tldt}} \cdot \phi\left(\frac{\tldtau^*}{\sigma_{\tldt}} \right) - \crev = \frac{\sigma_t^2}{\sigma_{\tldt}^2}\cdot \tldtau^* \cdot \Phi^c\left( \frac{\tldtau^*}{\sigma_{\tldt}}\right) \\
    \implies &  \frac{\sigma_t^2}{\sigma_{\tldt}} \cdot \frac{ \phi\left( \tldtau^*/\sigma_{\tldt} \right)  }{ \Phi^c\left( \tldtau^*/\sigma_{\tldt} \right) } - \frac{\crev}{\Phi^c\left( \tldtau^*/\sigma_{\tldt} \right) } = \frac{\sigma_t^2}{\sigma_{\tldt}^2}\cdot \tldtau^*.
\end{align*}
The LHS represents the principal's utility per hired applicant at threshold $\tldtau^*$. This concludes the proof of the lemma.

\subsection{Monotonicity Results}\label{app:monotonicity}
\begin{claim}\label{clm:tau_n_dep_c}
If $c_{rev,1} > c_{rev,2}$, then $\tldtau_1^* < \tldtau_2^*$ and $n_{rev,1}^* \leq n_{rev,2}^*$.
\end{claim}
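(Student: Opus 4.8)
The plan is to reduce both comparisons to the monotonicity properties already established in Claim~\ref{clm:tau_sol} and Theorem~\ref{thm:opt_n_tau}, rather than re-differentiating anything. By Claim~\ref{clm:tau_sol}, for a given cost the optimal threshold $\tldtau^*$ is the unique solution of $g(\tldtau^*/\sigma_{\tldt}) = 0$, where $g(z) = \frac{\sigma_t^2}{\sigma_{\tldt}}\left[\phi(z) - z\,\Phi^c(z)\right] - \crev$. I would isolate the cost by writing the optimality condition as $h(z^*) = \crev$, where $h(z) := \frac{\sigma_t^2}{\sigma_{\tldt}}\left[\phi(z) - z\,\Phi^c(z)\right]$ does not depend on $\crev$. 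The key observation is that $h$ is strictly decreasing: indeed $h'(z) = g'(z) = -\frac{\sigma_t^2}{\sigma_{\tldt}}\Phi^c(z) < 0$, which is exactly the computation already carried out in Claim~\ref{clm:tau_sol}. Writing $z_j^* = \tldtau_j^*/\sigma_{\tldt}$, the conditions $h(z_1^*) = c_{rev,1} > c_{rev,2} = h(z_2^*)$ together with strict monotonicity of $h$ force $z_1^* < z_2^*$, and hence $\tldtau_1^* = \sigma_{\tldt}\,z_1^* < \sigma_{\tldt}\,z_2^* = \tldtau_2^*$. Note this argument is uniform: $\tldtau^*$ is defined as the maximizer of $v$ via $g=0$ regardless of whether the process is viable, so the threshold ordering holds unconditionally.

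For the second part I would invoke the explicit form of $\nrev^*$ from Theorem~\ref{thm:opt_n_tau}: setting $\bar c := \frac{\sigma_t^2}{\sigma_{\tldt}}\cdot\frac{1}{\sqrt{2\pi}}$, we have $\nrev^* = k/\Phi^c(z^*)$ when $\crev < \bar c$ and $\nrev^* = 0$ otherwise. Since $c_{rev,1} > c_{rev,2}$, viability is monotone in the cost: if process $1$ is viable then so is process $2$. I would then split into three cases. When both are viable, $\nrev_{j}^* = k/\Phi^c(z_j^*)$, and because $z_1^* < z_2^*$ with $\Phi^c$ strictly decreasing we get $\Phi^c(z_1^*) > \Phi^c(z_2^*)$, hence $\nrev_{1}^* < \nrev_{2}^*$. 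When process $1$ is not viable but process $2$ is, $\nrev_{1}^* = 0 \le \nrev_{2}^*$ trivially. When neither is viable, $\nrev_{1}^* = \nrev_{2}^* = 0$. In all cases $\nrev_{1}^* \le \nrev_{2}^*$, as claimed.

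I do not expect a serious obstacle here, since the analytic heavy lifting (uniqueness of $\tldtau^*$ and the sign of $g'$) is already contained in Claim~\ref{clm:tau_sol}. The only points requiring care are bookkeeping ones: first, ensuring the threshold comparison is stated for the argmax $\tldtau^*$ even in the non-viable regime (where $\nrev^* = 0$ but $\tldtau^*$ is still well-defined as the root of $g$), so that the $\tldtau_1^* < \tldtau_2^*$ conclusion is not vacuous; and second, handling the boundary case $\crev = \bar c$ correctly, which falls under ``not viable'' and contributes $\nrev^* = 0$. Both are resolved by the clean case split above, which is why the final inequality is $\le$ rather than $<$.
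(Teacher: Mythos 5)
Your proposal is correct and follows essentially the same route as the paper: the threshold ordering comes from the strict monotonicity of $g$ (equivalently your $h(z)=\crev$ reformulation) established in Claim~\ref{clm:tau_sol}, and the ordering of $\nrev^*$ is handled by the same three-way case split, since your viability trichotomy is exactly the paper's sign trichotomy on $\tldtau_1^*,\tldtau_2^*$ via Claim~\ref{clm:tau_positive}. The only (cosmetic) difference is that you use the formula $\nrev^* = k/\Phi^c(z^*)$ from the theorem statement, whereas the paper's proof writes $\nrev^* = \frac{k-\nold}{\cP[\tldt\geq\tldtau^*]}$, an apparent leftover from an earlier draft.
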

\begin{proof}
The order of optimal thresholds follows directly from noting that the function $g\left(\frac{\tldtau}{\sigma_{\tldt}}\right)$ is monotonically decreasing in $\tldtau$. Now, there are $3$ cases:
\begin{enumerate}
    \item $0 < \tldtau_1^* < \tldtau_2^* \implies n_{rev,1}^* < n_{rev,2}^*$ since $\nrev^* = \frac{k-\nold}{\cP[\tldt \geq \tldtau^*]}$.
    \item $\tldtau_1^* \leq 0 < \tldtau_2^* \implies 0 = n_{rev,1}^* < n_{rev,2}^*$ (by Claim~\ref{clm:tau_positive}).
    \item $\tldtau_1^* < \tldtau_2^* \leq 0 \implies n_{rev,1}^* = n_{rev,2}^* = 0$.
\end{enumerate} 
This concludes the proof.
\end{proof}

\begin{claim}\label{clm:tau_dep_sigmat}
Suppose, $\crev$ is small enough that the optimal threshold is positive. Then, a higher value of $\sigma_t$ leads to a higher optimal threshold $\tldtau^*$, i.e., $\sigma_{t,1} > \sigma_{t,2} \implies \tldtau_1^* > \tldtau_2^*$.    
\end{claim}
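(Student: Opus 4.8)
The plan is to work with the characterization of $\tldtau^*$ from Claim~\ref{clm:tau_sol} and turn the comparative-statics question into a monotonicity statement about a single scalar equation. Recall that $\tldtau^* = \sigma_{\tldt}\, z^*$, where $z^*$ is the unique root of
\[
    g(z) = \frac{\sigma_t^2}{\sigma_{\tldt}}\Bigl[\phi(z) - z\,\Phi^c(z)\Bigr] - \crev.
\]
The first step is to isolate the dependence on $\sigma_t$. Writing $h(z) := \phi(z) - z\,\Phi^c(z)$, the optimality condition $g(z^*)=0$ is equivalent to
\[
    h(z^*) = \frac{\crev\,\sigma_{\tldt}}{\sigma_t^2}.
\]
Here I treat $\sigma_{\tldt}$ and $\crev$ as held fixed and vary only $\sigma_t$ (the quantity the claim concerns), so the entire $\sigma_t$-dependence lives in the right-hand side.

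The key computation is that $h$ is strictly decreasing. Differentiating and using $\phi'(z) = -z\phi(z)$ gives
\[
    h'(z) = \phi'(z) - \Phi^c(z) - z\bigl(-\phi(z)\bigr) = -z\phi(z) - \Phi^c(z) + z\phi(z) = -\Phi^c(z) < 0 .
\]
Thus $h$ is a strictly decreasing bijection, which is exactly what makes the argument go through: as $\sigma_t$ increases, the right-hand side $\crev\sigma_{\tldt}/\sigma_t^2$ strictly decreases, so to keep $h(z^*)$ equal to it the root $z^*$ must strictly increase. Concretely, if $\sigma_{t,1} > \sigma_{t,2}$, then $\crev\sigma_{\tldt}/\sigma_{t,1}^2 < \crev\sigma_{\tldt}/\sigma_{t,2}^2$, hence $h(z_1^*) < h(z_2^*)$, and monotonicity of $h$ forces $z_1^* > z_2^*$. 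Multiplying through by the fixed positive constant $\sigma_{\tldt}$ then yields $\tldtau_1^* > \tldtau_2^*$, which is the claim.

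The hypothesis that $\crev$ is small enough to make the threshold positive enters only to guarantee we are in the relevant regime: $\tldtau^* > 0 \iff z^* > 0 \iff h(z^*) < h(0) = 1/\sqrt{2\pi}$, i.e.\ $\crev < \sigma_t^2/(\sqrt{2\pi}\,\sigma_{\tldt})$, matching Corollary~\ref{corr:viable}. Since increasing $\sigma_t$ only shrinks the right-hand side, positivity is automatically preserved for the larger $\sigma_t$ once it holds for the smaller one, so no separate case analysis is needed. I do not anticipate a genuine obstacle here: the only point requiring care is bookkeeping about what is held fixed (namely $\sigma_{\tldt}$ and $\crev$) when asserting ``higher $\sigma_t$,'' and the sole nontrivial ingredient is the sign computation $h'(z) = -\Phi^c(z) < 0$. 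An implicit-function-theorem argument applied directly to $g$ would also work, but reducing to the scalar equation $h(z^*) = \crev\sigma_{\tldt}/\sigma_t^2$ and invoking strict monotonicity of $h$ is cleaner and avoids differentiating through $z^*$.
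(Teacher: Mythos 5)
Your reduction to the scalar equation $h(z^*) = \crev\,\sigma_{\tldt}/\sigma_t^2$ with $h(z) = \phi(z) - z\,\Phi^c(z)$ strictly decreasing is exactly the paper's argument (the sign computation $h'(z) = -\Phi^c(z) < 0$ is already established inside the proof of Claim~\ref{clm:tau_sol}). However, there is a genuine gap in your bookkeeping: you declare that $\sigma_{\tldt}$ is ``held fixed'' while $\sigma_t$ varies. In this model that is not a free choice. Since $\tldt = t + \alpha\epsilon_f + (1-\alpha)\epsilon_s$, we have $\sigma_{\tldt}^2 = \sigma_t^2 + \alpha^2\sigma_{ef}^2 + (1-\alpha)^2\sigma_{es}^2$, so increasing $\sigma_t$ (with the noise variances fixed) necessarily increases $\sigma_{\tldt}$ as well; this coupling is precisely the situation in which the claim is invoked for Claim~\ref{clm:util_alpha}, where $\sigma_t$ moves because $\alpha$ moves. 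Your assumption is load-bearing in two places: (i) the assertion that the right-hand side $\crev\,\sigma_{\tldt}/\sigma_t^2$ decreases is no longer immediate once the numerator also grows, and (ii) the final step ``multiplying through by the fixed positive constant $\sigma_{\tldt}$'' is exactly where a larger $\sigma_{\tldt,1}$ must be handled rather than assumed away.

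Both holes are repairable, and the paper's proof repairs them. It observes that
\[
  \frac{\sigma_{\tldt}}{\sigma_t^2} = \frac{\sqrt{\sigma_t^2 + \alpha^2\sigma_{ef}^2 + (1-\alpha)^2\sigma_{es}^2}}{\sigma_t^2}
\]
is decreasing in $\sigma_t$ (the numerator grows like $\sigma_t$, the denominator like $\sigma_t^2$), so the right-hand side of the optimality condition still shrinks and strict monotonicity of $h$ gives $z_1^* > z_2^*$, i.e.\ $\tldtau_1^*/\sigma_{\tldt,1} > \tldtau_2^*/\sigma_{\tldt,2}$. Then, since $\tldtau_2^* > 0$ by hypothesis and $\sigma_{\tldt,1} > \sigma_{\tldt,2}$, both factors in $\tldtau^* = \sigma_{\tldt}\,z^*$ increase, and $\tldtau_1^* > \tldtau_2^*$ follows. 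So the mathematical core of your write-up is correct, but as stated it proves the claim only for a counterfactual variation of $\sigma_t$ that the model does not permit; you need the extra observation about $\sigma_{\tldt}/\sigma_t^2$ to make the argument land.
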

\begin{proof}
Recall that $\tldtau^*$ is obtained as the unique solution to $g\left( \frac{\tldtau^*}{\sigma_{\tldt}}\right) = 0$, which implies that: 
\[
     \phi\left(\frac{\tldtau^*}{\sigma_{\tldt}} \right) - \frac{\tldtau^*}{\sigma_{\tldt}}\cdot \Phi^c \left(\frac{\tldtau^*}{\sigma_{\tldt}} \right) = \frac{\sigma_{\tldt}}{\sigma_t^2}\cdot \crev.
\]
As $\sigma_t$ increases, the ratio $\frac{\sigma_{\tldt}}{\sigma_t^2} = \frac{\sqrt{\sigma_t^2 + \alpha^2 \sigma_{ef}^2 + (1-\alpha)^2\sigma_{es}^2}}{\sigma_t^2}$ decreases and therefore, the RHS of the above equation decreases. Since we know that the LHS is monotonically decreasing in the ratio $\frac{\tldtau^*}{\sigma_{\tldt}}$, for the equality to still hold, the ratio should increase. Therefore, when $\sigma_{t,1} > \sigma_{t,2}$, we have: 
\[
     \frac{\tldtau_1^*}{\sigma_{\tldt,1}} > \frac{\tldtau_2^*}{\sigma_{\tldt,2}} \implies \frac{\tldtau_1^*}{\tldtau_2^*} > \frac{\sigma_{\tldt,1}}{\sigma_{\tldt,2}} > 1 \implies \tldtau_1^* > \tldtau_2^*.  
\]
\end{proof}

\subsection{Proof of Lemma~\ref{lem:decent_adv_better_util}}\label{pf:lem_decent_adv_better_util}
We will complete the proof separately for the two different models of disadvantage for group $B$.

\paragraph{Negatively biased signal mean.} We consider that Group $B$ is disadvantaged in the sense that the mean of the distribution of signal $\tilde s$ for group $B$ is negatively biased by amount $\beta$ with respect to Group $A$. The bias in $\tilde s$ translates into a bias $\beta'$ in the mean of the perceived quality ($\tilde t$) distribution for Group $B$ with $\beta' = (1-\alpha)\beta$, i.e., $\tilde t$ for Group $A$ follows $\mathcal{N}(0, \sigma_{\tldt}^2)$, while for group $B$ applicants, it follows $\mathcal{N}(-\beta', \sigma_{\tldt}^2)$. We first derive the revised expression for the net expected utility for the principal for hiring an applicant from group $B$ above some threshold $\tldtau$. We have: 
\begin{align*}
    v_B(\tldtau) &= \mathbb{E}\left[t~|~\tldt \geq \tldtau \right] - \frac{\crev}{\mathbb{P}\left[\tldt \geq \tldtau \right]} \\
    &= \frac{\sigma_t^2}{\sigma_{\tldt}^2}\cdot \mathbb{E}\left[\tldt + \beta'~|~ \tldt \geq \tldtau \right] - \frac{\crev}{\mathbb{P}\left[\tldt \geq \tldtau \right]} \\
    &= \frac{\sigma_t^2}{\sigma_{\tldt}^2}\cdot \beta' + \frac{\sigma_t^2}{\sigma_{\tldt}^2}\cdot \left(-\beta' + \sigma_{\tldt}\cdot H\left(\frac{\tldtau + \beta'}{\sigma_{\tldt}} \right) \right) - \frac{\crev}{ \Phi^c\left(\frac{\tldtau+\beta'}{\sigma_{\tldt}} \right) } \\
    &= \left[  \frac{\sigma_t^2}{\sigma_{\tldt}} \cdot  H\left(\frac{\tldtau + \beta'}{\sigma_{\tldt}} \right) - \frac{\crev}{ \Phi^c\left(\frac{\tldtau+\beta'}{\sigma_{\tldt}} \right) }   \right]
\end{align*}
Now, suppose, $\tldtau_A^*$ maximizes $v_A(\tldtau) = \frac{\sigma_t^2}{\sigma_{\tldt}} \cdot  H\left(\frac{\tldtau }{\sigma_{\tldt}} \right) - \frac{\crev}{ \Phi^c\left(\frac{\tldtau}{\sigma_{\tldt}} \right) }$. This implies that $\tldtau_A^* - \beta'$ must maximize $v_B(\tldtau)$, i.e., $\tldtau_B^* = \tldtau_A^* - \beta'$. \\
\noindent 
This immediately implies that:
\[
    v_B(\tldtau_B^*) = v_B(\tldtau_A^*-\beta') = v_A(\tldtau_A^*). 
\]
This concludes the proof for the first part. 

\paragraph{Noisier signal distribution, i.e., $\sigma_{e, A} < \sigma_{e, B}$.} The idea for this part of the proof is to use the simplified expression for the principal's optimal utility at equilibrium using Lemma~\ref{lem:decent_util}. We have: 
\[
   v^* = v(\tldtau^*) = \E[t~|~\tldt \geq \tldtau^*] - \frac{\crev}{\cP[\tldt \geq \tldtau^*]} = \frac{\sigma_t^2}{\sigma_{\tldt}}\cdot \frac{\phi(\tldtau^*/\sigma_{\tldt})}{\Phi^c(\tldtau^*/\sigma_{\tldt})} - \frac{\crev}{\Phi^c(\tldtau^*/\sigma_{\tldt})} = (\sigma_t^2)\cdot \frac{1}{\sigma_{\tldt}}\cdot \left(\frac{\tldtau^*}{\sigma_{\tldt}} \right).
\]

Now, recall that $\sigma_{t_A} = \sigma_{t_B}$. But, $\sigma_{e,A} < \sigma_{e,B} \implies \sigma_{\tldt_A} < \sigma_{\tldt_B} \implies \frac{1}{\sigma_{\tldt_A}} > \frac{1}{\sigma_{\tldt_B}}$. Now we will argue that $\frac{\tldtau_A^*}{\sigma_{\tldt_A}} > \frac{\tldtau_B^*}{\sigma_{\tldt_B}}$. From Claim~\ref{clm:tau_sol}, we know that $\frac{\tldtau^*}{\sigma_{\tldt}}$ must satisfy: 
\[
           \phi\left( \frac{\tldtau^*}{\sigma_{\tldt}} \right) - \frac{\tldtau^*}{\sigma_{\tldt}}\cdot \Phi^c\left( \frac{\tldtau^*}{\sigma_{\tldt}}\right) = \frac{\sigma_{\tldt}}{\sigma_t^2}\cdot \crev.
\]
For Group $B$, the RHS of the above equation is larger (due to larger $\sigma_{\tldt}$). Since the LHS is monotonically decreasing in $\frac{\tldtau^*}{\sigma_{\tldt}}$, it must be that $\frac{\tldtau_A^*}{\sigma_{\tldt_A}} > \frac{\tldtau_B^*}{\sigma_{\tldt_B}}$.
Putting all parts together, we conclude that $v_A^* > v_B^*$.

\subsection{Proof of Theorem~\ref{thm:no_hire_B}}\label{app:no_hire_B}

Note that the above optimization problem can be decoupled as follows: Pick $r_A$ such that $0 \leq r_A \leq k$ and then pick $0 \leq r_B \leq k - r_A$. Then, we can decompose the constraint as follows: 
\[
   \nrev(A)\cdot \cP[\tldt_A \geq \tldtau_A] \leq r_A; \quad \nrev(B)\cdot \cP[\tldt_B \geq \tldtau_B] \leq r_B, 
\]
which leads to the following independent optimization problems for groups $A$ and $B$. 
\begin{align*}
      \max_{\nrev(A), \tldtau_A} \quad &\nrev(A)\cdot \cP[\tldt_A \geq \tldtau_A] \cdot \E[t_A~|~\tldt_A \geq \tldtau_A] - \crev \nrev(A) \quad s.t. \\
      & \nrev(A) \cdot \cP[\tldt_A \geq \tldtau_A] \leq r_A, \nrev(A) \geq 0. 
\end{align*}
\begin{align*}
      \max_{\nrev(B), \tldtau_B} \quad &\nrev(B)\cdot \cP[\tldt_B \geq \tldtau_B] \cdot \E[t_B~|~\tldt_B \geq \tldtau_B] - \crev \nrev(B) \quad s.t. \\
      & \nrev(B) \cdot \cP[\tldt_B \geq \tldtau_B] \leq r_B, \nrev(B) \geq 0. 
\end{align*}
We know how to solve either of these problems in isolation. The assumption on $\crev$ guarantees that both problems have positive objective values at optimality. Let the optimal thresholds be given independently as $\tldtau_A^*$ and $\tldtau_B^*$. In that case, $\left( \frac{r_A}{\cP[\tldt_A \geq \tldtau_A^*]}, \frac{r_B}{\cP[\tldt_B \geq \tldtau_B^*]}, \tldtau_A^*, \tldtau_B^* \right)$ is a feasible solution to Problem~\ref{opt:joint}. Using an identical argument as in the proof of Theorem~\ref{thm:opt_n_tau}, we can show that at optimality, $r_A + r_B = k$. Thus, the problem reduces to choosing $r_A$ so that it maximizes the objective in~\ref{opt:joint}. Therefore, we have: 
\begin{align*}
   \max_{r_A \leq k} \quad r_A \cdot \E[t_A~|~\tldt_A \geq \tldtau_A^*] + (k- r_A) \cdot \E[t_B~|~\tldt_B \geq \tldtau_B^*] - \crev \cdot \left( \frac{r_A}{\cP[\tldt_A \geq \tldtau_A^*]} + \frac{k-r_A}{\cP[\tldt_B \geq \tldtau_B^*]} \right),
\end{align*}
which by re-arranging terms, we can rewrite as follows:
\begin{align*}
   \max_{r_A \leq k} \quad r_A \cdot \underbrace{\left[ \left(\E[t_A~|~\tldt_A \geq \tldtau_A^*] - \frac{\crev}{\Phi^c(\tldtau_A^*/\sigma_{\tldt_A})}\right) - \left(\E[t_B~|~\tldt_B \geq \tldtau_B^*] - \frac{\crev}{\Phi^c(\tldtau_B^*/\sigma_{\tldt_B})}\right)  \right]}_{(v_A^* - v_B^*)} \\
   + \underbrace{k \cdot \E[t_B~|~\tldt_B \geq \tldtau_B^*] - \crev \cdot \frac{k}{\cP[\tldt_b \geq \tldtau_B^*]}}_{\text{terms independent of $r_A$}}
\end{align*}
From Lemma~\ref{lem:decent_adv_better_util}, we know that $v_A^* > v_B^*$ which immediately shows that co-efficient of $r_A$ in the expression above is positive. Therefore, $r_A^* = k$ which implies that $r_B^* = 0$ or equivalently, $\nrev(B)^* = 0$. This concludes the proof.


\section{Proofs of Supplementary Results}

\subsection{Proof of Claim~\ref{tech_clm:hazard}}\label{app:hazard}
We know that $X \sim \mathcal{N}(\mu_X, \sigma_x^2)$. Now, 
\begin{align*}
    \E[X~|~X \geq a] &= \int_{a}^{\infty} u \cdot \cP[X = u~|~X \geq a]du\\
    &= \int_{a}^{\infty} u \cdot \frac{\cP[X = u, X \geq a]}{\cP[X \geq a]}du\\
    &= \frac{1}{\cP[X \geq a]} \int_{a}^{\infty} u \cdot \cP[X = u] du \\
    &= \frac{1}{\Phi^c\left(\frac{a-\mu_X}{\sigma_X} \right)} \int_{a}^{\infty} u \cdot \frac{1}{\sigma_X\sqrt{2\pi}}\cdot \exp\left(- \frac{(u-\mu_X)^2}{2\sigma_X^2}\right) du. 
\end{align*}
Now we do a variable substitution. Define $w = \frac{u-\mu_X}{\sigma_X}$. Then, we can rewrite as follows: 
\begin{align*}
    \E[X~|~X \geq a] &= \frac{1}{\Phi^c\left(\frac{a-\mu_X}{\sigma_X} \right)} \int_{(a-\mu_X)/\sigma_X}^{\infty} (\mu_X + \sigma_X\cdot w)\frac{1}{\sqrt{2\pi}}\exp\left(-\frac{w^2}{2}\right)dw \\
    &= \frac{1}{\Phi^c\left(\frac{a-\mu_X}{\sigma_X} \right)} \left[ \mu_X  \int_{(a-\mu_X)/\sigma_X}^{\infty}\frac{1}{\sqrt{2\pi}}\exp\left(-\frac{w^2}{2}\right)dw + \sigma_X \int_{(a-\mu_X)/\sigma_X}^{\infty}\frac{w}{\sqrt{2\pi}}\exp\left(-\frac{w^2}{2}\right)dw  \right]\\
    &= \frac{1}{\Phi^c\left(\frac{a-\mu_X}{\sigma_X} \right)} \left[ \mu_X \cdot \Phi^c\left(\frac{a-\mu_X}{\sigma_X} \right) + \sigma_X\int_{(a-\mu_X)/\sigma_X}^{\infty}w \phi(w)dw  \right] \\
    &= \mu_X + \frac{\sigma_X}{\Phi^c\left(\frac{a-\mu_X}{\sigma_X} \right)} \int_{(a-\mu_X)/\sigma_X}^{\infty}w \phi(w)dw.
\end{align*}
Note that the proof is completed if we can just show that the integrand above equals $\phi\left(\frac{a-\mu_X}{\sigma_X} \right)$.
\begin{align*}
    \int_{(a-\mu_X)/\sigma_X}^{\infty}w \phi(w)dw &= \int_{(a-\mu_X)/\sigma_X}^{\infty}-\phi'(w)dw \quad \text{(since $\phi'(w) = -w\phi(w)$)}\\
    &= \int_{\infty}^{(a-\mu_X)/\sigma_X} \phi'(w) dw \\
    &= \phi\left(\frac{a-\mu_X}{\sigma_X}\right) - \phi(0) \\
    &= \phi\left(\frac{a-\mu_X}{\sigma_X}\right).
\end{align*}

\subsection{Monotonicity of the standard normal hazard rate function $H(x)$}\label{app:monotone_hazard}
Here, we show that hazard rate function of the standard normal random variable, given by $H(x) = \frac{\phi(x)}{\Phi^c(x)}$ is monotonically increasing in $x$. Observe that $H(x) \geq 0$ trivially for all $x$. First, we will show that $H(x) \geq x$ for all $x$. It suffices to show for $x > 0$ (since it holds trivially for $x \leq 0$). 
\begin{align*}
    1 - \Phi(x) &= \int_{u = x}^{\infty} \frac{1}{\sqrt{2\pi}}\exp\left(-\frac{u^2}{2} \right)du \\
    &\leq \int_{u = x}^{\infty} \frac{1}{\sqrt{2\pi}}\frac{u}{x}\exp\left(-\frac{u^2}{2} \right)du \quad \text{(since $\frac{u}{x} \geq 1$)} \\
    &= \frac{1}{x\sqrt{2\pi}} \exp\left(-\frac{x^2}{2}\right) \\
    &= \frac{\phi(x)}{x}.
\end{align*}
Therefore, $H(x) = \frac{\phi(x)}{1-\Phi(x)} \geq x$. Now, 
\begin{align*}
    H'(x) &= \frac{ \Phi^c(x) \phi'(x) + (\phi(x))^2 }{\left( \Phi^c(x) \right)^2 } \\
    &= \frac{-x \phi(x) \Phi^c(x) + (\phi(x))^2}{(\Phi^c(x))^2} \quad \text{(substituting $\phi'(x) = -x\phi(x)$)} \\
    &= -x H(x) + (H(x))^2 \\
    &= H(x)(H(x) - x) \geq 0.
\end{align*}
The last equality follows from the fact that $H(x)\geq 0$ and $H(x) \geq x$. This concludes the proof that $H(x)$ is increasing. 


\section{Additional Figures}\label{app:extra_fig}

\begin{figure}[!ht]
\centering 
\includegraphics[width=0.5\textwidth]{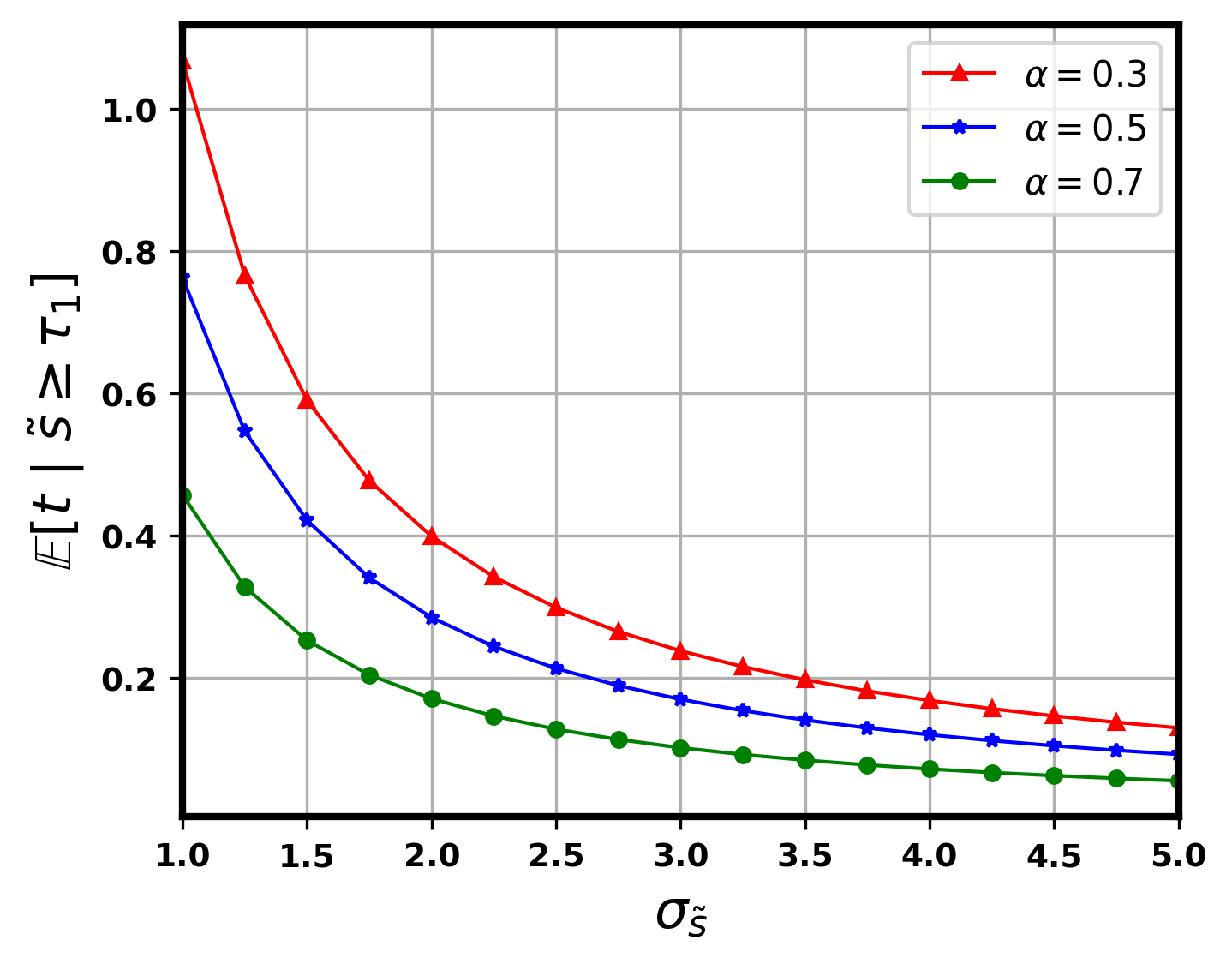}
\caption{We plot the expected quality of a selected applicant (when the agent uses a selection threshold of $\tau_1$) as a function of the variance $\sigma_{\tilde s}$ of the noisy signal $\tilde s$. As $\sigma_{\tilde s}$ increases, the expected quality decays monotonically for all $\alpha$.}
\label{fig:quality_tlds}
\end{figure}

\begin{figure}[!h]
    \centering
    \subfloat[]{\includegraphics[width=0.44\textwidth]{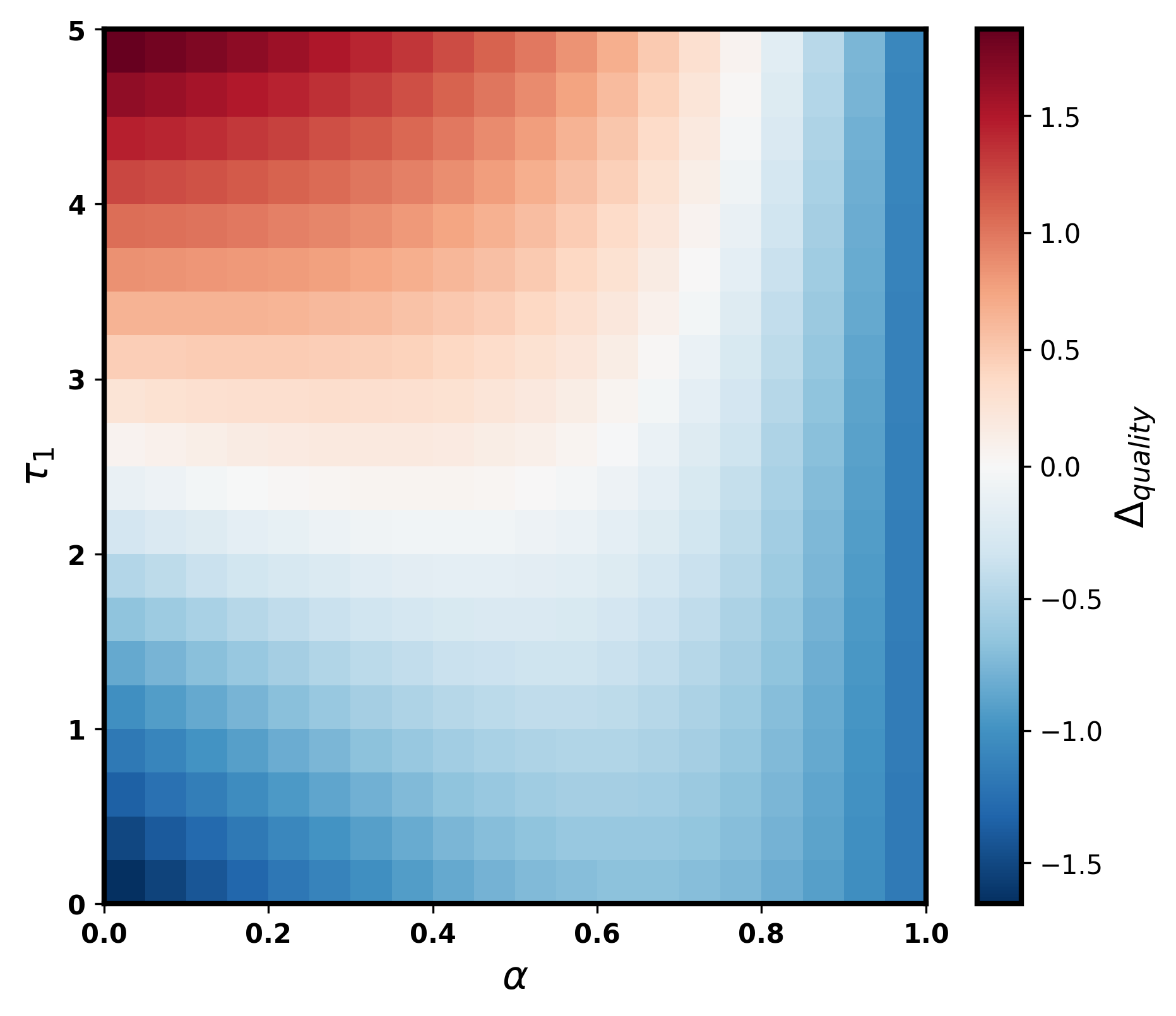}}
    \subfloat[]{\includegraphics[width=0.44\textwidth]{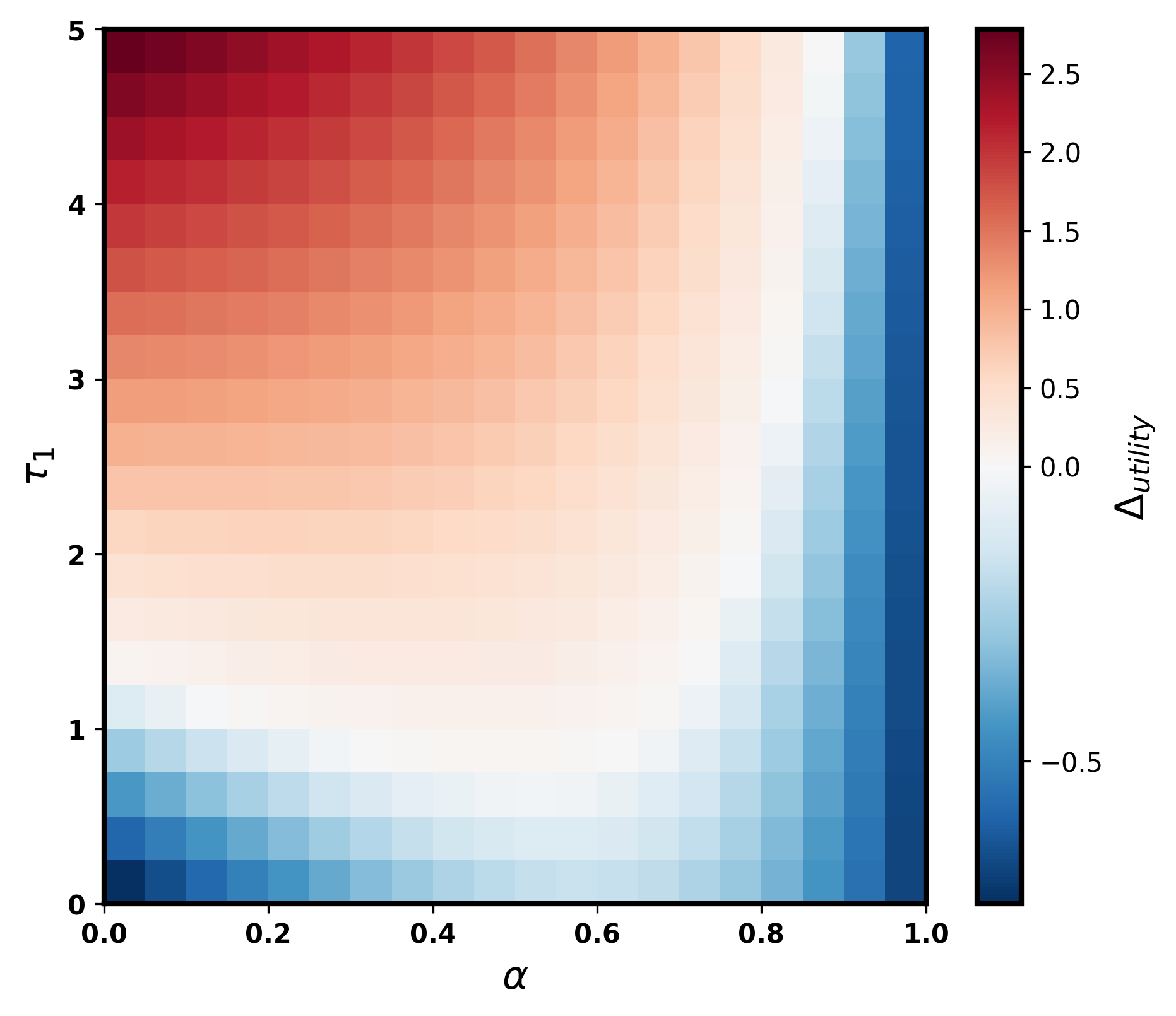}}  
    \caption{We plot a heatmap of values of $\Delta_{quality}$ (left) and $\Delta_{utility}$ (right) over different combinations of the principal's weight $\alpha$ over the interval $[0,1]$ and the agent's threshold $\tau_1$ over the interval $[0, 5]$. Each heatmap consists of $400$ grid points with each grid of the size $0.05 \times 0.25$. Values are evaluated at the centre of each grid. The red regions indicate locations where $\Delta > 0$ and blue regions indicate locations where $\Delta < 0$. The white regions indicate the transition boundary where $\Delta \approx 0$. The heatmaps validate our discussion about where delegation is beneficial for the principal (Section~\ref{sec:comparison}). }
    \label{fig:heatmaps}
\end{figure}

\end{document}